\pgfplotsset{compat=1.17}
\let\originalleft\left
\let\originalright\right
\renewcommand{\left}{\mathopen{}\mathclose\bgroup\originalleft}
\renewcommand{\right}{\aftergroup\egroup\originalright}
\DeclareMathOperator*{\E}{\mathbb{E}}
\def\>{\rangle}
\def\<{\langle}
\newcommand{\abs}[1]{\left| {#1} \right|} 
\newcommand{\ketbra}[2]{\ensuremath{\left|#1\right\rangle\!\left\langle#2\right|}}
\newcommand{\tr}[1]{\mathrm{Tr}\left( #1 \right)}
\newcommand{\norm}[1]{\left\|#1\right\|}
\DeclareMathOperator*{\Var}{Var}
\DeclareMathOperator*{\Tr}{Tr}
\newcommand{\rhohat}{\hat{\rho}}
\newcommand{\mhat}{\hat{M}_1}
\newcommand{\phihat}{\hat{\phi}}
\definecolor{ppblue}{RGB}{46,117,182}
\definecolor{ppred}{RGB}{197, 90, 17}
\newcommand{\hphide}[1]{}
\newcommand{\inftynorm}[1]{\left\lVert #1 \right\rVert_{\infty}}
\newcommand{\pnorm}[2]{\left\lVert #1 \right\rVert_{#2}}
\newcommand{\bigo}[1]{\mathcal{O}\left(#1\right)}
\newcommand{\sym}{\Pi_{\operatorname{sym}}}
\newcommand{\symm}{\operatorname{S}}
\theoremstyle{plain}
\newtheorem{thm}{Theorem}
\newtheorem{theorem}[thm]{Theorem}
\newtheorem{lemma}[thm]{Lemma}
\newtheorem{prop}[thm]{Proposition}
\newtheorem{cor}[thm]{Corollary}
\newtheorem{fact}[thm]{Fact}
\theoremstyle{definition} 
\newtheorem{defn}[thm]{Definition}
\newcommand{\ve}{\mathbf{e}}
\newcommand{\vlambda}{\underline{\lambda}}
\newcommand{\vv}{\mathbf{v}}
\newcommand{\va}{\mathbf{a}}
\newcommand{\success}{\textup{success}}
\newcommand{\meas}{\textup{meas}}
\DeclareMathOperator*{\MP}{\mathrm{MP}}
\DeclareMathOperator*{\clone}{\mathrm{Cl}} 
\newcommand{\CC}{\mathbb{C}}
\newcommand{\CCD}{\CC^{d}}
\newcommand{\fail}{\bot}
\DeclarePairedDelimiter\parens{\lparen}{\rparen}
\DeclarePairedDelimiter\bracks{\lbrack}{\rbrack}
\newcommand{\tracenorm}[1]{\norm{#1}_{\mathrm{tr}}}
\newcommand{\symdim}[2]{{#1}_{#2}} 
\newcommand{\dens}[1]{\mathrm{Dens}\parens*{#1}} 
\newcommand{\lin}[1]{\mathcal{L}\parens*{#1}}
\newcommand{\eye}{\mathbb{I}}
\newcommand{\link}[1]{\lin{(\CCD)^{\otimes #1}}}
\newcommand{\densk}[1]{\dens{(\CCD)^{\otimes #1}}}
\newcommand{\truedist}{\mathscr{D}}
\newcommand{\fakedist}{\mathscr{D}'}
\newcommand{\samp}{\sim}
\newcommand{\initrho}{\rho}
\newcommand{\measrho}{\rho'}
\newcommand{\initeta}{\eta}
\newcommand{\measeta}{\eta'}
\newcommand{\romannum}[1]{%
\ifnum#1<1
\ifnum#1=0
o%
\else
-\romannumeral -#1%
\fi
\else
\romannumeral #1%
\fi}
\DeclareRobustCommand{\Romannum}[1]{\MakeUppercase{\romannum{#1}}}
\begin{document}
\title{Principal eigenstate classical shadows}
\author{Daniel Grier}
\affiliation{Department of Mathematics and Department of Computer Science and Engineering, University of California, San Diego}
\author{Hakop Pashayan}
\affiliation{Dahlem Center for Complex Quantum Systems, Freie Universit\"{a}t Berlin, Germany}
\author{Luke Schaeffer}
\affiliation{Joint Center for Quantum Information and Computer Science, University of Maryland, College Park}
\begin{abstract}
Given many copies of an unknown quantum state $\rho$, we consider the task of learning a classical description of its principal eigenstate. Namely, assuming that $\rho$ has an eigenstate $\ket{\phi}$ with (unknown) eigenvalue $\lambda > 1/2$, the goal is to learn a (classical shadows style) classical description of $\ket{\phi}$ which can later be used to estimate expectation values $\bra{\phi}O\ket{\phi}$ for any $O$ in some class of observables. 
We consider the sample-complexity setting in which generating a copy of $\rho$ is expensive, but joint measurements on many copies of the state are possible. We present a protocol for this task scaling with the principal eigenvalue $\lambda$ and show that it is optimal within a space of natural approaches, e.g., applying quantum state purification followed by a single-copy classical shadows scheme. Furthermore, when $\lambda$ is sufficiently close to $1$, the performance of our algorithm is optimal---matching the sample complexity for pure state classical shadows. 
\end{abstract}

\maketitle

\section{Introduction}

A key principle of algorithm design is to never do more work than is needed for the task at hand.
Consider the problem of identifying some unknown quantum state $\rho$ by measuring several copies of it. It has long been known that obtaining a complete description of such a state (say, by producing an estimate $\rhohat$ close in trace distance) requires a number of copies which grows linearly (or more) with the dimension of the Hilbert space. Such a strong requirement on the number of copies makes it nearly impossible to experimentally realize such tomographic protocols on all but the smallest quantum systems. 

Fortunately, a complete description of $\rho$ is unnecessary for many applications, allowing for dramatically simpler estimation protocols. Suppose, for example, you wish to estimate the fidelity of a state $\rho$ produced by an experimental quantum device with some target pure state, say, to benchmark your device. In this case, the number of copies you must prepare scales only with your desired precision, not with the dimension of the ambient space, making the entire procedure much more practical.

The fidelity estimation protocol is a special case of a recent and enormously popular framework for predicting properties of unknown quantum states called \emph{classical shadows} introduced by \cite{Huang2020}. In this setting, many copies of the unknown state $\rho$ are measured and condensed into a classical bit string. This classical description can later be used to estimate $\Tr(O \rho)$ for any $O$ in some class of Hermitian observables with very high probability. The success of the classical shadows framework \citep{cerezo2021variational, bharti2022noisy} motivates a deeper consideration into how it can be further improved to model practical quantum learning tasks as well as how it can be made more sample efficient.

In many practical scenarios it is not properties of the state $\rho$ that one wants to learn, but rather those of its top eigenstate. A natural setting where principal eigenstates become the focal object is when one only has access to noisy copies of a target state $\ket{\phi}$. In the case of global depolarizing noise acting on a $d$-dimensional Hilbert space, the noisy state is $\rho = (1-\eta) \proj{\phi} + \eta (I-\proj{\phi})/(d-1)$. For $\eta<1/2$, the principal eigenstate remains $\ket\phi$. Hence, in the case of global depolarizing noise, the target state can be recovered from the principal eigenstate. 
In fact, this remains true for other practical noise models as well \citep{koczor2021dominant, huggins2021virtual}. Furthermore, settings where one only has access to noisy copies are natural. For example, consider a scenario where copies of a noisy quantum state $\rho$ are prepared by a quantum sensor operating in non-ideal environmental conditions and fed into a powerful quantum processor to extract data. Indeed Ref.~\cite{Yamamoto2022EMQMetrology} considers such a scenario in the setting of unknown fluctuating noise. Other proposed applications in related work include noise suppression for noisy intermediate-scale quantum computation~\cite{koczor2021exponential, huggins2021virtual, czarnik2021qubit, Seif2023ErrorMitigation, zhou2022hybrid}.

In this paper, we ask what happens when you combine classical shadows with principal eigenstate estimation. Namely, what is the complexity of estimating observable expectation values with respect to the dominant eigenvector of $\rho$ rather than $\rho$ itself? To this end, we introduce the following ``principal eigenstate classical shadows'' task:

\begin{mdframed}[align=center, userdefinedwidth=.98\linewidth, frametitle={Principal eigenstate classical shadows}, frametitlerule=true, frametitlebackgroundcolor=gray!15!, roundcorner=5pt, skipbelow=-500pt, skipabove=10pt]
\begin{tabular}{l | c l c l }
\multirow{2}{*}{Learning} & \hspace{2pt} &\textit{Input:} && Copies of $\rho = (1-\eta) \phi + \eta \sigma$ with principal \\ &&&& eigenstate $\phi = \proj{\phi}$, $\eta < 1/2$, and $\Tr(\phi\sigma) = 0$ \hspace{-5pt} \\
 && \textit{Output:} && Classical description $\hat\phi$ \\  [.5em] \hline 
 \\ [-1em]
\multirow{2}{*}{Estimation} \hspace{5pt} & &\textit{Input:} && Hermitian observable $O$ with $\lVert O \rVert_\infty \leq 1$ and classical description $\hat\phi$ \\
&& \textit{Output:} && Compute $E$ such that $| \bra\phi O \ket\phi - E| \le \epsilon$
\end{tabular}
\end{mdframed}

In this work, we focus on the goal of solving the principal eigenstate classical shadows problem with the fewest copies of the input state $\rho$. That is, we want to determine the \emph{sample complexity} of this task since producing a copy of $\rho$ is usually considered to be a resource-intensive task.

One of the key parameters of this task is $\eta$---the \emph{principal deviation}---which determines how far $\rho$ is from its principal eigenstate. Notice that it is this deviation that prevents traditional classical shadow approaches from achieving a high degree of accuracy on this task. That is, classical shadows protocols for the state $\rho$ can only be accurate up to additive error $\eta$ on the state $\ket\phi$ since $|\mathrm{Tr}(O\rho) - \bra\phi O \ket\phi| = \eta$ for $O = \proj\phi$.

Nevertheless, even if you were satisfied with an estimate to accuracy $\eta$---a setting in which you could theoretically still use traditional classical shadows approaches---there is still reason to suspect that more sample-efficient algorithms exist. Intuitively, traditional shadow estimation algorithms do not take advantage of the purity of the underlying state $\ket{\phi}$ that we wish to measure. Indeed, in the noiseless setting (i.e., $\eta = 0$), any shadow algorithm which does not take advantage of the purity of the underlying state is provably suboptimal \citep{gps2022sampleoptimal}. Furthermore, the sample complexity may improve by using joint measurements---i.e., measurements on multiple copies of $\rho$ simultaneously (i.e., $\rho^{\otimes n}$ for some $n$) using a single entangling positive operator-valued measure (POVM).

We will show that this intuition is correct. That is, we give algorithms for the principal eigenstate classical shadows problem that leverage these insights to be significantly more sample-efficient than classical shadow protocols for generic states.
One might quite naturally wonder if our algorithm is simply the combination of other powerful subroutines for processing quantum states, of which many are possibly relevant \citep{cirac1999optimal, keyl2001rate, HonghaoFu2016thesis, childs2023streaming, ODonnel2016efficient}. For example, it is true that we could design an algorithm for principal eigenstate classical shadows by first de-noising (sometimes referred to as ``purifying'') the state $\rho$ into the state $\ket\phi$ \citep{cirac1999optimal, keyl2001rate, lloyd2014PCA, HonghaoFu2016thesis, childs2023streaming}, and then applying a classical shadows algorithm \citep{Huang2020}. We show that the sample complexity of this approach is \emph{worse} than the robust algorithm of this paper which solves the principal eigenstate shadows problem directly. In other words, even though both of the subroutines mentioned above are optimal for their respective subtasks, they are nevertheless doing more work than needed when combined to perform observable estimation on the principal eigenstate.

\subsection{Main result}    
Our classical shadows protocol is actually a suite of algorithms that depend on the principal deviation $\eta$ of the underlying state $\rho = (1-\eta) \phi + \eta \sigma$. We do not require a priori knowledge of $\eta$. 
We will see that there are three $\eta$ regimes and as $\eta$ decreases from $1/2$ to zero, sample complexity also decreases reaching a minimum at the optimal sample complexity for learning pure states~\citep{gps2022sampleoptimal}. Surprisingly, this occurs before $\eta$ becomes zero. These sample complexities are given in the following theorem:

\begin{theorem} 
\label{thm:sample_complexity_main}
There exists a protocol (comprised of separate learning and estimation algorithms) for solving the principal eigenstate classical shadows task with high probability that has three regimes of sample complexity determined by the deviation $\eta$ shown below
\begin{center}
\begin{tikzpicture}[font = \small, line/.style = {draw,thick, shorten >=-2pt, shorten <=-2pt}]
\coordinate (eta1) at (4,0);
\coordinate (eta2) at (8,0);
\coordinate (eta3) at (12,0);
\coordinate (end) at (13,0);

\draw[->] (0,0) -- (end);
\draw (0, 0.15) -- (0,-.15);
\draw ($ (eta1) + (0,0.15) $) -- ($ (eta1) - (0,0.15) $) ;
\draw ($ (eta2) + (0,0.15) $) -- ($ (eta2) - (0,0.15) $) ;
\draw ($ (eta3) + (0,0.15) $) -- ($ (eta3) - (0,0.15) $) ;

\node[yshift = 8] at ($ (0,0)!0.5!(eta1) $) {$\mathcal O(s^*)$};
\node[yshift = 8] at ($ (eta1)!0.5!(eta2) $) {$\mathcal O(\frac{B\eta + 1}{\epsilon^2})$};
\node[yshift = 8] at ($ (eta2)!0.5!(eta3) $) {$\mathcal O ( \frac{B\eta}{\epsilon^2} + \frac{\eta}{\epsilon^{5/2}} )$};

\node[yshift = -12] at (0,0) {$0$};
\node[yshift = -12] at (eta1) {$1/s^*$};
\node[yshift = -12] at (eta2) {$\sqrt{\epsilon}$};
\node[yshift = -12] at (eta3) {$1/2$};
\node[xshift = 5] at (end) {$\eta$};

\end{tikzpicture}
\end{center}
where $B \geq \Tr(O^2)$ is the squared-Frobenius norm of observable $O$ and $s^* := \frac{\sqrt{B}}{\epsilon} + \frac{1}{\epsilon^2}$ is the optimal sample complexity for classical shadows on pure states.\footnote{Technically, solely in the $\eta \in (\sqrt{\epsilon}, 1/2)$ regime, we invoke a purification procedure of \cite{HonghaoFu2016thesis, childs2023streaming} that only works on depolarized states, i.e., $\rho = (1-\eta) \phi + \eta \sigma$ for $\sigma = (I - \phi)/(d-1)$. However, based on ongoing/unpublished work, we claim that this purification procedure can be generalized to allow for arbitrary $\sigma$.} Furthermore, in all regimes, the protocol incurs a $\bigo{\log M}$ factor in sample complexity to approximate $M$ observable expectation values (all to $\epsilon$ accuracy) with high probability.
\end{theorem}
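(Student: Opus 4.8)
The plan is to build the protocol in layers on top of the optimal pure-state classical shadows protocol of \cite{gps2022sampleoptimal}, reducing the noisy task to the pure-state task by (partial) quantum state purification; the three regimes will emerge from how much purification is worthwhile. As the base ingredient I would record what \cite{gps2022sampleoptimal} gives --- a learning/estimation pair using $N = O(s^*)$ copies of a pure state $\ket\phi$ that, for any fixed Hermitian $O$ with $\Tr(O^2)\le B$, outputs $E$ with $|\bra\phi O\ket\phi - E|\le\epsilon$ with probability $\ge 2/3$ --- together with a robustness statement: running the same measurement on a different state $\xi$ on the $N$ registers moves its output distribution by at most $\tracenorm{\xi-\proj{\phi}^{\otimes N}}$ in total variation, and, the quantitatively useful version, if the measurement is preceded by weak Schur sampling then conditioning on the symmetric isotypic component lets the analysis see only the (nearly pure) post-measurement state.

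\textbf{Regime 1} ($\eta<1/s^*$). Here I would run the pure-state protocol directly on $\rho^{\otimes N}$ with $N=\Theta(s^*)$, prepending weak Schur sampling onto the symmetric subspace. The symmetric isotypic component is reached with probability $\Tr(\sym\rho^{\otimes N})\ge (1-\eta)^N=\Omega(1)$ since $N\eta=O(1)$, so a constant number of retries suffices; and conditioned on that event the post-measurement state $\sym\rho^{\otimes N}\sym/\Tr(\sym\rho^{\otimes N})$ has fidelity with $\proj{\phi}^{\otimes N}$ equal to $(1-\eta)^N/h_N(\mathrm{spec}\,\rho)$, which a short estimate of the complete homogeneous symmetric polynomial ($h_N(\mathrm{spec}\,\rho)\le (1-\eta)^N(1-\eta)/(1-2\eta)$) bounds below by $1-\eta/(1-\eta)=1-O(\eta)$. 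Since $\eta<1/s^*\le\epsilon^2$, this is trace distance $O(\sqrt\eta)=O(\epsilon)$ from $\proj{\phi}^{\otimes N}$, so the pure-state estimate is within $O(\epsilon)$ of $\bra\phi O\ket\phi$ with probability bounded away from $1/2$; total copy count $O(s^*)$.

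\textbf{Regimes 2 and 3} (bias reduction, then purification). Once $\eta\ge 1/s^*$ the symmetric-subspace event is exponentially unlikely and a direct run is biased by $\Theta(\eta)$, too large as soon as $\eta>\epsilon$. In \textbf{Regime 2} ($1/s^*\le\eta<\sqrt\epsilon$) I would use an estimator that tolerates small impurity --- a purity-aware variant of the pure-state measurement, or an estimator built from the operators $\rho^k$ so that the bias is $O(\eta^k)\le\epsilon$ already at $k=2$ --- and show that its per-round variance is $O(B\eta+1)$, with the $\eta$ factor traced to the total impurity $1-\Tr(\rho^2)=\eta(2-\eta(1+\Tr(\sigma^2)))=\Theta(\eta)$, which controls how far the measurement statistics sit from the pure case. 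This gives $O((B\eta+1)/\epsilon^2)$ copies. In \textbf{Regime 3} ($\sqrt\epsilon\le\eta<1/2$) I would first apply a purification subroutine \cite{cirac1999optimal, keyl2001rate, HonghaoFu2016thesis, childs2023streaming} to turn copies of $\rho$ (deviation $\eta$) into copies of a state with deviation exactly $\eta'=\sqrt\epsilon$ --- at a cost of $O(\eta/\eta')$ input copies per output copy when the gap $1-2\eta$ is bounded below --- and run the Regime 2 protocol on the $N_2=O((B\sqrt\epsilon+1)/\epsilon^2)$ outputs, for a total $O(\eta/\sqrt\epsilon)\cdot N_2=O(B\eta/\epsilon^2+\eta/\epsilon^{5/2})$, which joins Regime 2 continuously at $\eta=\sqrt\epsilon$. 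Because $\eta$ is unknown, I would prepend a coarse, constant-factor (geometric) estimate of $\eta$ from $O(1)$ SWAP tests / Schur sampling, which is enough to select the regime and the purification target.

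Each of the above estimates is correct with only constant probability, so to handle $M$ observables simultaneously to accuracy $\epsilon$ with high probability I would run $O(\log M)$ independent copies of the learning phase and report, per observable, the median of the per-run estimates, then union bound --- exactly the stated $O(\log M)$ overhead. \textbf{The main obstacle} I anticipate is twofold. First, making the robustness of the pure-state protocol quantitatively sharp enough that a modest post-purification fidelity suffices: a lossy bound would force over-purification and destroy the $\eta$-scaling in Regimes 2--3. Second, obtaining a purification guarantee with precisely the $O(\eta/\eta')$-per-copy cost for an arbitrary noise component $\sigma$ rather than only the depolarizing one --- the excerpt itself flags that only the depolarized version is currently available for Regime 3 --- so here I would either adapt the Schur--Weyl analysis of \cite{HonghaoFu2016thesis, childs2023streaming}, or, failing that, state Regime 3 with the depolarizing caveat as the excerpt does.
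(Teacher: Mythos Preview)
Your overall architecture is right (measure directly for tiny $\eta$; purify first for large $\eta$; median-of-means for the $\log M$ factor), and your Regime~1 black-box trace-distance argument would in fact go through, though it is not the paper's route. The genuine gap is Regime~2, which is where the paper's main technical work lives and on which both your Regimes~2 and~3 rest.

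The paper's engine is a direct analysis of the standard symmetric joint measurement applied to $n$ copies of the \emph{mixed} state $\rho$ (their Theorem~\ref{thm:robustness_of_joint_measurement}): conditioned on success (probability $\ge(1-\eta)^{n-1}$), the natural estimator has bias $O(\eta/n)$ and variance $\Tr(O^2)/n^2+O(1/n)$. All three regimes then come from jointly optimizing a purification factor $k$, a block size $n$ for the joint measurement, and an averaging count $b$. In Regime~2 one sets $k=1$, $n=\Theta(1/\eta)$ (so success probability is $\Omega(1)$ and bias is $O(\eta^2)\le\epsilon$ since $\eta\le\sqrt\epsilon$), giving per-block variance $O(B\eta^2+\eta)$; averaging $b=O((B\eta^2+\eta)/\epsilon^2)$ blocks yields $nb=O((B\eta+1)/\epsilon^2)$. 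The $B\eta$ term therefore arises from the \emph{quadratic} suppression $B/n^2$ of the joint measurement combined with the cap $n\lesssim 1/\eta$---not from $1-\Tr(\rho^2)$ entering a single-copy variance as you propose. Neither of your Regime~2 candidates supplies this: a $\rho^k$-based estimator is precisely the virtual-distillation approach of \cite{zhou2022hybrid}, which the paper compares against and finds inferior in this range, and ``a purity-aware variant of the pure-state measurement'' is not yet a mechanism. Without this mixed-state bias/variance theorem---which is decidedly not a black-box consequence of the pure case---your Regime~2, and hence your Regime~3, lacks its foundation. (A smaller issue: a constant-factor estimate of $\eta$ needs $\Theta(1/\eta)$ two-copy tests, not $O(1)$; the paper counts failures of the $n=2$ joint measurement and cuts off after $O(s^*)$ trials in Regime~1, where no further knowledge of $\eta$ is needed.)
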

    
These bounds may at first seem somewhat arbitrary, so let's spend a few moments to put them in context. First, it is worth noting that the optimal sample complexity in the zero-deviation setting is $\tilde\Theta ( \epsilon^{-1} \sqrt B + \epsilon^{-2})$ as determined by \cite{gps2022sampleoptimal}. In other words, in the first regime where the deviation is quite small (i.e., $\eta \leq 1/s^*$), the sample complexity is identical to that of the optimal zero-deviation measurement protocol. Moreover, our measurement protocol in this regime is actually the same measurement procedure as used for zero deviation. However, this is not to say that the analysis of this protocol is trivial or in any sense a black box reduction to the pure case. In fact, the bulk of the technical work in this paper is spent addressing this setting.
    
The sample complexities in the remaining two settings are shown in some sense by a reduction to the first setting. In the second regime, we measure independent blocks of the unknown state, and post-process these measurement results with a simple averaging procedure. In the third regime, when the noise is the most extreme, we must first pre-process the input by an explicit purification protocol that uses multiple copies of $\rho$ to distill a new quantum state with smaller deviation. We then proceed by invoking the procedure above. For this procedure, we determine the optimal choice of the number of copies to be jointly measured, averaged, and purified. We also present an estimation protocol for $\eta$ that can be used to achieve an overall sample complexity which, up to big-$\mathcal{O}$ notation, matches that of our procedure when using an optimal choice of parameters and a \emph{known} $\eta$.
    
Finally, we note that we can always employ the standard median-of-means trick~\citep{lugosi2019mean,lerasle2019lecture,Huang2020} to amplify the success probability of obtaining an accurate estimate. In this way, to estimate $M$ distinct observables $\{O_i \mid \Tr(O_i^2) \le B \}_{i=1}^M$, we incur a factor of $\log(M)$ in the sample complexity, as is typical with shadow estimation protocols \citep{Huang2020}.
    
\subsection{Technical challenges}

One of the central tools used in tomographic protocols for pure states \citep{massar1995optimal, PureStateTomography, gps2022sampleoptimal} is a continuous POVM proportional to $\{ \proj{\psi}^{\otimes n} \}_{\psi}$ that we call the \emph{standard symmetric joint measurement} (see \Cref{defn:abbreviated_measurement}). 
Intuitively, this POVM is more likely to output a measurement outcome $\psi$ the closer $\ket{\psi}$ is to the measured unknown state $\rho$. Not only is this measurement optimal for pure state tomography \citep{massar1995optimal}, but it is also surprisingly easy to analyze in many cases due to its tight connection with representation theory and the symmetric subspace \citep{harrow2013church}. Indeed, it is this connection that allows for a simple analysis of the original classical protocol \citep{Huang2020, mele2023introduction}.

The main conceptual bottleneck for our analysis is that the unknown state $\rho$ may have small overlap with the symmetric subspace, rendering the standard symmetric joint measurement ineffective. Of course, there are more powerful representation-theoretic tools for learning mixed states, but these tend to incur a factor of the Hilbert space dimension, which is prohibitively large for many applications of classical shadows \citep{Haah2016sample, ODonnel2016efficient}. That said, when the deviation $\eta$ is small, $\rho$ will still be close enough to the symmetric subspace that the standard symmetric joint measurement will succeed. The major technical contribution of this paper is showing that this measurement also serves as a sample-efficient predictor for the principal eigenstate when used in the context of classical shadows.

\begin{restatable}{theorem}{robustnessjoint}
\label{thm:robustness_of_joint_measurement}
    Let $\rho=(1-\eta)\phi+\eta \sigma$ be an instance of principal eigenstate classical shadows. The standard joint measurement on $n$ copies of state $\rho$ succeeds with probability at least $(1-\eta)^{n-1}$. Conditioned on success of the measurement, there is an estimator $\hat\phi$ such that
    \begin{align*}
        \E[\Tr(O\hat\phi)] &= \Tr(O\phi) + \frac{\eta \Tr(O(\sigma - \phi))}{n(1 - \eta)} + \bigo{\eta^2/n} \\
        \Var[\Tr(O\hat\phi)] &= \frac{\Tr(O^2)}{n^2} + \bigo{\eta^2 + 1/n}
    \end{align*}
    for observables $O$ with $\inftynorm{O} \le 1$.
\end{restatable}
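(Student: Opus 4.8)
The plan is to exploit the structure of the standard symmetric joint measurement together with Schur–Weyl / symmetric-subspace representation theory. First I would set up the measurement formally: the POVM has elements $E_\psi\,d\psi = \binom{n+d-1}{d-1}\proj{\psi}^{\otimes n}\,d\psi$ acting on the symmetric subspace, with the "failure" outcome $\bot$ corresponding to the complementary projector $\iden - \sym$ (so that a successful outcome is always some $\ket\psi$). To lower-bound the success probability, I would write $\Pr[\success] = \Tr(\sym\,\rho^{\otimes n})$ and expand $\rho^{\otimes n} = ((1-\eta)\phi + \eta\sigma)^{\otimes n}$; the term with all $n$ tensor factors equal to $\phi$ contributes $(1-\eta)^n\Tr(\sym\,\phi^{\otimes n}) = (1-\eta)^n$ since $\phi^{\otimes n}$ is already symmetric, and since all cross terms are PSD, $\Pr[\success]\ge(1-\eta)^n$. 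Getting the stated $(1-\eta)^{n-1}$ rather than $(1-\eta)^n$ presumably requires a slightly more careful accounting — keeping the mixed terms with exactly one $\sigma$-factor and using $\Tr(\phi\sigma)=0$ to see which contribute — so I would track those explicitly.

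Next, for the estimator, I would define $\hat\phi$ from the measurement outcome $\ket\psi$ via the usual shadow-style linear inversion: the post-measurement distribution over $\psi$ conditioned on success has density proportional to $\bra\psi^{\otimes n}\rho^{\otimes n}\ket\psi = ((1-\eta)\langle\psi|\phi|\psi\rangle + \eta\langle\psi|\sigma|\psi\rangle)^n$, so I would compute $\E[\proj\psi \otimes \cdots]$ against the symmetric-subspace integral $\int \proj\psi^{\otimes k}\,d\psi \propto \sym^{(k)}$ and then apply the appropriate inverse depolarizing/twirl map to build an unbiased-up-to-bias estimator $\hat\phi$ of $\phi$. The core computation is the first and second moments of $\Tr(O\proj\psi)$ under this conditional distribution. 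I would expand the density $(1-\eta + \eta\langle\psi|(\sigma-\phi)|\psi\rangle)^n$ to first and second order in $\eta$ — the $\eta^0$ term is exactly the pure-state ($\eta=0$) analysis, for which $\E[\Tr(O\hat\phi)] = \Tr(O\phi)$ and $\Var = \Tr(O^2)/n^2 + O(1/n)$ are known from the pure-state classical shadows literature cited earlier — and then carry the $O(\eta)$ correction, which produces the $\frac{\eta\Tr(O(\sigma-\phi))}{n(1-\eta)}$ bias term after normalizing by the $\Pr[\success]$ factor, with everything beyond controlled as $O(\eta^2/n)$ in the mean and $O(\eta^2 + 1/n)$ in the variance.

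The main obstacle I anticipate is the perturbative moment computation: one must integrate polynomials of degree up to $2n$ in $\proj\psi$ against the Haar measure on pure states (equivalently, manipulate $\sym^{(n+1)}$, $\sym^{(n+2)}$ and their partial traces), expand in $\eta$, and show that the $O(\eta^2)$ and higher terms genuinely contribute only $O(\eta^2/n)$ to the mean and $O(\eta^2)$ to the variance \emph{uniformly in $n$} — the danger being that a factor of $n$ from the $n$-th power in the density cancels the smallness of $\eta$. Handling this cleanly will likely require bounding the "bad" contributions by grouping terms according to how many tensor slots are hit by $\sigma - \phi$ and using $\|\sigma-\phi\|_\infty \le 1$ together with $\inftynorm O\le 1$, plus the orthogonality $\Tr(\phi\sigma)=0$ to kill the leading cross term in several places. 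A secondary nuisance is bookkeeping the conditioning on success — the estimator's moments are ratios of integrals, and one must expand both numerator and denominator in $\eta$ consistently — but this is routine once the unconditioned moments are in hand. I would structure the argument so that the $\eta=0$ case is quoted as a black box and only the perturbation is done in detail.
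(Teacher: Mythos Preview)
Your success-probability argument is correct and matches the paper's: keeping the terms of $\rho^{\otimes n}$ with zero or one $\sigma$-factor and using $\Tr(\phi\sigma)=0$ gives exactly $(1-\eta)^{n-1}$.

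The gap is in the moment computation. You propose to expand the conditional density $\langle\psi|\rho|\psi\rangle^{n}$ perturbatively in $\eta$ around the pure case and bound the remainder. You correctly flag the danger that the $k$-th order term picks up $\binom{n}{k}$ from the $n$-th power, but your proposed fix (group by number of $\sigma$-slots and bound with $\|O\|_\infty,\|\sigma-\phi\|_\infty\le 1$) does not close it. In the regime that matters for the theorem's application one takes $n\eta=\Theta(1)$, and then every order of the series contributes $\Theta(1)$; no finite truncation can isolate an $O(\eta^2/n)$ remainder, and the orthogonality $\Tr(\phi\sigma)=0$ kills some cross terms but does not suppress the combinatorial growth. (A small symptom of not having pushed the calculation: the density is $\bigl((1-\eta)\langle\psi|\phi|\psi\rangle+\eta\langle\psi|\sigma|\psi\rangle\bigr)^n$, not $(1-\eta+\eta\langle\psi|(\sigma-\phi)|\psi\rangle)^n$.)

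The paper sidesteps the divergent expansion by a different decomposition. It writes $\rho^{\otimes n}$ in the full eigenbasis of $\rho$ and groups terms by the type vector $\ve=(e_1,\dots,e_d)$ counting how many tensor slots carry each eigenprojector $\Phi_i$. Two facts then do the work: the partial traces $\Tr_{n\to k}(\sym^{(n)}\sigma(\ve))$ have closed combinatorial forms (e.g.\ $M_1^{\ve}=\tfrac{1}{n}\sum_i e_i\Phi_i$), and the conditional distribution on $\ve$ given success is, up to an error $(\eta/(1-\eta))^{n+1}$, a product of independent geometrics with means $\lambda_i/(\lambda_1-\lambda_i)$. This yields $M_1=\Phi_1+\tfrac{1}{n}\sum_{j\ge 2}\tfrac{\lambda_j}{\lambda_1-\lambda_j}(\Phi_j-\Phi_1)$ plus an exponentially small correction, from which the stated bias follows by expanding the single ratio $\lambda_j/(\lambda_1-\lambda_j)$ (no $n$ in sight). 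In effect the geometric approximation \emph{resums} your perturbation series; without an equivalent resummation, your approach stalls at precisely the obstacle you identified.
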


Notice that given \Cref{thm:robustness_of_joint_measurement}, one can easily derive the sample complexity of the $\eta \le 1/s^*$ regime given in \Cref{thm:sample_complexity_main} by invoking Chebyshev's inequality. Recall that the next two regimes are obtained by averaging, purification, or a combination thereof. We describe those procedures in \Cref{sec:everything_together}. We give a detailed proof outline for \Cref{thm:robustness_of_joint_measurement} in \Cref{sec:proof_outline}, with full proofs in the appendix.

\subsection{Related Work}
The task of approximately preparing the principal eigenstate (a.k.a. purification) has a long history~\citep{Berthiaume1994, Barenco1997, Peres1999, werner1998optimal, cirac1999optimal, lloyd2014PCA, HonghaoFu2016thesis, childs2023streaming}. However, this task is costly achieving an $\eta$ suppression that scales at most inversely in the number of copies of $\rho$~\citep{werner1998optimal, cirac1999optimal}. This task is distinct from our work which aims to learn a classical description of the principal eigenstate to sufficient accuracy to permit future estimates of many expectation values. 

More recently, focus has shifted from physical to virtual purification schemes. This more relevant body of previous work~\citep{Ekert2002, colter2019cooling, koczor2021exponential, huggins2021virtual, czarnik2021qubit}, sometimes referred to as ``virtual distillation'', is one that directly learns an expectation value of a given observable with respect to $\rho_t :=\rho^t/\Tr(\rho^t)$ for an unknown state $\rho$ and an integer $t$. As $t$ becomes large, $\rho_t$ approaches the principal eigenstate of $\rho$ connecting these techniques to our work. 
However, in contrast to our work, in these protocols, the observable is a part of the measurement circuit. So, for example, computing expected values for exponentially many observables would require exponential overhead in sample complexity, whereas our procedure in \Cref{thm:sample_complexity_main} requires only polynomial overhead. To be fair, there are certainly advantages to the virtual distillation setting for practical applications, most notably the fact that the sample complexity does not depend on properties of the observable such as its Frobenius norm.

Building on the virtual distillation program, Refs.~\cite{zhou2022hybrid, Seif2023ErrorMitigation} consider learning non-linear functions of $\rho$ such as $\Tr(O\rho^t)$; however, like our work, Refs.~\cite{zhou2022hybrid, Seif2023ErrorMitigation} take a classical shadows style approach where measurements of the copies of $\rho$ can be implemented without knowing the observables of interest. Ref.~\cite{zhou2022hybrid} shows that their sample complexities depend on $B$, the squared-Frobenius norm of the observable, achieving a sample complexity of $\bigo{\frac{(B+1)t}{\epsilon^2}}$ for the sub-procedure of estimating $\Tr(O\rho^t)$ to additive error $\epsilon$. 
A straightforward calculation shows that solving the principal eigenstate classical shadows problem using these techniques for the estimator $\Tr(O\rho^t)/\Tr(\rho^t)$ results in much higher sample complexities for all regimes in which $B$ is somewhat large (in particular, when $B > \eta/\sqrt{\epsilon}$). Consequently, our protocol is better for all regimes which do not rely on purification (i.e., $\eta \le \sqrt{\epsilon}$). A more involved calculation shows that our protocol is still preferable in all but a handful of regimes, but they are harder to characterize (e.g., $B = 1$, $\epsilon = \eta^3$, and $\eta$ sufficiently small). We leave a more thorough comparison of these techniques (including possible ways to combine them) to future work. 
We note that Ref.~\cite{zhou2022hybrid} also considers the setting where $O$ is a $k$-local observable. There, the sample complexity of estimating $\Tr(O \rho^{t})$ is $\bigo{\frac{4^k t}{\epsilon^2}}$. Hence, for $k$-local observables where $B=4^n \gg 4^k$, this protocol is preferable in several parameter regimes of interest.

Another related body of work surrounds classical shadows that are robust to noise in the measurement process itself \citep{chen2021robust, koh2022classical}. In other words, those procedures work well when given a state that has been prepared with high fidelity, but are using low-fidelity measurements. Our procedure works well when given a low-fidelity state, but have measurements that can be performed with high fidelity.

\subsection{Open Problems}
Our work leaves open several new directions. Perhaps the most interesting is to explore variations of the principal eigenstate classical shadows problem.  How do shadow estimation algorithms need to change to predict properties of the principal eigenstate, rather than the state itself? There are many possible variants worthy of consideration: when the class of observables is local \citep[cf.][]{Huang2020, hakoshima2023localizedVP}; when the measurement procedure itself is faulty \citep[cf.][]{chen2021robust, koh2022classical}; when a low memory footprint is required \citep[cf.][]{czarnik2021qubit, chen2022exponential, hakoshima2023localizedVP}; etc.  

Another possible direction for future work is to generalize the principal eigenstate classical shadows problem to the top $k$ eigenstates, rather than just the top eigenstate. When a complete description of the best rank-$k$ approximator is needed,  $O(k d / \epsilon^2)$ samples are sufficient by work of \cite{ODonnel2016efficient} (where $d$ is the dimension of the Hilbert space and $\epsilon$ is a bound on the trace distance to the optimal rank-$k$ approximation), but once again, little is known in the classical shadows setting.

Finally, we ask whether or not our algorithm can be improved. When $\eta \le 1/s^*$ (recall $s^* := \sqrt B/\epsilon + 1/\epsilon^2$), our algorithm obtains the same sample complexity as the $\eta = 0$ algorithm of \cite{gps2022sampleoptimal}, which is provably optimal up to log factors. Therefore, our algorithm must also be optimal in that regime since we could always add noise in $\eta = 0$ setting if that improved the sample complexity. In the regime where $\eta > 1/s^*$, the optimality of our algorithm is unknown. However, one might suspect that there is the possibility for improvement since our algorithm does not measure all (or almost all) copies of $\rho$ at once, which is distinct from other optimal joint-measurement tomography algorithms \citep{Haah2016sample, ODonnel2016efficient, gps2022sampleoptimal}.

\section{Abbreviated Preliminaries}
\label{sec:condensed_preliminaries}
We start with a condensed version of the preliminaries section in the appendix (cf.\ \Cref{sec:preliminaries}). Let $\symm_{n}$ denote the symmetric group of permutations on $n$ elements.

\begin{defn}[permutation operator]
    Given a permutation $\pi \in \symm_n$ (for $n \geq 1$), define a permutation operator $W_{\pi} \in \CC^{d^{n} \times d^{n}}$ such that 
    $
    W_{\pi} \ket{x_1} \cdots \ket{x_n} = \ket{x_{\pi^{-1}(1)}} \cdots \ket{x_{\pi^{-1}(n)}},
    $
    and extend by linearity. 
    That is, $W_{\pi}$ acts on $(\CCD)^{\otimes n}$ by permuting the qudits, sending the qudit in position $i$ to position $\pi(i)$. 
\end{defn}

\begin{defn}[symmetric subspace]
    Define the \emph{symmetric subspace} as the subspace of $(\mathbb C^d)^{\otimes n}$ fixed by the projector $\sym^{(n)} = \frac{1}{n!} \sum_{\pi \in \symm_n} W_{\pi}$, where 
    $\symdim{d}{n} = \binom{n+d-1}{d-1}$. Additionally, we have that $
    \sym^{(n)} = \symdim{d}{n} \int_{\psi} \left( \ketbra{\psi}{\psi} \right)^{\otimes n} \mathrm{d}\psi
    $ \citep[e.g.,][]{scott2006tight}.
\end{defn}

\begin{defn}
\label{defn:abbreviated_measurement}
The \emph{standard symmetric joint measurement} is a measurement on $n$ qudits. It is defined by the POVM $\mathcal{M}_{n} = \{ F_{\psi} \}_{\psi} \cup \{ F_{\fail} \}$ with elements
$
F_{\psi} := \symdim{d}{n} \cdot \ketbra{\psi}{\psi}^{\otimes n} \mathrm{d} \psi,
$
for all $d$-dimensional pure states $\psi$, proportional to the Haar measure $\mathrm{d} \psi$, plus a ``failure'' outcome $F_{\fail} := \eye - \sym^{(n)}$ for non-symmetric states.
\end{defn}

The measurement \emph{fails} if we get outcome $F_{\fail}$, otherwise we say it \emph{succeeds}. In what follows, we will let $\Psi$ be the random variable representing the outcome of measuring $\rho^{\otimes n}$ with $\mathcal{M}_n$ where $
\Psi$ is $0$ when the measurement fails and $\proj{\psi}$ for measurement outcome $\psi$.

Note that standard techniques (employing $t$-designs, see \cite{bajnok1992construction, hayashi2005reexamination, bondarenko2013optimal}) can be used to replace this measurement with a POVM with finitely many outcomes. Subsequently, the finite POVM can be compiled to a projective measurement \cite{nielsen2010quantum}. See Ref.~\cite{gps2022sampleoptimal} for more details on how to realize this measurement. All our sample complexity bounds hold under this replacement.

\section{Outline of main theorem}
\label{sec:proof_outline}

We now give an outline of the proof of \Cref{thm:robustness_of_joint_measurement} to elucidate some of the key techniques. We refer the reader to the appendix for full proofs and details. 

Let's begin with the most straightforward approach to proving \Cref{thm:robustness_of_joint_measurement}---simply give exact expressions for the first and second moments of the standard symmetric measurement on $\rho^{\otimes n}$ conditioned on a successful\footnote{As mentioned before, the fact that our measurement can fail (and output $0$) is the consequence of our state not necessarily being in the symmetric subspace.} outcome:
\begin{theorem}[\Cref{thm:moments_of_psi_given_ms} in \Cref{sec:chiribella}]\label{thm:short_psi_moments}
    \begin{align}
        \E[ \Psi \mid \success] &= \frac{\eye + n M_1}{d + n}, \\
        \E[ \Psi^{\otimes 2} \mid \success] &= \frac{2 \sym^{(2)}}{(d+n)(d+n+1)} \parens*{(\eye + nM_1)^{\otimes 2} + \binom{n}{2} M_2 - n^2 M_1^{\otimes 2}},
    \end{align}
for mixed states $M_1 \propto \Tr_{1,\ldots,n-1}(\sym^{(n)} \rho^{\otimes n})$ and $M_2 \propto \Tr_{1,\ldots,n-2}(\sym^{(n)} \rho^{\otimes n})$.

\end{theorem}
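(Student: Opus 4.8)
The plan is to evaluate both conditional moments directly as Haar integrals over the measurement outcome $\psi$ and then collapse those integrals onto symmetric subspaces, in the style of the known de Finetti/quantum-estimation moment formulas.

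First, the success outcomes of $\mathcal{M}_n$ integrate to $\int_\psi F_\psi = \symdim{d}{n}\int_\psi \left(\ketbra{\psi}{\psi}\right)^{\otimes n}\mathrm d\psi = \sym^{(n)}$, so $\success$ has probability $p := \Tr(\sym^{(n)}\rho^{\otimes n})$ and, conditioned on it, Born's rule produces outcome $\psi$ with density proportional to $\symdim{d}{n}\Tr\!\left(\left(\ketbra{\psi}{\psi}\right)^{\otimes n}\rho^{\otimes n}\right)$. Hence for $m\in\{1,2\}$,
\[\E[\Psi^{\otimes m}\mid\success] = \frac{\symdim{d}{n}}{p}\int_\psi \left(\ketbra{\psi}{\psi}\right)^{\otimes m}\Tr\!\left(\left(\ketbra{\psi}{\psi}\right)^{\otimes n}\rho^{\otimes n}\right)\mathrm d\psi .\]
Using $X\cdot\Tr(YZ) = \trr{m+1,\ldots,m+n}{(X\otimes Y)(\eye^{\otimes m}\otimes Z)}$ to merge the factor and the trace into a single $\left(\ketbra{\psi}{\psi}\right)^{\otimes(m+n)}$, pulling the Haar integral inside, and invoking $\int_\psi \left(\ketbra{\psi}{\psi}\right)^{\otimes k}\mathrm d\psi = \sym^{(k)}/\symdim{d}{k}$, this becomes
\[\E[\Psi^{\otimes m}\mid\success] = \frac{\symdim{d}{n}}{\symdim{d}{n+m}\,p}\,\trr{m+1,\ldots,m+n}{\sym^{(n+m)}(\eye^{\otimes m}\otimes\rho^{\otimes n})},\]
where the scalar prefactor telescopes to $\symdim{d}{n}/\symdim{d}{n+m} = \prod_{j=1}^{m}\tfrac{n+j}{n+d+j-1}$.

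Next I would reduce the surviving partial trace to the one- and two-body marginals $M_1,M_2$ via the coset identity $\sym^{(n+1)} = \tfrac1{n+1}\left(\sum_{i=1}^{n+1}W_{(1\,i)}\right)(\eye\otimes\sym^{(n)})$, obtained by splitting $\symm_{n+1}$ into cosets of the stabilizer of its first element. Since $\sym^{(n)}\rho^{\otimes n} = \rho^{\otimes n}\sym^{(n)}$ is positive semidefinite and permutation invariant, all of its $j$-body marginals coincide and equal $p\,M_j$. For $m=1$: substitute the identity and use the swap/partial-trace rule $\trr{2}{W_{(1\,2)}(\eye\otimes B)} = B$; the $i=1$ term contributes $p\,\eye$, each of the remaining $n$ terms contributes $p\,M_1$, and multiplying by $\tfrac{n+1}{n+d}$ and dividing by $p$ gives $\tfrac{\eye+nM_1}{d+n}$. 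For $m=2$: apply the identity twice, expanding $\sym^{(n+2)}$ as a double sum over products $W_{(1\,i)}W_{(2\,j)}$ acting on $\eye^{\otimes 2}\otimes\sym^{(n)}\rho^{\otimes n}$, and classify the $(n+1)(n+2)$ resulting permutations by how many of the two ``probe'' registers $\{1,2\}$ they fill from the traced registers $\{3,\ldots,n+2\}$: filling neither gives a term $\propto\eye^{\otimes2}$; filling exactly one gives a one-body marginal $\propto\eye\otimes M_1$ or $M_1\otimes\eye$; filling both from two \emph{distinct} traced registers gives the two-body marginal $p\,M_2$ (crucially not $p^2\,M_1^{\otimes2}$, since $\sym^{(n)}\rho^{\otimes n}$ is correlated even though $\rho^{\otimes n}$ is a product). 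Several classes additionally carry a leftover swap $W_{(1\,2)}$, which fuses with its swap-free partner through $\eye+W_{(1\,2)} = 2\sym^{(2)}$ to yield the common left factor $2\sym^{(2)}$; collecting coefficients and reinstating the prefactor $\tfrac{(n+1)(n+2)}{(d+n)(d+n+1)}$ produces $\tfrac{2\sym^{(2)}}{(d+n)(d+n+1)}\left(\eye^{\otimes2}+n(\eye\otimes M_1+M_1\otimes\eye)+\binom{n}{2}M_2\right)$, which is the claimed formula after rewriting $\eye^{\otimes2}+n(\eye\otimes M_1+M_1\otimes\eye) = (\eye+nM_1)^{\otimes2}-n^2M_1^{\otimes2}$ (a convenient form, since $\E[\Psi\mid\success]^{\otimes2}\propto(\eye+nM_1)^{\otimes2}$).

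The main obstacle is the $m=2$ bookkeeping. For each permutation class one has to track which registers survive the partial trace, the left-to-right ordering of the operator products spat out by the swap/trace rule applied to $2$- and $3$-cycles, the exact combinatorial multiplicities (e.g.\ the number of $(i,j)$ pairs filling both probes from distinct traced registers, which supplies the $\binom{n}{2}$ in front of $M_2$), and the pairing of each swap-carrying term with its swap-free partner so that the swaps genuinely collapse into $\sym^{(2)}$ --- which they must, since $\Psi^{\otimes2}$ is supported on the two-copy symmetric subspace. Two cheap sanity checks: $n=1$ (then $\binom{n}{2}=0$, $\sym^{(n)}\rho^{\otimes n}=\rho$, and $M_1$ is just a normalized one-body marginal of $\rho$), and pure $\rho=\ketbra{\phi}{\phi}$ (then $M_1=\ketbra{\phi}{\phi}$ equals the single-factor reduction of $M_2$, and both formulas must collapse to the known moments of the standard symmetric measurement on a pure state).
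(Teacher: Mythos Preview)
Your approach is correct and genuinely different from the paper's. Both routes share the same first step: rewriting the conditional moment as
\[
\E[\Psi^{\otimes m}\mid\success] = \frac{\symdim{d}{n}}{\symdim{d}{n+m}\,p}\,\Tr_{m+1,\ldots,m+n}\!\left(\sym^{(n+m)}(\eye^{\otimes m}\otimes\rho^{\otimes n})\right),
\]
which the paper packages as a measure-and-prepare channel $\MP_{n\to m}$. From here the paper \emph{cites} Chiribella's theorem \cite{chiribella2011quantum}, which expresses $\MP_{n\to k}$ as a sum over cloning maps applied to $\Tr_{n\to s}(A)$ for $0\le s\le k$, and then simply specializes to $k=1,2$. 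You instead expand $\sym^{(n+m)}$ via the coset identity $\sym^{(n+1)} = \tfrac{1}{n+1}\bigl(\sum_i W_{(1\,i)}\bigr)(\eye\otimes\sym^{(n)})$ and compute the partial traces term by term, which amounts to re-deriving the $k=1,2$ cases of Chiribella's result from scratch. Your case classification is sound: the seven $(i,j)$ classes do pair into swap-free/swap-carrying partners that collapse to $2\sym^{(2)}$, with the $M_2$ class self-pairing because $\sym^{(n)}\rho^{\otimes n}$ is genuinely symmetric (not just exchangeable), so $W_{(1\,2)}M_2 = M_2$.

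What each buys: the paper's route is shorter and yields the formula for all $k$ in one stroke, at the cost of importing a nontrivial external result. Your route is fully self-contained and elementary, making explicit exactly where each coefficient ($n$, $\binom{n}{2}$) and the $2\sym^{(2)}$ factor come from; the price is the $m=2$ bookkeeping you flag, though your pairing argument keeps it manageable.
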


Here, we are already forced to deviate from previous treatments \citep{Huang2020, gps2022sampleoptimal}. Notice that the expected value is not related by scaling and shifting by the identity to the unknown state $\rho$. Instead, the measurement's expectation is related to $M_1$, the partial trace of the projection of $\rho^{\otimes n}$ onto the symmetric subspace. As in \citep{gps2022sampleoptimal}, the proof of \Cref{thm:short_psi_moments} relies on representation theory and properties of the symmetric subspace, but is considerably more streamlined by the use of \nameref{thm:chiribella}.

Ultimately, we will claim that $M_1$ is close to the principal eigenstate $\phi$ of $\rho$. So, the estimator $\hat\phi$ in \Cref{thm:robustness_of_joint_measurement} will be $((d+n)\Psi  - \mathbb I)/n$ conditioned on successful measurement. Two key challenges remain: first, we must show that $M_1$ is actually close to the principal eigenstate; second, we must bound the variance of our estimator. Unfortunately, the closed-form expressions for $M_1$ and $M_2$ are quite unwieldy. 

To tackle these challenges, we reinterpret $\rho^{\otimes n}$ as a statistical mixture of states which are easier to analyze individually. To describe this decomposition, first let us write the unknown state as $\rho = \sum_{i=1}^{d} \lambda_i \Phi_i$ where $\lambda_1 \ge \cdots \ge \lambda_d$ and $\Phi_i := \ketbra{\phi_i}{\phi_i}$ are projectors onto the eigenstates. In the expansion of $\rho^{\otimes n}$, we will use vectors $\ve = (e_1, \dotsc, e_d) \in \mathbb N^{d}$ with $e_1 + \dotsm + e_d = n$ to give counts for the different eigenstates of $\rho$. Now, we can define the mixed state
\[
\sigma(\ve) := \frac{1}{n!} \sum_{\pi \in \symm_n} W_{\pi} \left(\Phi_1^{\otimes e_1} \otimes \cdots \otimes \Phi_d^{\otimes e_d} \right) W_{\pi}^{\dag}
\]
to be a symmetrized\footnote{To be clear, $\sigma(\ve)$ is typically \emph{not} in the symmetric subspace since it is in general a mixed state.} version of the eigenstate $\Phi_1^{\otimes e_1} \otimes \cdots \otimes \Phi_d^{\otimes e_d}$. Using the short-hand expressions: $\ve! := e_1! \dotsm e_d!$, $\binom{n}{\ve} := \frac{n!}{\ve!}$, and $\vlambda^{\ve} := \lambda_1^{e_1} \dotsm \lambda_d^{e_d}$, we obtain our desired nice expansion of $\rho^{\otimes n}$:

\begin{prop} $\rho^{\otimes n} = \sum_{\ve} \binom{n}{\ve} \vlambda^{\ve} \sigma(\ve)$.\end{prop}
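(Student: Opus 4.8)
The plan is to expand $\rho^{\otimes n}$ directly using the eigendecomposition $\rho = \sum_{i=1}^d \lambda_i \Phi_i$ and then group terms by their ``type'' $\ve$. First I would write
\[
\rho^{\otimes n} = \left( \sum_{i=1}^d \lambda_i \Phi_i \right)^{\otimes n} = \sum_{(i_1, \dotsc, i_n) \in [d]^n} \lambda_{i_1} \dotsm \lambda_{i_n} \, \Phi_{i_1} \otimes \dotsm \otimes \Phi_{i_n},
\]
using multilinearity of the tensor product. Each term is indexed by a word $(i_1,\dotsc,i_n) \in [d]^n$; I would then partition these words according to the count vector $\ve = (e_1,\dotsc,e_d)$ where $e_j = \#\{k : i_k = j\}$, so that $e_1 + \dotsm + e_d = n$. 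For a fixed $\ve$, the coefficient $\lambda_{i_1}\dotsm\lambda_{i_n}$ depends only on $\ve$ and equals $\vlambda^{\ve} = \lambda_1^{e_1}\dotsm\lambda_d^{e_d}$, so the sum becomes $\sum_{\ve} \vlambda^{\ve} \sum_{\text{words of type }\ve} \Phi_{i_1}\otimes\dotsm\otimes\Phi_{i_n}$.

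The remaining step is to identify $\sum_{\text{words of type }\ve} \Phi_{i_1}\otimes\dotsm\otimes\Phi_{i_n}$ with $\binom{n}{\ve}\sigma(\ve)$. The key observation is that every word of type $\ve$ is obtained from the canonical ordered word $(\underbrace{1,\dotsc,1}_{e_1},\dotsc,\underbrace{d,\dotsc,d}_{e_d})$ by applying some permutation $\pi \in \symm_n$ to the positions, and the resulting operator is $W_\pi (\Phi_1^{\otimes e_1}\otimes\dotsm\otimes\Phi_d^{\otimes e_d}) W_\pi^\dagger$. However, distinct permutations can give the same word (and the same operator): precisely, two permutations give the same word iff they differ by a permutation in the stabilizer of the canonical word, which is the Young subgroup $\symm_{e_1}\times\dotsm\times\symm_{e_d}$ of size $\ve!$. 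Hence summing over all $\pi \in \symm_n$ overcounts each distinct term exactly $\ve!$ times, giving
\[
\sum_{\text{words of type }\ve} \Phi_{i_1}\otimes\dotsm\otimes\Phi_{i_n} = \frac{1}{\ve!}\sum_{\pi\in\symm_n} W_\pi\left(\Phi_1^{\otimes e_1}\otimes\dotsm\otimes\Phi_d^{\otimes e_d}\right)W_\pi^\dagger = \frac{n!}{\ve!}\,\sigma(\ve) = \binom{n}{\ve}\sigma(\ve),
\]
where the second equality uses the definition of $\sigma(\ve)$ (which normalizes by $1/n!$). Combining with the previous paragraph yields $\rho^{\otimes n} = \sum_{\ve} \binom{n}{\ve}\vlambda^{\ve}\sigma(\ve)$.

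The only point requiring a little care — and the part I would flag as the main obstacle — is the overcounting argument: one should verify that the stabilizer of the canonical word under the position-permutation action is exactly the Young subgroup $\symm_{e_1}\times\dotsm\times\symm_{e_d}$, and that permutations in this stabilizer really do fix the operator $\Phi_1^{\otimes e_1}\otimes\dotsm\otimes\Phi_d^{\otimes e_d}$ (not just the word). The latter holds because permuting positions within a block of identical factors acts trivially on a tensor power of the same operator. Everything else is bookkeeping with multi-indices, so I would keep those steps brief.
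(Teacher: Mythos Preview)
Your argument is correct and matches the paper's approach exactly: the paper states Proposition~\ref{prop:rho_as_mixture} without a formal proof, but the surrounding text describes precisely the expansion $\rho^{\otimes n} = \sum_{\va \in [d]^n} \bigl(\prod_i \lambda_{a_i}\bigr) \Phi_{\va}$ followed by grouping terms according to their type $\ve$. Your orbit--stabilizer bookkeeping for the overcounting factor $\ve!$ is the natural way to make that grouping rigorous, and the ``main obstacle'' you flag is indeed routine.
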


We now interpret $\rho^{\otimes n}$ as statistical mixture of $\sigma(\ve)$ states where $\ve$ is selected at random from the distribution with $\Pr[\ve] = \binom{n}{\ve} \vlambda^{\ve}$, which we recognize as the multinomial distribution with $n$ trials for $d$ events with probabilities $\lambda_1, \ldots, \lambda_d$. However, the pertinent distribution for our calculations, which we name $\truedist$, is this multinomial \emph{conditioned} on successful measurement. We arrive at new expressions (c.f.\ \Cref{thm:ms_from_mse}) for the first and second moments in \Cref{thm:short_psi_moments} by expanding $\rho^{\otimes n}$ with this interpretation. For example, the first moment becomes
\[
    \E[ \Psi \mid \success] = \frac{\eye + n \mathbb E_{\ve \sim \truedist} M_1(\ve)}{d + n}
\]
where mixed state $M_1(\ve) \propto \Tr_{1,\ldots,n-1}(\sym^{(n)} \sigma(\ve))$. This expansion has the potential to greatly simplify the calculation since $M_1(\ve)$ and $M_2(\ve)$ turn out to have surprisingly clean forms (c.f.\ theorems \ref{thm:sigma1} and \ref{thm:sigma2}, respectively). Unfortunately, the distribution $\truedist$ is still quite complicated.

To circumvent this issue, our key observation is that the true distribution $\truedist$ is close to a distribution $\fakedist$ of independent geometric random variables. Technically, in $\fakedist$, each $e_i$ with $i \ge 2$ is chosen independently from the geometric distribution with mean $\lambda_i / (\lambda_1 - \lambda_i)$ and $e_1$ is set to $n - (e_2 + \ldots + e_d)$.
\begin{theorem}[\Cref{thm:prob_e_negative} in \Cref{sec:geometric_approximation}]
$\norm{\truedist - \fakedist}_{TV} \le \parens*{\frac{1 - \lambda_1}{\lambda_1}}^{n+1} \frac{\lambda_1}{2\lambda_1 - 1}$.
\end{theorem}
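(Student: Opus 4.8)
The plan is to realize $\truedist$ as literally the law $\fakedist$ conditioned on the event $\{e_1\ge 0\}$; total variation distance then collapses to a tail probability for a sum of independent geometric random variables, which I bound by an elementary estimate. \textbf{Step 1 (a closed form for $\truedist$).} First I would evaluate the probability that the standard symmetric joint measurement succeeds on each block $\sigma(\ve)$. The $\symm_n$-orbit of $\ket{\phi_1}^{\otimes e_1}\otimes\cdots\otimes\ket{\phi_d}^{\otimes e_d}$ consists of exactly $\binom{n}{\ve}$ mutually orthogonal product states, $\sigma(\ve)$ is their uniform mixture, and for each such product state $\ket u$ only the $\ve!$ permutations fixing its blocks contribute, so $\bra u\sym^{(n)}\ket u=\ve!/n!=1/\binom{n}{\ve}$; hence $\Tr(\sym^{(n)}\sigma(\ve))=1/\binom{n}{\ve}$. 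Feeding this and $\rho^{\otimes n}=\sum_{\ve}\binom{n}{\ve}\vlambda^{\ve}\sigma(\ve)$ into the definition of $\truedist$ (the multinomial $\Pr[\ve]=\binom{n}{\ve}\vlambda^{\ve}$ reweighted by the per-outcome success probability and renormalized), the binomial coefficients cancel against the success probabilities:
\[
\truedist(\ve)\;=\;\frac{\binom{n}{\ve}\vlambda^{\ve}\,\Tr(\sym^{(n)}\sigma(\ve))}{\sum_{\ve'}\binom{n}{\ve'}\vlambda^{\ve'}\,\Tr(\sym^{(n)}\sigma(\ve'))}\;=\;\frac{\vlambda^{\ve}}{\sum_{\ve'}\vlambda^{\ve'}},
\]
the sums over $\ve\in\mathbb N^{d}$ with $e_1+\cdots+e_d=n$.

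\textbf{Step 2 ($\truedist=\fakedist(\,\cdot\mid e_1\ge0)$).} Identify a count vector $\ve$ with the tuple $(e_2,\dots,e_d)$ via $e_1=n-\sum_{i\ge2}e_i$. Since $\sum_ie_i=n$ we may write $\vlambda^{\ve}=\lambda_1^{n}\prod_{i\ge2}(\lambda_i/\lambda_1)^{e_i}$, so on the set of \emph{valid} vectors (those with $e_1\ge0$) we have $\truedist(\ve)\propto\prod_{i=2}^{d}(\lambda_i/\lambda_1)^{e_i}$. On the other hand, the geometric law with mean $\lambda_i/(\lambda_1-\lambda_i)$ assigns probability $(\lambda_i/\lambda_1)^{k}(\lambda_1-\lambda_i)/\lambda_1$ to $k$, so for every valid $\ve$,
\[
\fakedist(\ve)\;=\;\Big(\textstyle\prod_{i\ge2}\tfrac{\lambda_1-\lambda_i}{\lambda_1}\Big)\prod_{i\ge2}\Big(\tfrac{\lambda_i}{\lambda_1}\Big)^{e_i},
\]
while $\fakedist$ in addition places mass on vectors with $e_1<0$, i.e.\ with $\sum_{i\ge2}e_i>n$. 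Thus $\truedist$ and $\fakedist$ are proportional on the valid set, which forces $\truedist(\cdot)=\fakedist(\cdot\mid e_1\ge0)$.

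\textbf{Step 3 (reduction to a geometric tail).} I would then invoke the elementary fact that for any distribution $\mu$ and event $A$ with $\mu(A)>0$ one has $\norm{\mu-\mu(\cdot\mid A)}_{TV}=\mu(A^{c})$ (test against $A$ for ``$\ge$''; a one-line computation gives ``$\le$''). With $\mu=\fakedist$ and $A=\{e_1\ge0\}=\{\sum_{i\ge2}e_i\le n\}$ this gives $\norm{\truedist-\fakedist}_{TV}=\Pr_{\fakedist}\!\big[\sum_{i\ge2}e_i\ge n+1\big]$. Writing $r:=\sum_{i\ge2}\lambda_i/\lambda_1=(1-\lambda_1)/\lambda_1$, which is $<1$ precisely because $\lambda_1>1/2$, and using that each multinomial coefficient $\binom{m}{e_2,\dots,e_d}\ge1$, the multinomial theorem, and $\prod_{i\ge2}(\lambda_1-\lambda_i)/\lambda_1\le1$,
\[
\Pr_{\fakedist}\!\Big[\textstyle\sum_{i\ge2}e_i\ge n+1\Big]\;=\;\Big(\textstyle\prod_{i\ge2}\tfrac{\lambda_1-\lambda_i}{\lambda_1}\Big)\sum_{m\ge n+1}\ \sum_{\substack{e_2,\dots,e_d\ge0\\ e_2+\cdots+e_d=m}}\prod_{i\ge2}\Big(\tfrac{\lambda_i}{\lambda_1}\Big)^{e_i}\;\le\;\sum_{m\ge n+1}r^{m}\;=\;\frac{r^{n+1}}{1-r}.
\]
Substituting $r=(1-\lambda_1)/\lambda_1$ and $1-r=(2\lambda_1-1)/\lambda_1$ turns the right-hand side into the claimed $\big(\tfrac{1-\lambda_1}{\lambda_1}\big)^{n+1}\tfrac{\lambda_1}{2\lambda_1-1}$.

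\textbf{Main obstacle.} Essentially all of the content is the structural identity of Step 2: noticing that eliminating the constrained coordinate $e_1$ decouples the rest into independent geometrics, and that $\truedist$ is exactly the restriction of that product law to the valid simplex. Once this is seen, Step 1 is a short symmetric-subspace computation and Step 3 is routine; the one thing that needs care is the bookkeeping around the coordinate identification $\ve\leftrightarrow(e_2,\dots,e_d)$ and the fact that $\fakedist$ genuinely escapes the valid simplex, so that the conditioning—and hence the nonzero total variation distance—is correctly accounted for.
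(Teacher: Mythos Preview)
Your proposal is correct and follows essentially the same approach as the paper: the paper likewise proves that $\truedist$ is $\fakedist$ conditioned on $\{e_1\ge 0\}$ (its Lemma~\ref{lem:true_is_conditional_of_fake}), reduces the TV distance to $\Pr_{\fakedist}[e_1<0]$, and bounds the latter by inserting multinomial coefficients $\ge 1$ and summing the resulting geometric series (its Lemma~\ref{lem:prob_e_equals} and Theorem~\ref{thm:prob_e_negative}). Your Step~1 recovering $\truedist(\ve)\propto\vlambda^{\ve}$ is exactly the paper's Theorem~\ref{thm:sigma0} combined with Bayes' rule, so the arguments match step for step.
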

In other words, we can substitute $\truedist$ for $\fakedist$ without significant loss.\footnote{This idea is similar, but not identical to a technique called ``poissonization'' \citep{dasgupta2011probability}.} This geometric approximation dramatically simplifies many calculations, but nevertheless requires care to show it does not significantly affect the variance of our estimator, involving a sort of hybrid calculation where sometimes we assume the approximation and sometimes we do not. We leave these details to \Cref{sec:mean_of_estimator} and \Cref{sec:variance_of_estimator} for the first and second moments, respectively. Combining these pieces together completes the proof.

\section{The compound estimation procedure}
\label{sec:everything_together}

The full estimation procedure (to prove Theorem~\ref{thm:sample_complexity_main}) uses our \emph{measurement} in Theorem~\ref{thm:robustness_of_joint_measurement} as a black box, which is combined with \emph{purification} of $\initrho$ before measurement, and \emph{averaging} estimates from multiple measurements. We also require a step to estimate $\eta$ from samples, to decide the $\eta$-regime of Theorem~\ref{thm:sample_complexity_main} and balance the \emph{purification}, \emph{measurement}, and \emph{averaging} subroutines accordingly. Due to randomness in these subroutines, we will bound the expected number of samples. 

    In the purification step, we assume the existence of a black box which takes copies of $\initrho$ and creates a state $\measrho$ as output. The number of copies consumed in this sub-procedure is a random variable with mean $k$ but the output state is deterministic in the sense that for identical inputs $\initrho$, identical outputs $\measrho$ will be produced independent of the number of copies consumed. The purification procedure reduces the principal deviation, i.e., the deviation of $\measrho$ satisfies:
    \begin{align}
    \measeta=\bigo{\initeta/k},\label{eq:purified_deviation}    
    \end{align}
    where $\initeta<1/2$ is the deviation of $\initrho$. This result was shown to hold in the special case of $d=2$~\citep{werner1998optimal, cirac1999optimal}. This result was later shown to hold in the general $d$ setting in the special case where $\initrho$ is a convex combination of a pure quantum state and the maximally mixed state~\citep{HonghaoFu2016thesis, childs2023streaming}. Based on  unpublished work, we claim that this result holds in greater generality: it applies to arbitrary mixed states in arbitrary dimension subject to $\eta<1/2$. In our estimation procedure, we employ this result in the $\initeta \in (\sqrt{\epsilon},1/2)$ regime.
    
    In the measurement step, $n$ copies of $\measrho$ are consumed and an estimator $\hat\phi$ is output. This computation involves two steps. First, $n$ copies of $\measrho$ are measured using the standard symmetric joint measurement (cf.~\cref{defn:abbreviated_measurement}) producing either a fail outcome or a classical description of a pure state $\Psi$. If a fail outcome is observed, the execution of the measurement sub-procedure fails on this instance resulting in $n$ ``wasted'' copies of $\measrho$. We will be interested in the regime where the measurement sub-procedure will be executed many times with each having a constant probability of success hence, failures will at most contribute a constant factor to sample complexity. If the measurement succeeds, the measurement outcome is a classical description of a pure quantum state $\Psi$. An affine map is applied to produce $\hat\phi$, an estimator for $M_1$ and $\Phi_1$:
    \begin{align}
        \hat\phi=\frac{(d+n)\Psi-I}{n}.
    \end{align}
    This process is probabilistic so each call produces a different $\hat\phi$ with mean $M_1$ and variance given by \cref{cor:estimator_mean_and_variance_wrt_ms}.    
    
    In the averaging step, $b$ independent estimates of $M_1$ are averaged to produce one improved estimate $\hat\phi(b)$. This has mean $M_1$ and a variance $\frac{1}{b}$ times that of $\hat\phi$. The estimator $\hat\phi(b)$ is an unbiased estimator of $M_1$ and a biased estimator of the principal eigenstate $\Phi_1$. Using \cref{thm:robustness_of_joint_measurement}, for observables satisfying $\pnorm{O}{\infty}\leq 1$ the bias can be bounded by: 
    
    \begin{align}
        \beta=\abs{\E[\Tr(O(\hat\phi-\phi))]}
        = \bigo{\frac{\measeta}{n}}.\label{eq:bias_norm_operator}
    \end{align}
    By ensuring that our estimator has bias $\mathcal O(\epsilon)$ and variance $\mathcal O(\epsilon^2)$, we employ Chebyshev's inequality to prove \cref{thm:sample_complexity_main}.

\begin{figure}[ht]
    \centering
    \scalebox{.8}{\includegraphics{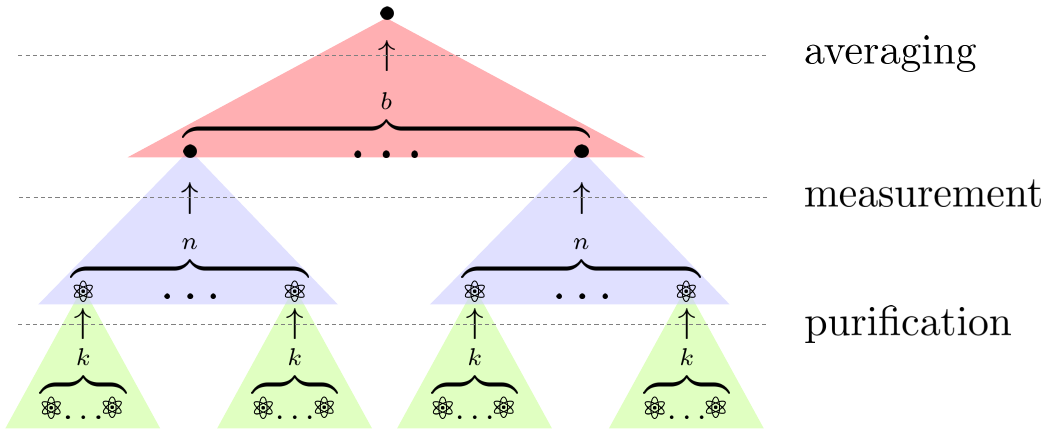}}
    \caption{Our three step estimation procedure depicting the purification, measurement and averaging sub-procedures (from bottom to top). The purification procedure maps $k$ quantum states to one quantum state (depicted by atom logos). The measurement procedure maps $n$ quantum states to a classical description of an operator (depicted by the ``$\bullet$'' symbol). The averaging procedure maps $b$ classical descriptions to one classical description of an operator.}
    \label{fig:combined_subprocedures}
\end{figure}
    
    \Cref{fig:combined_subprocedures} shows how these sub-procedures are combined to form our estimator $\hat\phi(b)$. For a given observable $O$, an expectation value  can be estimated using $\Tr(O \hat\phi(b))$. With constant probability of failure (over the randomness of the measurement procedure), this produces an estimate of $\tr{O\phi}$ up to additive error $\epsilon$. By repeating this procedure and taking the median value over all repetitions, the probability of failure can be exponentially suppressed in the number of repetitions~\citep{lugosi2019mean, lerasle2019lecture, Huang2020}. We omit this standard ``median-of-means'' sub-procedure from our analysis, but note that it ensures that $\bigo{\log t}$ repetitions suffice to estimate the expectation value of $t$ observables, $O_1, \ldots, O_t$, all to within additive error $\epsilon$ with high probability.

The remainder of this section will discuss the choice of parameters $k, n$ and $b$ and how the performance of our procedure compares to alternative approaches.
    
    \subsection{Choice of parameters}

    Three parameters ($k$, $n$, and $b$) define the algorithm, and control both the accuracy of our estimate and the expected number of samples of $\initrho$ used. We select values for these parameters based on the given values of $B$ and $\epsilon$, as well as $\eta$. Note that $\eta$ is \emph{not} given, but let us suppose we know it for now and come back to the problem of estimating $\eta$ from samples after the theorem. 

\begin{restatable}{theorem}{optimalchoice}
    \label{thm:optimal_parameter_choice}
    Given $B$, $\epsilon$ and $\eta$, 
    the expected number of samples is minimized for the choice of $k$, $n$ and $b$ given in \cref{table:choice}.
\end{restatable}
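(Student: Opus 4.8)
The plan is to set up the total expected sample complexity as an explicit function of the three parameters $k$, $n$, $b$ (together with the given $B$, $\epsilon$, and the assumed-known $\eta$) and then minimize it by a case analysis over the three $\eta$-regimes. The starting point is to collect the constraints imposed by the three sub-procedures. First, the total expected number of copies of $\initrho$ consumed is (up to constants) $b \cdot k \cdot n$: the averaging step makes $b$ independent calls to the measurement step, each measurement step consumes $n$ copies of $\measrho$, and each copy of $\measrho$ costs $k$ copies of $\initrho$ in expectation by the purification guarantee \eqref{eq:purified_deviation}. (One must also absorb the constant-factor overhead from measurement failures, which is legitimate since by \Cref{thm:robustness_of_joint_measurement} the success probability $(1-\measeta)^{n-1}$ is bounded below by a constant whenever $n \measeta = O(1)$, a condition our parameter choices will respect.) Second, the accuracy requirement splits into a bias constraint and a variance constraint. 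By \eqref{eq:bias_norm_operator}, the bias is $\beta = O(\measeta/n) = O(\initeta/(kn))$, so we need $\initeta/(kn) = O(\epsilon)$. By \Cref{thm:robustness_of_joint_measurement} the variance of a single $\hat\phi$ is $\Tr(O^2)/n^2 + O(\measeta^2 + 1/n) \le B/n^2 + O(\initeta^2/k^2 + 1/n)$, and after averaging it is this quantity divided by $b$; we need it to be $O(\epsilon^2)$.

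Next I would carry out the optimization of $bkn$ subject to those two inequality constraints, regime by regime. In the low-deviation regime $\eta \le 1/s^*$ one expects to take $k = 1$ (no purification) and $b = 1$ (a single measurement), reducing to the pure-state parameter choice of \cite{gps2022sampleoptimal}: $n = \Theta(\sqrt B/\epsilon + 1/\epsilon^2) = \Theta(s^*)$, which one checks meets both constraints precisely because $\eta \le 1/s^* \le 1/n$ makes the bias term $O(\epsilon)$ and the $\measeta^2$ variance contribution negligible. In the middle regime $1/s^* < \eta \le \sqrt\epsilon$ the bias can no longer be killed by $n$ alone within the pure-state budget, so one takes $n = \Theta(1/\sqrt\epsilon)$ (small enough to keep $\eta/n$ under control while keeping the $1/n$ variance term at $O(\epsilon^2)$ after averaging) and $b = \Theta(B\eta + 1)$, with $k=1$; the product $bn$ then reproduces $O((B\eta+1)/\epsilon^2)$. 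In the high-deviation regime $\sqrt\epsilon < \eta < 1/2$ one additionally purifies, choosing $k$ large enough that the effective deviation $\measeta = O(\eta/k)$ drops down to the boundary $\sqrt\epsilon$ of the previous regime — i.e.\ $k = \Theta(\eta/\sqrt\epsilon)$ — and then reuses that regime's $n$ and $b$; the product $bkn$ then works out to $O(B\eta/\epsilon^2 + \eta/\epsilon^{5/2})$, matching the third branch of \Cref{thm:sample_complexity_main}. In each case I would verify that the nominal choice is not merely feasible but optimal by a standard Lagrange/balancing argument: at the optimum the bias and variance constraints are tight, and the minimum of a product like $bkn$ under two such polynomial constraints is attained by equalizing the competing terms, so no other assignment can do asymptotically better.

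The main obstacle I anticipate is not any single calculation but the bookkeeping of the cross terms and the regime boundaries: the variance bound in \Cref{thm:robustness_of_joint_measurement} has three competing pieces ($B/n^2$, $\measeta^2$, and $1/n$), and the optimal balance among them — and hence whether the $\sqrt B/\epsilon$ term or the $1/\epsilon^2$ term or the purification term dominates — flips as $\eta$ crosses $1/s^*$ and $\sqrt\epsilon$. Getting the breakpoints to be exactly $1/s^*$ and $\sqrt\epsilon$ (rather than something off by a factor involving $B$) requires carefully tracking which constraint is binding, and in particular noticing that the $O(\measeta^2)$ variance contribution is automatically $O(\epsilon)$ — hence harmless — precisely in the regimes where we do not purify, so that purification is triggered exactly when $\eta$ first exceeds $\sqrt\epsilon$. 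A secondary subtlety is ensuring the measurement-failure overhead stays constant: one must confirm that all three parameter regimes keep $n\measeta = O(1)$, which is where the choice $n = \Theta(1/\sqrt\epsilon)$ in the latter two regimes (combined with $\measeta \le \sqrt\epsilon$ there) is doing real work. The detailed optimization and the resulting table of choices are deferred to the appendix.
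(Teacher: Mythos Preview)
Your setup is right and matches the paper: the objective is $s=\Theta(knb)$, the bias constraint is $\eta/(kn)=\bigo{\epsilon}$, the variance constraints are $B/(n^2b)=\bigo{\epsilon^2}$ and $1/(nb)=\bigo{\epsilon^2}$, and one must keep $n=\bigo{k/\eta}$ so that the success probability stays constant. The high-level idea for the third regime---purify until $\measeta\approx\sqrt{\epsilon}$ and then recycle the middle-regime parameters---is also correct.

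The concrete gap is in the middle regime. The paper takes $n=\Theta(1/\eta)$, not $n=\Theta(1/\sqrt{\epsilon})$, and $b=\Theta\parens*{(B\eta^2+\eta)/\epsilon^2}$. Your stated pair $(n,b)=(1/\sqrt{\epsilon},\,B\eta+1)$ is not feasible: with $n=1/\sqrt{\epsilon}$ the first variance constraint forces $b\ge B/\epsilon$ and the second forces $b\ge 1/\epsilon^{3/2}$, neither of which is implied by $b=\Theta(B\eta+1)$. Even if you fix $b$ to the minimal feasible value $\Theta(B/\epsilon+1/\epsilon^{3/2})$, the resulting cost $nb=B/\epsilon^{3/2}+1/\epsilon^2$ is strictly worse than the paper's $(B\eta+1)/\epsilon^2$ throughout the interior of the regime (since $\eta<\sqrt{\epsilon}$ there). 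The point you are missing is that in this regime the binding constraint on $n$ is the \emph{success condition} $n=\bigo{1/\eta}$, and saturating it is what makes the $B$-term scale with $\eta$. Once the middle regime is corrected, your reduction for the third regime automatically gives the paper's $n=1/\sqrt{\epsilon}$ and $b=B/\epsilon+1/\epsilon^{3/2}$.

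Your optimality sketch also needs sharpening. It is not true that ``the bias and variance constraints are tight'' at every optimum: in the first regime the bias constraint is slack and it is the positivity constraints $k\ge 1$, $b\ge 1$ that bind. The paper certifies optimality by multiplying specific constraint inequalities together to obtain matching lower bounds on $knb$ (e.g., $\sqrt{\text{variance-1}}\times(1\le\sqrt{b})\times(1\le k)$ yields $knb=\Omega(\sqrt{B}/\epsilon)$); a generic Lagrange/balancing heuristic will not identify which constraints to combine in which regime. Finally, be careful with the $\measeta^2$ variance term: saying it is ``$O(\epsilon)$ hence harmless'' is not enough, since you need variance $O(\epsilon^2)$; the right statement is that $\measeta^2/b=\bigo{\epsilon^2}$ once $b$ is chosen as above.
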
 

The expectation is over the randomness in the purification and measurement procedures. The proof of Theorem~\ref{thm:optimal_parameter_choice} is given in Appendix~\ref{sec:optimal_parameter_choice}.

\begin{table}[h]
    \centering
    \scalebox{1}{
    \begin{tabular}{c|c|c|c}
    $\eta$ & $\bigo{1/s^{*}}$ & $\Omega(1/s^{*}) \cap \bigo{\sqrt{\epsilon}}$ & $\Omega(\sqrt{\epsilon})$ \\
    \hline
    $k$ & $1$ & $1$ & $\bigo{\frac{\initeta}{\sqrt{\epsilon}}}$ \\
    $n$ & $\bigo{s^*}$ & $\bigo{\frac{1}{\initeta}}$ & $\bigo{\frac{1}{\sqrt{\epsilon}}}$ \\
    $b$ & $1$ & $\bigo{\frac{B\initeta^2+\initeta}{\epsilon^2}}$ & $\bigo{\frac{B}{\epsilon}+\frac{1}{\epsilon^{3/2}}}$ \\
    \hline
    $s$ & $\bigo{s^*}$ & $\bigo{\frac{B \eta + 1}{\epsilon^2}}$ & $\bigo{\frac{B\eta}{\epsilon^2} + \frac{\eta}{\epsilon^{5/2}}}$
    \end{tabular}
    }
    \caption{Choice of parameters $k$, $n$, $b$ for the three regimes of $\eta$. Recall $s^* := \frac{\sqrt{B}}{\epsilon} + \frac{1}{\epsilon^2}$.}
    \label{table:choice}
\end{table}

Last, we need a way to estimate $\eta$, since the choice of $k$, $n$, $b$ are functions of $\eta$, either explicitly, or because they depend on the regime which is determined by $\eta$. 
Observe that a multiplicative approximation for $\eta$ suffices since (i) in all three regimes, the complexity is linear in $\eta$ hence any fixed multiplicative factor applied to $\eta$ can be absorbed into the big-$\mathcal{O}$ constants, and (ii) adjacent regimes have the same complexity (up to constant factors) near the threshold, i.e., there is no ``discontinuity'' in the sample complexity with respect to $\eta$. Hence incorrect categorization of $\eta$-regime due to a multiplicative error still assigns a sample complexity that is equivalent to the sample complexity associated with the correct $\eta$-regime up to big-$\mathcal{O}$ constants. Finally, once we establish $\eta = 
\mathcal{O}(1/s^{*})$ is in the first regime, we need no further estimate of $\eta$ since $k$, $n$, $b$ are functions of $B$ and $\epsilon$, not $\eta$. 

Our information about $\eta$ comes from joint measurements, specifically from when they \emph{fail}. 
In Appendix~\ref{sec:combinatorics}, Theorem~\ref{thm:probability_of_success}, 
we show that the success probability of an $n$-sample measurement is bounded between $(1 - \eta)^{n-1}$ and $(1 - \eta)^{n-1}(1 + \mathcal O(\eta^2))$. However, for a $2$-sample measurement, we can be more specific: 
\begin{equation}
\eta \leq \eta + \eta^{2}/2 \leq 1-\Pr[\success] \leq \eta + \eta^2 \leq 2\eta. 
\end{equation}
In other words, $2$-sample measurements fail with some probability $p = \Theta(\eta)$ which is a multiplicative approximation for $\eta$. Hence, we can reduce to the problem of using many independent Bernoulli trials to estimate their failure probability. Indeed, we give an algorithm that does exactly this. 
\begin{restatable}{theorem}{estimateeta}
    \label{thm:estimateeta}
    Let $r \geq 1$ be an integer. There is an algorithm which estimates the failure probability $p$ of a Bernoulli trial, such that the algorithm (i) outputs a constant-factor multiplicative approximation of $p$, and (ii) makes $\bigo{r/p}$ samples of the Bernoulli trial, except with an $\exp(-\Theta(r))$ probability of failure.
\end{restatable}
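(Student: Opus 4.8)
The plan is to estimate $p$ by a doubling/search strategy that spends samples in geometrically growing batches until we see enough ``successes'' (here, the rare event with probability $p$) to pin down $p$ multiplicatively, and then to verify the estimate with a final confidence batch of size $\Theta(r/p)$. Concretely, I would proceed in rounds indexed by $j = 0, 1, 2, \dots$; in round $j$ we draw a fresh batch of $N_j := c \cdot 2^j \cdot r$ Bernoulli samples (for a suitable absolute constant $c$) and let $X_j$ be the number of occurrences of the event in that batch. We stop at the first round $j^\star$ in which $X_{j^\star} \ge \alpha r$ for a fixed threshold $\alpha$ (say $\alpha = 1$ or some small constant to be chosen), and output $\hat p := X_{j^\star} / N_{j^\star}$.

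The analysis has three parts. First, \emph{correctness of the output}: conditioned on stopping in round $j$, we have $\E[X_j] = p N_j$, and since $X_j \ge \alpha r = \Omega(r)$, a Chernoff bound gives that $X_j$ concentrates to within a constant factor of its mean except with probability $\exp(-\Theta(r))$; hence $\hat p = X_j/N_j$ is a constant-factor multiplicative approximation of $p$ on that event. Second, \emph{the stopping round is not too late}: let $j_0$ be the smallest $j$ with $p N_j \ge 4\alpha r$, i.e.\ $2^{j_0} \approx \Theta(1/(rp) \cdot r) = \Theta(1/p)$ up to constants (so $j_0 = \log_2\Theta(1/(pc))$, which is well-defined whenever $p \le $ const; if $p$ is larger than a constant the very first batch already has $\Omega(r)$ hits). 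For $j = j_0$, $\E[X_{j_0}] \ge 4\alpha r$, so by Chernoff $X_{j_0} \ge \alpha r$ except with probability $\exp(-\Theta(r))$, meaning $j^\star \le j_0$ with that failure probability. Third, \emph{the stopping round is not too early}, i.e.\ we do not halt on a spurious early batch with an inflated $\hat p$: for each $j < j_0$ we have $\E[X_j] < 4\alpha r / 2^{j_0 - j} \le 2\alpha r$ for $j \le j_0 - 1$, actually $\le 2\alpha r$, so I would instead set $j_0$ to be the smallest $j$ with $pN_j \ge 8\alpha r$; then for all $j$ with $\E[X_j] \le 2\alpha r$, the upper-tail Chernoff bound gives $\Pr[X_j \ge \alpha r \text{ while } \E[X_j] \le \alpha r/2]$... more cleanly: for every round $j$ with $\E[X_j] \le \alpha r / 2$ we have $\Pr[X_j \ge \alpha r] \le \exp(-\Theta(r))$, and there are at most $O(j_0) = O(\log(1/p))$ such rounds; but a union bound over $O(\log(1/p))$ events each of probability $\exp(-\Theta(r))$ is still $\exp(-\Theta(r))$ after adjusting constants and using $\log(1/p) = O(2^{j})$-type slack — more carefully, since the batch sizes grow geometrically, the tail probabilities $\exp(-\Theta(2^j r))$ for early rounds sum to $\exp(-\Theta(r))$ without any $\log$ loss. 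Combining: with probability $1 - \exp(-\Theta(r))$ we stop in some round $j^\star$ with $\E[X_{j^\star}] = \Theta(r)$ and output $\hat p$ within a constant factor of $p$.

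Finally, \emph{sample complexity}: on the good event $j^\star \le j_0$, the total number of samples is $\sum_{j=0}^{j^\star} N_j = O(N_{j^\star}) = O(2^{j_0} r) = O(r/p)$ by the geometric growth of the batch sizes, which is the claimed bound; on the (probability $\exp(-\Theta(r))$) bad event we may simply declare failure, so this does not affect the statement. The main obstacle I anticipate is bookkeeping the union bound over the (a priori unboundedly many, since $p$ is unknown) early rounds so that the total error stays $\exp(-\Theta(r))$ rather than degrading to $\exp(-\Theta(r))\cdot\mathrm{poly}\log(1/p)$ or worse; the key observation that saves us is that the batch sizes grow geometrically, so the per-round failure probabilities $\exp(-\Theta(2^j r))$ form a rapidly converging series dominated by its first term $\exp(-\Theta(r))$, and one should carry this through carefully rather than applying a naive union bound with a uniform per-round bound of $\exp(-\Theta(r))$. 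A secondary point of care is the boundary case where $p$ itself is $\Omega(1)$ (or even close to $1/2$, as relevant for the $\eta \le 1/2$ application), where ``round $0$'' already suffices and one must check the output is still a valid multiplicative estimate there.
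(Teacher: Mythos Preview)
Your doubling-search strategy can be made to work and achieves the same guarantee, but it is genuinely different from the paper's approach and considerably more involved. The paper uses a much simpler negative-binomial algorithm: sample Bernoulli trials until exactly $r$ failures have been seen, and output $r/T$ where $T$ is the number of trials. The analysis then reduces to two one-sided Chernoff bounds on a single binomial random variable (via the identity $\Pr[T \le n] = \Pr[\mathrm{Bin}(n,p) \ge r]$), giving $T = \Theta(r/p)$ with failure probability $\exp(-\Theta(r))$ and hence a constant-factor estimate of $p$ immediately. There are no rounds, no union bound over an unknown number of batches, and no boundary cases to track.

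Your approach is salvageable, but the specific reason you give for avoiding the $\log(1/p)$ loss in the union bound is not right. You claim the early-round false-stop probabilities are of the form $\exp(-\Theta(2^{j} r))$; in fact the relevant tail $\Pr[X_j \ge \alpha r]$ \emph{increases} in $j$ for $j \le j_1$ (since the mean $\mu_j = p N_j$ is growing toward the fixed threshold $\alpha r$), so the smallest-$j$ rounds are the safest, not the most dangerous. What actually controls the sum is the Chernoff form $\Pr[X_j \ge \alpha r] \le (e\mu_j/(\alpha r))^{\alpha r}$: stepping one round back halves $\mu_j$ and hence shrinks the bound by a factor $2^{-\alpha r}$, so the series is geometric and dominated by the $j=j_1$ term, which is already $\exp(-\Theta(r))$ provided you leave a constant-factor gap (e.g.\ $\mu_{j_1} \le \alpha r/(2e)$ rather than $\alpha r/2$). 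If you carry out the bookkeeping with that correction, and handle the $O(1)$ ``middle'' rounds separately via a standard two-sided Chernoff bound on $X_j$ around $\mu_j = \Theta(r)$, your proof goes through.
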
 
See Appendix~\ref{app:estimateeta} for the proof. We can get an arbitrarily low failure probability $\delta$ by taking $r = \bigo{\log(1/\delta})$, though Theorem~\ref{thm:sample_complexity_main} is stated for $\delta = \Omega(1)$ and hence $r = \bigo{1}$. 

Finally, in the first regime of Theorem~\ref{thm:sample_complexity_main}, $\eta = \bigo{1/s^{*}}$ can be small (or even zero!) and the algorithm in Theorem~\ref{thm:estimateeta} would use too many samples if run to completion. Instead, we cut it off at $\Omega(r/p) = \Omega(r/s^{*})$ samples, confident (w.p.\ $1 - \exp(-\Theta(r))$) that $\eta = \bigo{1/s^{*}}$, and then (conveniently) the parameters $k = b = 1$ and $n = \mathcal O(\sqrt{B}/\epsilon + 1/\epsilon^2)$ do not require an estimate of $\eta$. 

    \subsection{Comparison to alternative approaches}

    There are two natural strategies to compare against. First, the original classical shadows paper uses \citep{Huang2020} uses single-copy measurements which coincide with our single-copy measurement $\mathcal{M}_1$ (c.f.~\cref{defn:abbreviated_measurement}). Below we give the optimal sample complexity within our $knb$ framework when constrained to single-copy ($n=1$) measurements. 

    \begin{theorem}
    \label{thm:single_copy}
        With single-copy measurements and purification, we get sample complexity 
        \[
        s = \begin{cases} 
        \mathcal O (\frac{B}{\epsilon^2}) & \text{if $\eta \leq \epsilon$,} \\
        \mathcal O (\frac{B\eta}{\epsilon^3}) & \text{if $\eta \geq \epsilon$.}
        \end{cases}
        \]
    \end{theorem}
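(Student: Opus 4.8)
\emph{Proof proposal.} The plan is to specialize the compound procedure of \Cref{sec:everything_together} to $n=1$ and then optimize over the remaining two parameters $k$ (purification) and $b$ (averaging). The first observation is that for $n=1$ the standard symmetric joint measurement $\mathcal{M}_1$ never fails: since $\sym^{(1)} = \eye$, the failure element is $F_{\fail} = \eye - \sym^{(1)} = 0$. Hence the success probability in \Cref{thm:robustness_of_joint_measurement} is $(1-\initeta)^{n-1}=1$, no copies are wasted to failed measurements, and the expected sample count per round is exactly the expected number of copies drawn by the purification step. Specializing the estimator $\hat\phi = ((d+n)\Psi-\eye)/n$ to $n=1$ gives $\hat\phi = (d+1)\Psi - \eye$, and plugging $n=1$ into \Cref{thm:short_psi_moments} (the $M_2$ term drops since $\binom{1}{2}=0$, which is convenient because its definition is ill-posed at $n=1$) yields the exact identity $\E[\hat\phi]=\measrho$. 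Writing $\measrho = (1-\measeta)\phi + \measeta\sigma'$ for the purified state (same principal eigenstate $\phi$, deviation $\measeta$), we get $\E[\Tr(O\hat\phi)] = \Tr(O\measrho) = \Tr(O\phi) + \measeta(\Tr(O\sigma') - \Tr(O\phi))$, so a single estimate has bias $\beta = \bigo{\measeta}$ relative to $\Tr(O\phi)$ (consistent with \eqref{eq:bias_norm_operator} at $n=1$), and bias $\Theta(\measeta)$ in the worst case over $O$ with $\inftynorm{O}\le 1$ (e.g.\ $O=\phi$). Likewise \Cref{thm:robustness_of_joint_measurement} at $n=1$ gives $\Var[\Tr(O\hat\phi)] = \Tr(O^2) + \bigo{\measeta^2 + 1} = \bigo{B}$ (using $\measeta < 1/2$ and $B=\Omega(1)$), and this is $\Theta(B)$ in the worst case, matching ordinary single-copy shadows.

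Next I combine purification and averaging. By \eqref{eq:purified_deviation}, consuming (in expectation) $k$ copies of $\initrho$ produces one copy of $\measrho$ with $\measeta = \bigo{\initeta/k}$; measuring it with $\mathcal{M}_1$ and averaging $b$ such independent estimates yields $\hat\phi(b)$ with bias $\bigo{\initeta/k}$ and variance $\bigo{B/b}$, while the expected number of copies of $\initrho$ used is $\bigtheta{kb}$ (no failure overhead, by the $n=1$ remark above). Applying Chebyshev's inequality as in \Cref{sec:everything_together}, it suffices that the bias is $\bigo{\epsilon}$ and the variance is $\bigo{\epsilon^2}$, i.e.\ $k = \bigomega{\max\{1,\ \initeta/\epsilon\}}$ and $b = \bigomega{B/\epsilon^2}$; and since averaging never reduces bias and the worst-case bias/variance are $\Theta(\measeta)$ and $\Theta(B)$, these two constraints are also necessary for the procedure to be correct on every instance, so minimizing $kb$ subject to them gives the \emph{optimal} choice within the $n=1$ restriction of the framework.

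Carrying out the two-variable optimization: when $\initeta \le \epsilon$, the bias constraint is met with $k=1$ (no purification), so $b = \bigtheta{B/\epsilon^2}$ and $s = \bigtheta{kb} = \bigtheta{B/\epsilon^2}$; when $\initeta \ge \epsilon$, we must take $k = \bigtheta{\initeta/\epsilon}$ to force $\measeta = \bigo{\epsilon}$, and then $b = \bigtheta{B/\epsilon^2}$, giving $s = \bigtheta{(\initeta/\epsilon)\cdot(B/\epsilon^2)} = \bigtheta{B\initeta/\epsilon^3}$. These are exactly the two cases in the statement.

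The only genuinely delicate point is the sample-count bookkeeping rather than any new estimate: one must check that at $n=1$ there is no wasted-copy overhead from measurement failure, and that the randomness in how many copies the purification box consumes affects the count only in expectation while leaving the output state $\measrho$ (hence the bias and variance bounds) deterministic. Once that is in place, the argument is just the elementary optimization above. As with the third regime of \Cref{thm:sample_complexity_main}, the $\initeta \ge \epsilon$ branch invokes the purification guarantee \eqref{eq:purified_deviation} for general $\sigma$, subject to the caveat noted there.
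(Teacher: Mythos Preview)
Your proposal is correct and follows exactly the approach the paper intends: the statement is presented as a specialization of the $knb$ optimization framework of \Cref{sec:optimal_parameter_choice} with the additional constraint $n=1$, and you carry this out in full, correctly noting that at $n=1$ the measurement never fails (so the success constraint \eqref{eq:success_constraint} is vacuous), the bias is $\Theta(\measeta)=\Theta(\initeta/k)$, and the variance is $\Theta(B)$, leaving only the two constraints $k=\Omega(\max\{1,\initeta/\epsilon\})$ and $b=\Omega(B/\epsilon^2)$ to optimize $s=\Theta(kb)$. The paper itself does not spell out the proof beyond remarking that the result ``falls within our $knb$ framework,'' so your write-up is if anything more detailed than the paper's own treatment.
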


    Second, we can turn to purification \emph{before} repetition and averaging, i.e., set $b = 1$ in our framework. Again, the result is somewhat worse. 
    \begin{theorem}
    \label{thm:no_average}
        There is an algorithm which purifies and makes joint measurements (no averaging), having sample complexity 
        \begin{align*}
        s &= \begin{cases}
            s^{*} & \text{if $\eta \leq 1/s^{*}$,} \\
            \eta (s^{*})^{2} & \text{if $\eta \geq 1/s^{*}$,}
        \end{cases}
        &
        \text{where } s^{*} &= \frac{\sqrt{B}}{\epsilon} + \frac{1}{\epsilon^2}.
        \end{align*}
    \end{theorem}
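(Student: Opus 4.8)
The plan is to optimize the sample complexity $s = k \cdot n \cdot b$ (in expectation) over the parameters $k, n, b$ subject to the constraint $b = 1$, using the guarantees already established: the purification bound $\measeta = \bigo{\initeta / k}$ from \eqref{eq:purified_deviation}, and the bias/variance bounds from Theorem~\ref{thm:robustness_of_joint_measurement} applied to a single measurement ($b=1$). With $b=1$, the estimator $\hat\phi$ produced by one $n$-copy standard joint measurement on $\measrho$ has bias $\beta = \bigo{\measeta/n} = \bigo{\initeta/(kn)}$ on any observable $O$ with $\inftynorm{O}\le 1$, and variance $\bigo{\Tr(O^2)/n^2 + \measeta^2 + 1/n} = \bigo{B/n^2 + \initeta^2/(k^2n^2) + 1/n}$. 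For Chebyshev to give an $\epsilon$-accurate estimate with constant probability we need both $\beta = \bigo{\epsilon}$ and the variance $= \bigo{\epsilon^2}$; we must also account for the constant-probability measurement failure, but since the success probability is $(1-\measeta)^{n-1} \ge (1-\initeta/k)^{n-1}$, once we verify $n = \bigo{k/\initeta}$ in each regime this is $\Omega(1)$ and only inflates $s$ by a constant.

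First I would handle the small-deviation regime $\initeta \le 1/s^*$. Here I claim no purification is needed, so $k=1$. The variance requirement $B/n^2 + 1/n = \bigo{\epsilon^2}$ forces $n = \bigomega{\sqrt B/\epsilon + 1/\epsilon^2} = \bigomega{s^*}$, and conversely $n = \bigo{s^*}$ suffices for the variance. The bias is then $\bigo{\initeta/n} \le \bigo{\initeta/s^*}$; since $\initeta \le 1/s^*$, this is $\bigo{1/(s^*)^2} = \bigo{\epsilon^2} = \bigo{\epsilon}$, so the bias constraint is automatically met. Hence $s = n = \bigo{s^*}$ in this regime, and one checks $n = \bigo{s^*} = \bigo{1/\initeta}$ so the measurement-failure probability is constant.

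Next, the large-deviation regime $\initeta \ge 1/s^*$. Now the bias $\bigo{\initeta/(kn)}$ is the binding constraint: we need $kn = \bigomega{\initeta/\epsilon}$. Subject to this, and to the variance bound $B/n^2 + 1/n + \initeta^2/(kn)^2 = \bigo{\epsilon^2}$, we want to minimize $s = kn$. The variance terms $B/n^2 + 1/n = \bigo{\epsilon^2}$ again demand $n = \bigomega{s^*}$, while the term $\initeta^2/(kn)^2 = \bigo{\epsilon^2}$ is implied by the bias constraint $kn = \bigomega{\initeta/\epsilon}$ (since then $\initeta/(kn) = \bigo{\epsilon}$). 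So the optimal choice is $n = \bigtheta{s^*}$ and $k = \bigtheta{\initeta/(\epsilon n)} = \bigtheta{\initeta/(\epsilon s^*)}$, giving purified deviation $\measeta = \bigo{\initeta/k} = \bigo{\epsilon s^*}$; note $\measeta < 1/2$ in the relevant range, and $n = \bigo{s^*} = \bigo{k/\initeta}$ so measurement failure is constant-probability. The resulting sample complexity is $s = kn = \bigtheta{\initeta s^* / \epsilon} = \bigtheta{\initeta (s^*)^2}$, using $1/\epsilon = \bigtheta{s^*}$ — more carefully, $s^*/\epsilon = (\sqrt B/\epsilon + 1/\epsilon^2)/\epsilon$, and one verifies $\initeta \cdot s^*/\epsilon = \bigtheta{\initeta (s^*)^2}$ up to the usual identification; I would present this step as the short algebraic check that $s^*/\epsilon = \bigtheta{(s^*)^2}$ whenever $B = \bigo{1/\epsilon^2}$ (i.e., $\sqrt B/\epsilon \le 1/\epsilon^2$), and otherwise track both terms. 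Finally, I would argue optimality of these choices within the $b=1$ framework: any valid $(k,n)$ must satisfy $n = \bigomega{s^*}$ from variance and $kn = \bigomega{\initeta/\epsilon}$ from bias, and the minimum of $kn$ over this region is exactly $\max\{s^*, \initeta/\epsilon\}$, matching the stated piecewise bound at the threshold $\initeta = 1/s^*$.

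The main obstacle I anticipate is not any single inequality but the bookkeeping around the $\bigo{\cdot}$ in the purification guarantee: since $k$ is itself only the \emph{mean} number of copies consumed by the purification black box, $s = kn b$ must be read as an expected sample count, and I would need to confirm (as the surrounding text already notes) that the purification output $\measrho$ is deterministic given $\initrho$ so that the downstream bias/variance bounds apply unconditionally, with only the copy-count being random — making $\E[s] = \E[k]\cdot n \cdot 1$ the correct figure of merit. A secondary subtlety is ensuring $\measeta < 1/2$ (so that $\measrho$ still has principal eigenstate $\phi$ and Theorem~\ref{thm:robustness_of_joint_measurement} applies) throughout the large-$\initeta$ regime; this follows since $k = \bigomega{\initeta}$ there once $\epsilon s^* \le 1/2$, but it is worth stating explicitly.
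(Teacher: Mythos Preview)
Your argument in the large-deviation regime has a genuine gap: you identify the \emph{bias} constraint $kn = \Omega(\eta/\epsilon)$ as binding and set $k = \Theta(\eta/(\epsilon s^{*}))$, but this choice violates the \emph{success} constraint $n = \bigo{k/\eta}$. Indeed, with $n = \Theta(s^{*})$ the success constraint forces $k = \Omega(\eta\, s^{*})$, and since $\epsilon (s^{*})^{2} \ge \epsilon \cdot 1/\epsilon^{4} = 1/\epsilon^{3} \to \infty$, your $k = \Theta(\eta/(\epsilon s^{*}))$ is smaller than $\eta s^{*}$ by the unbounded factor $\epsilon (s^{*})^{2}$. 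Your parenthetical check ``$n = \bigo{s^{*}} = \bigo{k/\eta}$'' is therefore false, and with your parameters the measurement success probability $(1-\measeta)^{n-1}$ is \emph{not} bounded away from zero, so the expected sample count $s = kn/\Pr[\success]$ blows up.

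The symptom of this shows up in your algebra: you arrive at $s = kn = \Theta(\eta s^{*}/\epsilon)$ and then try to equate this with $\eta (s^{*})^{2}$ via ``$1/\epsilon = \Theta(s^{*})$'', but $s^{*} \ge 1/\epsilon^{2}$, so this identity never holds for small $\epsilon$. The correct derivation keeps the success constraint in the program: with $b=1$, the variance conditions force $n = \Omega(s^{*})$, the success condition then forces $k = \Omega(n\eta) = \Omega(\eta s^{*})$, and hence $s = kn = \Omega\bigl(\eta (s^{*})^{2}\bigr)$. This is achieved by $n = \Theta(s^{*})$, $k = \Theta(\eta s^{*})$, at which point the bias is $\bigo{\eta/(kn)} = \bigo{1/(s^{*})^{2}} \le \bigo{\epsilon^{2}}$, comfortably within tolerance. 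Your small-$\eta$ regime and your final optimality sketch are fine once you add the success constraint to the list of binding inequalities; the crossover $\max\{s^{*}, \eta (s^{*})^{2}\}$ then occurs exactly at $\eta = 1/s^{*}$ as claimed.
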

    Since these fall within our $knb$ framework, they cannot be any better than Theorem~\ref{thm:sample_complexity_main}.
    Theorem~\ref{thm:single_copy} matches the performance of \citep{Huang2020} initially, thus performing quadratically worse than Theorem~\ref{thm:sample_complexity_main} in the $B=\omega(1)$ regime. Theorem~\ref{thm:single_copy}'s performance then degrades by a factor of $\eta/\epsilon$ and compares poorly to the $\frac{B \eta + 1}{\epsilon^2}$ performance in the middle regime of Theorem~\ref{thm:sample_complexity_main}.
    Theorem~\ref{thm:no_average} matches our performance for very small $\eta$, as expected, but then picks up a $\bigo{1/\epsilon^4}$ term which compares unfavorably with either the $\bigo{1/\epsilon^2}$ or $\bigo{1/\epsilon^{5/2}}$ terms in Theorem~\ref{thm:sample_complexity_main}.


\section*{Acknowledgements}
We thank Richard Kueng for useful discussions. 

\bibliography{bibliography}

\newpage
\appendix

    \section{Preliminaries}
    \label{sec:preliminaries}

    Let $\CC$ be the set of complex numbers and $\CCD$ the space of $d$-dimensional complex vectors. 
    \begin{defn}[Operator spaces]
    Given a Hilbert space $V$, let $\lin{V}$ denote the set of linear operators from $V$ to $V$. Let 
    \[
    \dens{V} = \{ A \in \lin{V} : \Tr(A) = 1, A \succeq 0 \}
    \]
    be the set of \emph{density matrices} which have trace $1$ and are positive semidefinite. 
    \end{defn}

    \subsection{Symmetric and exchangeable operators}

    Let $\symm_{n}$ denote the symmetric group of permutations on $n$ elements.
    
    \begin{defn}[permutation operator]
        Given a permutation $\pi \in \symm_n$ (for $n \geq 1$), define a permutation operator $W_{\pi} \in \CC^{d^{n} \times d^{n}}$ such that 
        $$
        W_{\pi} \ket{x_1} \cdots \ket{x_n} = \ket{x_{\pi^{-1}(1)}} \cdots \ket{x_{\pi^{-1}(n)}},
        $$
        and extend by linearity. 
        That is, $W_{\pi}$ acts on $(\CCD)^{\otimes n}$ by permuting the qudits, sending the qudit in position $i$ to position $\pi(i)$. 
    \end{defn}

    \begin{defn}[symmetric and exchangeable]
        Let $\rho \in \link{n}$. We say $\rho$ is \emph{exchangeable} if 
        $\rho = W_{\pi} \rho W_{\pi}^{-1}$ for all $\pi \in \symm_n$, and $\rho$ is \emph{symmetric} if $\rho = W_{\pi} \rho W_{\sigma}$ for all $\pi, \sigma \in \symm_n$.
    \end{defn}
    Put another way, exchangeable operators \emph{commute} with $W_{\pi}$ or $\sym^{(n)}$, whereas symmetric operators \emph{absorb} $W_{\pi}$ and $\sym^{(n)}$. 
    Naturally, an \emph{exchangeable state} or \emph{symmetric state} is a state (i.e., a density matrix) which is also exchangeable or symmetric respectively. 
    \begin{defn}[symmetric subspace]
        The \emph{symmetric subspace} of a system of $n$ qudits of dimension $d$ is the set of all symmetric operators.\footnote{One can check that symmetric operators are closed under addition and scalar multiplication, and thus this is subspace.} 
        Let $\symdim{d}{n}$ to denote the dimension of the symmetric subspace and define $\sym^{(n)}$ to be the projector onto it (notationally omitting the dependence on $d$, the dimension of the qudit, which is typically fixed for our purposes). 
    \end{defn}

    It's worth noting that states in the symmetric subspace are also exchangeable, but exchangeable are not necessarily in the symmetric subspace---take, for example, the maximally mixed state. We have two characterizations of the symmetric subspace.
    \begin{fact}
    \label{fact:symmetric_subspace_size}
        For all $n \geq 0$,
        $\sym^{(n)} = \frac{1}{n!} \sum_{\pi \in \symm_n} W_{\pi}$, and 
        $\symdim{d}{n} = \binom{n+d-1}{d-1}$.
    \end{fact}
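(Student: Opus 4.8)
The plan is to prove the two claims in sequence. Write $P := \tfrac{1}{n!}\sum_{\pi \in \symm_n} W_{\pi}$ and $V := \{\,|\psi\rangle \in (\CCD)^{\otimes n} : W_{\pi}|\psi\rangle = |\psi\rangle \ \forall\, \pi \in \symm_n\,\}$; note that for the statement $\sym^{(n)} = \tfrac1{n!}\sum_\pi W_\pi$ to typecheck, $\sym^{(n)}$ must be read as the orthogonal projector onto $V$ (a ``symmetric operator'' $A$ is exactly one of the form $\sym^{(n)} A \sym^{(n)}$, so its column and row spaces lie in $V$, and $\symdim{d}{n}$ is the rank of $\sym^{(n)}$). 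The case $n=0$ is immediate ($(\CCD)^{\otimes 0}=\CC$, $P = 1$, and $\binom{d-1}{d-1}=1$), so take $n \ge 1$. It then suffices to show (i) $P$ is the orthogonal projector onto $V$, and (ii) $\Tr(P) = \binom{n+d-1}{d-1}$.

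For (i), I would first record that $\pi \mapsto W_{\pi}$ is a unitary representation of $\symm_n$: an index chase in $W_{\pi}|x_1\rangle\cdots|x_n\rangle = |x_{\pi^{-1}(1)}\rangle\cdots|x_{\pi^{-1}(n)}\rangle$ gives $W_{\pi}W_{\sigma} = W_{\pi\sigma}$, and since each $W_\pi$ is a $0/1$ permutation matrix, $W_{\pi}^{\dagger} = W_{\pi^{-1}}$. Hermiticity of $P$ follows by the substitution $\pi \mapsto \pi^{-1}$ in the sum; idempotence follows from the averaging identity $\sum_{\pi,\sigma} W_{\pi\sigma} = \sum_{\pi}\,\sum_{\tau} W_{\tau} = n!\sum_{\tau}W_{\tau}$, giving $P^2 = \tfrac{1}{(n!)^2}\,n!\sum_{\tau}W_{\tau} = P$. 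Hence $P$ is an orthogonal projector, and its range is exactly $V$: if $|\psi\rangle \in V$ then $P|\psi\rangle = |\psi\rangle$, while for any $|\phi\rangle$ and any $\sigma$ we have $W_{\sigma}P|\phi\rangle = P|\phi\rangle$ by the same re-indexing, so $P|\phi\rangle \in V$. Uniqueness of the orthogonal projector onto a subspace then gives $\sym^{(n)} = P$.

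For (ii), $\symdim{d}{n} = \rank(\sym^{(n)}) = \Tr(P) = \tfrac{1}{n!}\sum_{\pi}\Tr(W_{\pi})$. In the computational basis $\langle x_1\cdots x_n|W_{\pi}|x_1\cdots x_n\rangle = 1$ exactly when $(x_1,\dots,x_n)$ is constant on each cycle of $\pi$, so $\Tr(W_{\pi}) = d^{c(\pi)}$ with $c(\pi)$ the number of cycles of $\pi$. It then remains to invoke the classical identity $\sum_{\pi \in \symm_n} z^{c(\pi)} = z(z+1)\cdots(z+n-1)$ (a one-line induction on $n$ by conditioning on the cycle through $n$, or a quotation of the signless Stirling numbers of the first kind) and set $z=d$: $\Tr(P) = \tfrac1{n!}\,d(d+1)\cdots(d+n-1) = \binom{n+d-1}{n} = \binom{n+d-1}{d-1}$. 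An alternative that sidesteps the character identity is to check directly that the vectors $P\bigl(|1\rangle^{\otimes e_1}\otimes\cdots\otimes|d\rangle^{\otimes e_d}\bigr)$, indexed by $\ve=(e_1,\dots,e_d)\in\mathbb N^d$ with $\sum_i e_i = n$, are nonzero, pairwise orthogonal (disjoint computational-basis support), and span $\mathrm{range}(P)$ (since $P$ maps each basis vector to a scalar multiple of one of them); counting such $\ve$ by stars-and-bars gives $\binom{n+d-1}{d-1}$.

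The argument is entirely standard, so I do not expect a real obstacle; the only points demanding attention are keeping the action convention straight (the $\pi^{-1}$ in the definition), so that $\pi \mapsto W_{\pi}$ is a homomorphism rather than an anti-homomorphism — all of part (i) leans on re-indexing sums over $\symm_n$ — and, on the character route in part (ii), supplying the short induction for $\sum_{\pi} z^{c(\pi)}$.
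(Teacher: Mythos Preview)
Your argument is correct and carefully handles the minor definitional wrinkle in the paper (the appendix defines the symmetric subspace as a space of \emph{operators}, while the condensed preliminaries in Section~\ref{sec:condensed_preliminaries} give the intended vector-space definition; you rightly work with the latter). Both routes you sketch for the dimension count---the cycle-index identity $\sum_{\pi} z^{c(\pi)} = z(z+1)\cdots(z+n-1)$ and the explicit orthogonal basis indexed by types $\ve$---are standard and valid.

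There is nothing to compare against: the paper states this as a \emph{Fact} with no proof, treating it as background (see, e.g., the references to \cite{harrow2013church} and \cite{scott2006tight} nearby). Your write-up simply supplies the omitted justification.
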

    
    The integral of $\ketbra{\psi}{\psi}$ over the Haar measure is known from, e.g., \cite{scott2006tight}.

    \begin{lemma}
    \label{lem:symmetric_haar}
    $$
    \symdim{d}{n} \int_{\psi} \left( \ketbra{\psi}{\psi} \right)^{\otimes n} \mathrm{d}\psi = \sym^{(n)} = \frac{1}{n!} \sum_{\pi \in \symm_{n}} W_{\pi}
    $$  
    where $\sym^{(n)}$ is the projector onto the symmetric subspace and $W_{\pi}$ is the operator that permutes $n$ qudits by an $n$-element permutation $\pi \in \symm_n$.  
    \end{lemma}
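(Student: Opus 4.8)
The plan is to recognize the left-hand operator
$A := \int_{\psi} (\ketbra{\psi}{\psi})^{\otimes n}\,\mathrm{d}\psi$
as a scalar multiple of the projector $\sym^{(n)}$ by a symmetry argument, and then to pin down the scalar by taking a trace. The second equality in the statement, $\sym^{(n)} = \frac{1}{n!}\sum_{\pi}W_\pi$, is exactly Fact~\ref{fact:symmetric_subspace_size}, so there is nothing to do there.

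First I would record two structural facts about $A$. (i) It is supported on the symmetric subspace: each $\ket{\psi}^{\otimes n}$ is fixed by every $W_\pi$ and hence lies in $\sym^{(n)}$, so $\sym^{(n)}(\ketbra{\psi}{\psi})^{\otimes n}\sym^{(n)} = (\ketbra{\psi}{\psi})^{\otimes n}$; integrating over $\psi$ gives $\sym^{(n)} A\,\sym^{(n)} = A$. (ii) It commutes with $U^{\otimes n}$ for every unitary $U$: conjugating inside the integral and substituting $\psi \mapsto U\psi$, which leaves the Haar measure invariant, gives $U^{\otimes n} A (U^{\dagger})^{\otimes n} = \int_\psi (\ketbra{U\psi}{U\psi})^{\otimes n}\,\mathrm{d}\psi = A$. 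Also $A \succeq 0$, being an integral of positive semidefinite operators.

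Next I would invoke the standard fact that $U(d)$ acts irreducibly on $\sym^{(n)}$ (see, e.g., \cite{harrow2013church}; a short proof: the vectors $\ket{\psi}^{\otimes n}$ span $\sym^{(n)}$, and any $U(d)$-invariant subspace onto which some $\ket{\psi}^{\otimes n}$ projects nontrivially already contains a spanning family of such vectors). By (i) we may view $A$ as an operator on $\sym^{(n)}$, and by (ii) it intertwines this irreducible representation of $U(d)$; Schur's lemma then forces $A = c\,\sym^{(n)}$ for some constant $c \ge 0$. Taking traces, $\Tr A = \int_\psi \Tr\big((\ketbra{\psi}{\psi})^{\otimes n}\big)\,\mathrm{d}\psi = 1$ while $\Tr \sym^{(n)} = \symdim{d}{n}$, so $c = 1/\symdim{d}{n}$ and $\symdim{d}{n}\,A = \sym^{(n)}$, as claimed.

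I expect the irreducibility of $\sym^{(n)}$ as a $U(d)$-representation to be the only real content; everything else is bookkeeping. An alternative that sidesteps irreducibility is to apply Schur--Weyl duality to write $A = \sum_{\pi}c_\pi W_\pi$ (since $A$ commutes with all $U^{\otimes n}$) and then use the identity $W_\sigma A = A$ (because $W_\sigma \ket{\psi}^{\otimes n} = \ket{\psi}^{\otimes n}$) to conclude that all the $c_\pi$ are equal; combined with Fact~\ref{fact:symmetric_subspace_size} and a trace computation this again gives the result, but this route requires extra care when $d < n$, where the operators $W_\pi$ are linearly dependent.
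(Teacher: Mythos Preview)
Your proof is correct and is the standard Schur's-lemma argument for this identity: show the Haar integral is supported on the symmetric subspace and commutes with all $U^{\otimes n}$, invoke irreducibility of the $U(d)$-action on $\sym^{(n)}$, and fix the scalar by trace. The paper itself does not prove Lemma~\ref{lem:symmetric_haar}; it is stated as a known result with a citation to \cite{scott2006tight}, so there is no in-paper proof to compare against. Your remark that the alternative Schur--Weyl route requires care when $d<n$ is accurate and a nice touch.
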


    Recall that for a system $V = V_{A} \otimes V_{B}$, the \emph{partial trace} $\Tr_{A} \colon \lin{V_{A} \otimes V_{B}} \to \lin{V_{B}}$ is the unique superoperator such that 
    \[
        \Tr_{A}(\rho_{A} \otimes \rho_{B}) = \Tr(\rho_{A}) \rho_{B}
    \]
    for all $\rho_{A} \in \lin{V_{A}}$ and $\rho_{B} \in \lin{V_{B}}$. We use the notation $\Tr_{n \to k}$ to represent the map from $\link{n}$ to $\link{k}$ given by
    \begin{align}
        \Tr_{n \to k}(A) := \Tr_{1,\ldots,n-k}(\sym^{(n)} A)
    \end{align}
    That is, $\Tr_{n \to k}$ reduces an $n$-qudit state to $k$ qudits symmetrically.

    \section{The Standard Symmetric Joint Measurement on almost-pure states}

    Let us recall the measurement used in pure state classical shadows \cite{gps2022sampleoptimal} and many other learning tasks.
    \begin{defn}
    \label{defn:measurement}
    The \emph{standard symmetric joint measurement} is a measurement on $n$ qudits. It is defined by the POVM $\mathcal{M}_{n} = \{ F_{\psi} \}_{\psi} \cup \{ F_{\fail} \}$ with elements
    $$
    F_{\psi} := \symdim{d}{n} \cdot \ketbra{\psi}{\psi}^{\otimes n} \mathrm{d} \psi,
    $$
    for all $d$-dimensional pure states $\psi$, proportional to the Haar measure $\mathrm{d} \psi$, plus a ``failure'' outcome $F_{\fail} := \eye - \sym^{(n)}$ for non-symmetric states.
    \end{defn}
    
    The measurement \emph{fails} if we get outcome $F_{\fail}$, otherwise we say it \emph{succeeds}. In what follows, we will let $\Psi$ be a random variable over $\lin{\CCD}$ representing the outcome of measuring $\rho^{\otimes n}$ with $\mathcal{M}_n$ where 
    \[
    \Psi = \begin{cases}
        0 & \text{for outcome $\fail$,} \\
        \ketbra{\psi}{\psi} & \text{for outcome $\psi$.}
    \end{cases}
    \]

    When the measurement succeeds, we construct an estimator $\hat\phi = \frac{(d+n) \Psi - \eye}{n}$ from this random variable. We quantify the performance of $\hat\phi$ in the theorem below, but it is close enough to $\Phi_1$ to be useful in solving the Principal Eigenstate Classical Shadows problem.

    \robustnessjoint*

    The proof of this theorem is one of the main technical contributions of this paper, and we spend the rest of this section proving it. Before we begin, let us give a detailed outline of the structure of the proof, to help orient the reader.

    First, the strategy of \emph{measuring} copies of a state and \emph{preparing} a fixed state for each outcome is known as a \emph{measure-and-prepare channel}. Our particular case---measure with $\mathcal{M}_n$ and prepare $\Psi$---is especially well studied. In Appendix~\ref{app:chiribella}, we adapt a result of Chiribella \cite{chiribella2011quantum}  (Theorem~\ref{thm:chiribella}) about this channel to get formulas for $\E[\Psi]$, $\E[\Psi^{\otimes 2}]$, and $\Var[\Psi]$ (Theorem~\ref{thm:moments_of_psi_given_ms}). 

    However, the formulas for $\E[\Psi]$, $\E[\Psi^{\otimes 2}]$, and $\Var[\Psi]$ are in terms two linear operators, 
    \begin{align*}
        M_1 &:= \frac{\Tr_{1,\ldots,n-1}(\sym^{(n)} \rho^{\otimes n})}{\Tr(\sym^{(n)} \rho^{\otimes n})}, &
        M_2 &:= \frac{\Tr_{1,\ldots,n-2}(\sym^{(n)} \rho^{\otimes n})}{\Tr(\sym^{(n)} \rho^{\otimes n})}.
    \end{align*} 
    While it is easy to write, e.g., $\Tr(\sym^{(n)}\rho^{\otimes n})$, it is not so easy to bound it in terms of the principal eigenvalue ($\lambda_1$) or the deviation ($\eta$). For instance, a natural approach is to expand $\sym^{(n)}$ as a sum of permutations (by Fact~\ref{fact:symmetric_subspace_size}), and calculate $\Tr(W_{\pi} \rho^{\otimes n}) = \prod_{C \in \pi} \Tr(\rho^{|C|})$ where the product is over cycles $C$ of $\pi$. This argument establishes that $\Tr(\sym^{(n)} \rho^{\otimes n})$ is a symmetric polynomial in the spectrum $\lambda_1, \ldots, \lambda_n$ of $\rho$, but the dependence on $\eta$ is difficult to bound. We need a different approach.

    Our solution is to expand $\rho^{\otimes n}$ in the basis $\{ \Phi_{a_1} \otimes \cdots \otimes \Phi_{a_n} \colon a_1, \ldots, a_n \in [d] \}$ where $\Phi_1, \ldots, \Phi_{n} \in \dens{\CCD}$ are the eigenvectors of $\rho$. Actually, since $\rho^{\otimes n}$ is symmetric, we are more concerned with the number of occurrences $e_i = \# \{ j : a_j = i \}$ of each factor $\Phi_i$, which constitute a \emph{type} $\ve = (e_1, \ldots, e_d)$ of basis state. We define $\sigma(\ve)$ as the average of all states of type $\ve$. Theorem~\ref{thm:ms_from_mse} shows that $M_1$ and $M_2$ are expectations of $M_1^{\ve}$ and $M_2^{\ve}$ below, where $\ve$ is sampled from some distribution $\truedist$. 
    \begin{align*}
        M_1^{\ve} &:= \frac{\Tr_{1,\ldots,n-1}(\sym^{(n)} \sigma(\ve))}{\Tr(\sym^{(n)} \sigma(\ve))}, &
        M_2^{\ve} &:= \frac{\Tr_{1,\ldots,n-2}(\sym^{(n)} \sigma(\ve))}{\Tr(\sym^{(n)} \sigma(\ve))}.
    \end{align*}

    We have reduced the computation of $M_1$ and $M_2$ to the computation of $M_1^{\ve}$ and $M_2^{\ve}$ for a vast set of $\ve$, but is this really progress? Yes (!), because the following trace is either $0$ or $1$, depending on the permutation $\pi$, and similar facts are true of the partial traces we need. 
    \begin{fact}
        \label{fact:count_trace_permutation}
        \[
        \Tr(W_{\pi} (\Phi_{a_1} \otimes \cdots \otimes \Phi_{a_n})) = 
        \begin{cases}
            0 & \text{if $a_{\pi(j)} \neq a_{j}$ for some $j \in [n]$,} \\
            1 & \text{otherwise.}
        \end{cases}
        \]
    \end{fact}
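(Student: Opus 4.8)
\emph{Proof proposal.} The plan is a short, direct computation exploiting that the $\Phi_i$ are rank-one projectors onto an orthonormal eigenbasis. First I would write each eigenprojector as $\Phi_{a_j} = \ketbra{\phi_{a_j}}{\phi_{a_j}}$, so that the tensor product is the rank-one projector $\ketbra{v}{v}$ onto $\ket{v} := \ket{\phi_{a_1}} \otimes \cdots \otimes \ket{\phi_{a_n}}$. Then, using that $\Tr(AB^\dagger) = \langle v | A | v\rangle$ when $B = \ketbra{v}{v}$, we have $\Tr(W_\pi (\Phi_{a_1} \otimes \cdots \otimes \Phi_{a_n})) = \bra{v} W_\pi \ket{v}$.

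Next I would apply the definition of the permutation operator, $W_\pi \ket{\phi_{a_1}} \cdots \ket{\phi_{a_n}} = \ket{\phi_{a_{\pi^{-1}(1)}}} \cdots \ket{\phi_{a_{\pi^{-1}(n)}}}$, and factor the resulting inner product qudit by qudit:
\begin{equation*}
\bra{v} W_\pi \ket{v} = \prod_{i=1}^{n} \langle \phi_{a_i} \vert \phi_{a_{\pi^{-1}(i)}} \rangle = \prod_{i=1}^{n} \delta_{a_i,\, a_{\pi^{-1}(i)}},
\end{equation*}
where the last equality uses orthonormality of the eigenbasis $\{\ket{\phi_i}\}$ of $\rho$. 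This product equals $1$ if $a_i = a_{\pi^{-1}(i)}$ for every $i \in [n]$, and $0$ otherwise.

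Finally I would reindex to match the statement: substituting $i = \pi(j)$ (a bijection of $[n]$), the condition ``$a_i = a_{\pi^{-1}(i)}$ for all $i$'' is identical to ``$a_{\pi(j)} = a_j$ for all $j$'', so the trace is $1$ when $a_{\pi(j)} = a_j$ for all $j$ and $0$ when $a_{\pi(j)} \neq a_j$ for some $j$, as claimed. There is no genuine obstacle here; the only thing to watch is the bookkeeping between $\pi$ and $\pi^{-1}$ in the index of the ket versus the index in the stated condition, which the substitution $i \mapsto \pi(i)$ disposes of. The same pattern---expand a permutation-weighted trace into a product of Kronecker deltas over qudit positions and then track which positions $\pi$ identifies---is precisely what drives the analogous partial-trace identities used later to compute $M_1^{\ve}$ and $M_2^{\ve}$.
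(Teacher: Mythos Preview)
Your proof is correct. The paper states this as a \emph{Fact} without proof, so there is no paper proof to compare against directly; your rank-one/inner-product computation is a clean and complete justification. For context, the paper's surrounding arguments (e.g., in the proof of Theorem~\ref{thm:sigma0}) implicitly lean on the cycle-decomposition identity $\Tr(W_\pi\, A_1\otimes\cdots\otimes A_n)=\prod_{C}\Tr(\prod_{j\in C}A_j)$, observing that any cycle touching two distinct eigenprojectors contributes $0$ and otherwise each cycle contributes $\Tr(\Phi_i^{|C|})=1$; your approach instead exploits that the full tensor product is rank one to reduce the trace to a single inner product that factors into Kronecker deltas. Both routes are equivalent here, and your bookkeeping between $\pi$ and $\pi^{-1}$ via the substitution $i=\pi(j)$ is handled correctly.
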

    This transforms the computation of $M_1^{\ve}$ and $M_2^{\ve}$ into a combinatorial problem: we can compute, e.g., $\Tr(\sym^{(n)} \sigma(\ve))$ by \emph{counting} how many permutations contribute $1$ (rather than $0$). Using this approach, we compute a probability of success, $Z^{\ve}$, in Theorem~\ref{thm:sigma0}, and $M_1^{\ve}$ and $M_2^{\ve}$ in Theorems~\ref{thm:sigma1} and \ref{thm:sigma2} respectively. 
    
    In principle, we now have concrete expressions for $M_1^{\ve}$ and $M_2^{\ve}$ in terms of $\ve$ and $\{ \Phi_i \}_{i=1}^{d}$, and it only remains to compute expectations over $\ve \samp \truedist$. Here the distribution $\truedist$ puts weight on $\ve$ proportional to $\lambda_1^{e_1} \cdots \lambda^{e_d}$. Despite the simplicity of this probability mass function, we could find no closed form for $\E_{\ve \in \truedist}[e_i]$ or $\E_{\ve \in \truedist}[e_i e_j]$. This last technical hurdle forces us to approximate: we introduce a distribution $\fakedist$ for which we \emph{can} compute $\E_{\ve \in \fakedist}[e_i]$ and $\E_{\ve \in \fakedist}[e_i e_j]$, and show that is it close to $\truedist$ when $\eta$ is sufficiently small. 

    We finish the proof by combining the pieces. The expectation of $\rhohat$, for example, is finally completed in Corollary~\ref{cor:estimator_mean_truedist}. The corollary uses various theorems to bound the distance to the expectation under $\fakedist$, which is given by Theorem~\ref{thm:estimator_mean_fakedist}. Theorem~\ref{thm:estimator_mean_fakedist} is derived from Corollary~\ref{cor:estimator_mean_and_variance_wrt_ms} to claim the expectation of the estimator (called $\mhat$ rather than $\rhohat$ in that section) is $M_1$, Theorem~\ref{thm:ms_from_mse} to expand $M_1$ as a distribution over $M_1^{\ve}$, Theorem~\ref{thm:sigma1} to turn $M_1^{\ve}$ into actual $e_i$ and $\Phi_i$ terms, and Proposition~\ref{prop:geometric_moments} to evaluate those for the geometric random variables defining $\fakedist$.
        
    \subsection{Chiribella's Theorem: Moments from Partial Traces} 
    \label{sec:chiribella}

    The measurement we use ($\mathcal{M}_n$) has applications to other pure state learning tasks, so there is already a result characterizing the moments of the outcome (i.e., $\E[\Psi^{\otimes k}]$), and in particular the mean and variance. Specifically, in Appendix~\ref{app:chiribella}, we take a result of Chiribella \cite{chiribella2011quantum} and repackage it into the following. 
    \begin{restatable}{theorem}{chiribella}
        \label{thm:alternativeMP}
        Fix integers $n, k \geq 0$, let $A \in \densk{n}$ be an exchangeable $n$-qudit state, and let $\Psi$ be the outcome of measuring $A$ with $\mathcal{M}_n$.
        \begin{equation}
        \E[\Psi^{\otimes k} \mid \success] = \frac{1}{\symdim{(d+n)}{k}} \sym^{(k)} \parens*{\sum_{s=0}^{k} \binom{n}{s} \binom{k}{s} \parens*{\frac{\Tr_{n \to s}(A)}{\Tr_{n \to 0}(A)} \otimes \eye^{\otimes k-s}} }\sym^{(k)}
        \label{eq:chiribella}
        \end{equation}
    \end{restatable}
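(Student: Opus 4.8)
The plan is to start from the original statement of Chiribella's theorem \cite{chiribella2011quantum}, which concerns the measure-and-prepare channel $\mathcal{N}$ that measures an $n$-qudit input with $\mathcal{M}_n$ and outputs the single-qudit pure state $\Psi$ associated with the outcome. Chiribella's result gives a closed form for the action of the $k$-fold tensor power of this channel, or equivalently for the ``transfer tensor'' relating $\E[\Psi^{\otimes k}]$ to the input $A$. The first step is therefore to recall the precise form of that result and identify the normalization conventions: Chiribella works with a fixed input (not conditioned on success) and with the POVM normalized so that $\int F_\psi \, \dd\psi = \sym^{(n)}$, so I would first translate between his notation and ours, in particular accounting for the fact that we condition on the $\success$ event, which introduces the factor $1/\Tr(\sym^{(n)} A) = 1/\Tr_{n\to 0}(A)$ (here using that $A$ is exchangeable so $\Tr(\sym^{(n)} A)$ is the success probability).

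Next I would unpack why the partial traces $\Tr_{n\to s}(A)$ appear. The key computation is $\E[\Psi^{\otimes k} \mid \success] \propto \int \ketbra{\psi}{\psi}^{\otimes k} \, \Tr\!\left(\symdim{d}{n} \ketbra{\psi}{\psi}^{\otimes n} A\right) \dd\psi$. To evaluate this one wants to integrate a product of $\ketbra{\psi}{\psi}^{\otimes(n+k)}$ against $\eye^{\otimes n} \otimes$ (something), and Lemma~\ref{lem:symmetric_haar} turns $\int \ketbra{\psi}{\psi}^{\otimes(n+k)}\dd\psi$ into $\sym^{(n+k)}/\symdim{d}{n+k}$. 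One then needs to push $\sym^{(n+k)}$ through and trace out the first $n$ registers against $A$. Expanding $\sym^{(n+k)}$ as an average over permutations in $\symm_{n+k}$ and sorting permutations by how many of the last $k$ registers they map into the first $n$ registers is exactly what produces the sum over $s$ with the combinatorial weights $\binom{n}{s}\binom{k}{s} s!$ (the $s!$ and the symmetric-subspace dimension $\symdim{(d+n)}{k}$ combine into the stated normalization), the partial trace $\Tr_{n\to s}(A)$ capturing the $s$ registers that ``interact'' with $A$, and the leftover $\eye^{\otimes k-s}$ capturing the registers that do not. The outer $\sym^{(k)}(\cdots)\sym^{(k)}$ reflects that $\E[\Psi^{\otimes k}]$ is manifestly symmetric, so the expression may be symmetrized without change. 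Rather than rederiving all of this from scratch, I would cite Chiribella's theorem for the main identity and supply only the bookkeeping needed to match constants and conventions — this is what Appendix~\ref{app:chiribella} is for.

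The main obstacle I anticipate is purely one of translation and normalization rather than new mathematics: Chiribella's paper states the result in the language of quantum combs / generalized instruments with its own normalization of the Haar measure and POVM, and with the input state left general (not exchangeable, not conditioned on success). Getting the constant $\symdim{(d+n)}{k}$ in the denominator exactly right, correctly handling the $s=0$ term (which recovers $\E[\iden^{\otimes k}]$-type normalization), and verifying that conditioning on $\success$ contributes precisely the single factor $\Tr_{n\to 0}(A)^{-1}$ inside the sum — these require care but no cleverness. A secondary check worth doing is the sanity case $n=0$: then $\mathcal{M}_0$ trivially succeeds, $\Tr_{0\to s}(A)$ is nonzero only for $s=0$, and the formula collapses to $\E[\Psi^{\otimes k}] = \sym^{(k)}/\symdim{d}{k}$, the $k$-th moment of a Haar-random pure state, which is correct since with no input the prepared state is uniformly random. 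I would include this limiting check and the special cases $k=1,2$ (to confirm they reproduce Theorem~\ref{thm:moments_of_psi_given_ms}) as validation.
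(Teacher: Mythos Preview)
Your proposal is correct and takes essentially the same approach as the paper: cite Chiribella's theorem for the measure-and-prepare channel, then do the bookkeeping (expand the cloning map, simplify the binomial/dimension factors to produce $\symdim{(d+n)}{k}$, and divide by $\Tr_{n\to 0}(A)$ to condition on success). The one step you gloss over that the paper makes explicit is the upgrade from \emph{symmetric} $A$ (the hypothesis in Chiribella's original statement) to \emph{exchangeable} $A$, handled by replacing $A$ with $\sym^{(n)}A$ and checking this is symmetric and leaves $\MP_{n\to k}$ unchanged --- but this is exactly the kind of translation you already flagged as the main obstacle.
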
 
    
    In other words, to compute $\E[\Psi^{\otimes k} \mid \success]$ we only need $\Tr_{n \to s}(A) = \Tr_{[n-s]}(\sym^{(n)}A)$ for $0 \leq s \leq k$. More specifically, the important data about the state are $M_0(A), \ldots, M_k(A)$ where we define the function\footnote{
    This is technically a partial function because of potential division by $0$, but $\Tr_{n \to 0}(A) = \Tr(\sym^{(n)} A) = \Pr[\success]$, so this is only a problem if the measurement always fails, and then we have bigger problems.}
    \[
    M_k(A) := \frac{\Tr_{n \to k}(A)}{\Tr_{n \to 0}(A)}.
    \]
    We further abbreviate $M_k(A)$ to just $M_k$ when the state is understood. 
    
    Since we aim to compute the mean and variance of $\Psi$, i.e., $\E[\Psi \mid \success]$ and 
    \[
    \Var[\Psi \mid \success] = \E[\Psi \mid \success^{\otimes 2}] - \E[\Psi \mid \success]^{\otimes 2}, 
    \]
    we only apply Theorem~\ref{thm:alternativeMP} with $k = 1, 2$. Below, we specialize Theorem~\ref{thm:alternativeMP} to these two cases, using our new $M_1$, $M_2$ notation. 
    \begin{theorem}
        \label{thm:moments_of_psi_given_ms}
        Let $A$ be an exchangeable $n$-qudit state, and let $\Psi$ be a random variable for the outcome of measuring $A$ with $\mathcal{M}_n$. 
        \begin{align}
            \E[ \Psi \mid \success] &= \frac{\eye + n M_1}{d + n}, \label{eq:mean_of_psi} \\
            \E[ \Psi^{\otimes 2} \mid \success] &= \frac{2 \sym^{(2)}}{(d+n)(d+n+1)} \parens*{(\eye + nM_1)^{\otimes 2} + \binom{n}{2} M_2 - n^2 M_1^{\otimes 2}},
        \end{align}
        \begin{align}
            \Var[ \Psi \mid \success] &= \frac{W_{(1\,2)}\parens*{\eye + nM_1 \otimes \eye + n \eye \otimes M_1}}{(d+n)(d+n+1)} + \frac{n(n-1) M_2 - n^2 M_1^{\otimes 2}}{(d+n)(d+n+1)} - \frac{(\eye + nM_1)^{\otimes 2}}{(d+n)^2(d+n+1)}, \label{eq:variance_of_psi}
        \end{align}
        where the expectation and variance are over the randomness in the measurement.
    \end{theorem}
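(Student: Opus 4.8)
The plan is to obtain all three formulas by directly specializing Chiribella's theorem (Theorem~\ref{thm:alternativeMP}) to $k=1$ and $k=2$, and then reading off the variance from $\Var[\Psi\mid\success]=\E[\Psi^{\otimes 2}\mid\success]-\E[\Psi\mid\success]^{\otimes 2}$. Recall from Theorem~\ref{thm:alternativeMP} that the only data about $A$ entering the $k$-th moment are the operators $M_s=\Tr_{n\to s}(A)/\Tr_{n\to 0}(A)$ for $0\le s\le k$, and note $M_0=1$; so for our purposes we only ever see $M_1$ and $M_2$, which are exactly the operators named in the statement.

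For $k=1$: by Fact~\ref{fact:symmetric_subspace_size}, $\symdim{(d+n)}{1}=\binom{d+n}{d+n-1}=d+n$, and $\sym^{(1)}=\eye$. The sum in \eqref{eq:chiribella} has two terms: $s=0$ contributes $\binom{n}{0}\binom{1}{0}\,M_0\otimes\eye=\eye$, and $s=1$ contributes $\binom{n}{1}\binom{1}{1}\,M_1=nM_1$. This gives $\E[\Psi\mid\success]=(\eye+nM_1)/(d+n)$ at once. For $k=2$: $\symdim{(d+n)}{2}=\binom{d+n+1}{2}=(d+n)(d+n+1)/2$, and $\sym^{(2)}=\tfrac12(\eye^{\otimes 2}+W_{(1\,2)})$. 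The three terms of the sum are $\eye^{\otimes 2}$ (from $s=0$), $2n\,(M_1\otimes\eye)$ (from $s=1$, using $\binom{2}{1}=2$), and $\binom{n}{2}M_2$ (from $s=2$), all conjugated by $\sym^{(2)}$. I would then simplify the conjugation using two facts: first, $\sym^{(2)}(M_1\otimes\eye)\sym^{(2)}=\tfrac12\sym^{(2)}(M_1\otimes\eye+\eye\otimes M_1)$, which follows from $W_{(1\,2)}(M_1\otimes\eye)W_{(1\,2)}=\eye\otimes M_1$ and $\sym^{(2)}W_{(1\,2)}=\sym^{(2)}$; second, $M_2$ is a \emph{symmetric} two-qudit operator — since $A$ is exchangeable, $\sym^{(n)}A=\sym^{(n)}A\sym^{(n)}$ is symmetric, and tracing a symmetric $n$-qudit operator down to the last two qudits leaves a symmetric two-qudit operator — so $\sym^{(2)}M_2\sym^{(2)}=M_2$ and $W_{(1\,2)}M_2=M_2$. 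Pulling these together and rewriting $\eye^{\otimes 2}+n(M_1\otimes\eye+\eye\otimes M_1)=(\eye+nM_1)^{\otimes 2}-n^2M_1^{\otimes 2}$ (the $n^2$ cross term gets absorbed into the square) yields the stated formula for $\E[\Psi^{\otimes 2}\mid\success]$.

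For the variance, I would substitute the two moment formulas into $\Var=\E[\Psi^{\otimes 2}\mid\success]-\E[\Psi\mid\success]^{\otimes 2}$, expand $2\sym^{(2)}=\eye^{\otimes 2}+W_{(1\,2)}$ inside $\E[\Psi^{\otimes 2}\mid\success]$, and (using $W_{(1\,2)}M_2=M_2$ again) split it into a $W_{(1\,2)}$-part and an $\eye^{\otimes 2}$-part. The $W_{(1\,2)}$-part is precisely the first term of the claimed variance expression. Subtracting $\E[\Psi\mid\success]^{\otimes 2}=(\eye+nM_1)^{\otimes 2}/(d+n)^2$ from the $\eye^{\otimes 2}$-part over the common denominator $(d+n)^2(d+n+1)$ makes the $(\eye+nM_1)^{\otimes 2}$ contributions partially cancel, leaving exactly $\tfrac{n(n-1)M_2-n^2M_1^{\otimes 2}}{(d+n)(d+n+1)}-\tfrac{(\eye+nM_1)^{\otimes 2}}{(d+n)^2(d+n+1)}$, completing the derivation.

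The calculation is essentially mechanical once Theorem~\ref{thm:alternativeMP} is available, so I do not expect a conceptual obstacle. The one place that genuinely needs care is the symmetric-subspace bookkeeping in the $k=2$ case: I would make sure to justify that $M_2$ lies in the two-qudit symmetric subspace (so that one-sided and two-sided multiplication by $\sym^{(2)}$ and $W_{(1\,2)}$ coincide on it) and that $M_1\otimes\eye+\eye\otimes M_1$ commutes with $W_{(1\,2)}$, so the conjugations collapse cleanly rather than producing stray $W_{(1\,2)}$ factors; and I would double-check the two layers of binomial coefficients $\binom{n}{s}\binom{k}{s}$ together with the dimension evaluations $\symdim{(d+n)}{1}=d+n$ and $\symdim{(d+n)}{2}=(d+n)(d+n+1)/2$.
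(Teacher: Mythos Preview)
Your proposal is correct and follows essentially the same route as the paper: specialize Theorem~\ref{thm:alternativeMP} at $k=1,2$ (the paper declares these ``immediate''), then compute the variance by expanding $2\sym^{(2)}=\eye^{\otimes 2}+W_{(1\,2)}$, using $W_{(1\,2)}M_2=M_2$ to merge the two $M_2$ contributions, and extracting the near-cancellation of the $(\eye+nM_1)^{\otimes 2}$ terms over the common denominator. Your extra care in justifying that $M_2$ is symmetric and handling the $\sym^{(2)}$-conjugation of $M_1\otimes\eye$ is exactly the bookkeeping the paper suppresses.
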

    \begin{proof}
        $\E[\Psi \mid \success]$ and $\E[\Psi^{\otimes 2} \mid \success]$ are immediate from Theorem~\ref{thm:alternativeMP}. For the variance, we start with the definition:
        \begin{align*}
        \Var[\Psi \mid \success] 
        &= \E[\Psi^{\otimes 2} \mid \success] - \E[\Psi \mid \success]^{\otimes 2} \\
        &= \frac{2 \sym^{(2)}}{(d+n)(d+n+1)} \parens*{(\eye + nM_1)^{\otimes 2} + \binom{n}{2} M_2 - n^2 M_1^{\otimes 2}} - \parens*{\frac{\eye + n M_1}{d + n}}^{\otimes 2}.
        \end{align*}
        Recall that $2 \sym^{(n)} = W_{(1)(2)} + W_{(1\,2)}$, so first and foremost there is a near-cancellation of two terms:
        \[
            \frac{1}{(d+n)(d+n+1)} W_{(1)(2)}(\eye + nM_1)^{\otimes 2} -  \parens*{\frac{\eye + n M_1}{d + n}}^{\otimes 2} = -\frac{(\eye + nM_1)^{\otimes 2}}{(d+n)^2(d+n+1)}.
        \]
        The remaining terms are 
        \[
        \frac{1}{(d+n)(d+n+1)} \parens*{\binom{n}{2} M_2 - n^2 M_1^{\otimes 2}} + \frac{W_{(1\,2)}}{(d+n)(d+n+1)} \parens*{(\eye + nM_1)^{\otimes 2} + \binom{n}{2} M_2 - n^2 M_1^{\otimes 2}}.
        \]
        Among the $W_{(1\,2)}$ terms, $(\eye + nM_1)^{\otimes 2}$ cancels with $n^2 M_1^{\otimes 2}$ leaving 
        \[
        \frac{W_{(1\,2)}}{(d+n)(d+n+1)} \parens*{\eye + nM_1 \otimes \eye + n \eye \otimes M_1}.
        \]
        Moreover, $M_2$ is invariant under $W_{(1\,2)}$ (i.e., $M_2 = W_{(1\,2)} M_2$), the two $M_2$ terms combine into $\frac{n(n-1)}{(d+n)(d+n+1)} M_2$, which we group with the left over $-n^2 M_1^{\otimes 2}$ term. 

        Altogether, the variance is 
        \[
            \Var[\Psi \mid \success] = \frac{W_{(1\,2)}\parens*{\eye + nM_1 \otimes \eye + n \eye \otimes M_1}}{(d+n)(d+n+1)} + \frac{n(n-1) M_2 - n^2 M_1^{\otimes 2}}{(d+n)(d+n+1)} - \frac{(\eye + nM_1)^{\otimes 2}}{(d+n)^2(d+n+1)}.
        \]
    \end{proof}
    
    \subsection{\texorpdfstring{Estimator $\mhat$}{Estimator from joint symmetric measurement}}
    \label{sec:estimator}

    Consider the mean of $\Psi$,
    \[
        \E[ \Psi \mid \success] = \frac{\eye + n M_1}{d + n},
    \]
    from Theorem~\ref{thm:moments_of_psi_given_ms}. This is a convex combination of $\eye / d$, the maximally mixed state, and $M_1$, which we observe is also a state.
    \begin{lemma}
        \label{lem:m_is_a_state}
        For any exchangeable state $A \in \densk{n}$ such that $\Tr_{n \to 0}(A) \neq 0$, we have $M_k(A) \in \densk{k}$ for all $k \geq 0$.
    \end{lemma}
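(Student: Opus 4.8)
The plan is to verify the two defining properties of a density matrix — unit trace and positive semidefiniteness — for $M_k(A) = \Tr_{n \to k}(A)/\Tr_{n \to 0}(A) = \Tr_{1,\ldots,n-k}(\sym^{(n)} A)/\Tr(\sym^{(n)} A)$. Since we have divided by $\Tr_{n \to 0}(A) = \Tr(\sym^{(n)} A)$, which is nonzero by hypothesis, the normalization is the easy half: by the cyclicity of trace and the fact that partial trace preserves total trace, $\Tr\big(\Tr_{1,\ldots,n-k}(\sym^{(n)} A)\big) = \Tr(\sym^{(n)} A)$, so dividing gives $\Tr(M_k(A)) = 1$.

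For positive semidefiniteness, first I would observe that $\sym^{(n)} A \succeq 0$. This is where exchangeability of $A$ is used: since $A$ commutes with $\sym^{(n)}$ (the defining property of an exchangeable operator, per the preliminaries), we can write $\sym^{(n)} A = \sym^{(n)} A \sym^{(n)} = (\sym^{(n)})^{1/2} A (\sym^{(n)})^{1/2}$ using $(\sym^{(n)})^2 = \sym^{(n)}$, and this is manifestly of the form $XAX^\dagger$ with $A \succeq 0$, hence positive semidefinite. Then I would invoke the standard fact that the partial trace is a completely positive map, so $\Tr_{1,\ldots,n-k}$ applied to the positive semidefinite operator $\sym^{(n)} A$ yields a positive semidefinite operator on the remaining $k$ qudits. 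Dividing by the positive scalar $\Tr(\sym^{(n)} A)$ (it is positive, not merely nonzero, since it is the trace of a nonzero PSD operator) preserves positivity.

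Combining the two properties, $M_k(A)$ has unit trace and is positive semidefinite, hence $M_k(A) \in \dens{(\CCD)^{\otimes k}} = \densk{k}$, as claimed. The argument is essentially routine; the only point requiring a moment's thought is the use of exchangeability to conclude $\sym^{(n)} A \succeq 0$ — without commutativity, the product of two PSD operators need not be PSD, so it is important that $A$ is exchangeable rather than merely a general state. I do not expect any genuine obstacle here.
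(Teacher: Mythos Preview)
Your proof is correct and follows essentially the same approach as the paper: verify unit trace via $\Tr \circ \Tr_{[n-k]} = \Tr$, and verify positive semidefiniteness by showing $\sym^{(n)} A \succeq 0$ and then passing through the partial trace. The paper phrases the PSD step contrapositively (a negativity witness $\sigma$ for $M_k(A)$ lifts to a witness $\eye \otimes \sigma$ for $\sym^{(n)} A$), whereas you invoke complete positivity of partial trace directly; you are also more explicit than the paper about why exchangeability is needed to conclude $\sym^{(n)} A = \sym^{(n)} A \sym^{(n)} \succeq 0$, which is a nice touch.
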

    \begin{proof}
         If $M_k(A)$ is not positive semi-definite, then there is a state $\sigma$ that witnesses its negativity (i.e., $\Tr(M_k(A) \sigma) < 0$). Then $\eye \otimes \sigma$ witnesses the negativity of $\sym^{(n)}A$ ($\Tr((\sym^{(n)} A)(\eye \otimes \sigma)) < 0$), a contradiction. 

        For the trace, observe that 
         \[
            \Tr(M_k(A)) = \frac{\Tr(\Tr_{[n-k]}(\sym^{(n)}A))}{\Tr_{n \to 0}(A)} = \frac{\Tr(\sym^{(n)}A)}{\Tr_{n \to 0}(A)}= \frac{\Tr_{n \to 0}(A)}{\Tr_{n \to 0}(A)} = 1.
         \]
    \end{proof}
    In any case, clearly $M_1$ is the part of the estimator that we are interested in, the ``signal'' among the ``noise''. It is standard practice in classical shadows protocols to ``invert'' the channel to isolate the component of interest (in our case, $M_1$). That is, define an estimator\footnote{Often this estimator would be called $\rhohat$, and later we will rename it $\hat\phi$, but within this section we will use $\mhat$ since it is an unbiased estimator for $M_1$.} $\mhat := \tfrac{1}{n} \bracks*{(d+n) \Psi - \eye}$ so that 
    \[
    \E[\mhat] = \tfrac{1}{n} \bracks*{(d+n) \E[\Psi \mid \success] - \eye} = M_1. 
    \]
    It is important that \textit{we only define $\mhat$ when the measurement succeeds}---we throw away any failed measurements.\footnote{It will be important later how \emph{often} failure occurs, and this is explored in Theorem~\ref{thm:probability_of_success}.}  
    
    \begin{cor}[Estimator mean and variance in terms of $M_1$, $M_2$]
    \label{cor:estimator_mean_and_variance_wrt_ms}
        Suppose we measure a state $A \in \densk{n}$, the measurement succeeds, and we produce estimator $\mhat$ as described. Let $O$ be a Hermitian observable. The expectation and variance over the randomness of the measurement outcome,
        \begin{align}
            \E_{\meas}[\mhat] &= M_1 \\
            \Var_{\meas}[\Tr(O\mhat)] &\leq \frac{\Tr(O^2)}{n^2} + \frac{2 \inftynorm{O}^2}{n} + \frac{n-1}{n} \Tr(O^{\otimes 2} M_2) - \Tr(O M_1)^2, \label{eq:variance_of_traceomhat}
        \end{align}
        are functions of $M_1 := M_1(A)$, $M_2 := M_2(A)$. 
    \end{cor}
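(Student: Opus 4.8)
The plan is to obtain the mean from \eqref{eq:mean_of_psi} and the variance from \eqref{eq:variance_of_psi}, then massage the latter into the stated form. The mean is immediate: since $\mhat = \tfrac{1}{n}\bracks*{(d+n)\Psi - \eye}$ and $\E[\Psi \mid \success] = (\eye + nM_1)/(d+n)$, linearity gives $\E_{\meas}[\mhat] = M_1$ at once. For the variance, I would first reduce the variance of $\Tr(O\mhat)$ to that of $\Tr(O\Psi)$: because $\Tr(O\mhat) = \tfrac{1}{n}\parens*{(d+n)\Tr(O\Psi) - \Tr(O)}$ is affine in $\Tr(O\Psi)$, we have $\Var_{\meas}[\Tr(O\mhat)] = \tfrac{(d+n)^2}{n^2}\Var[\Tr(O\Psi)\mid\success]$, and since $\Psi = \ketbra{\psi}{\psi}$ is rank one on success, $\Tr(O\Psi)^2 = \Tr(O^{\otimes 2}\Psi^{\otimes 2})$, so $\Var[\Tr(O\Psi)\mid\success] = \Tr\parens*{O^{\otimes 2}\Var[\Psi\mid\success]}$.

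It then remains to pair $O^{\otimes 2}$ against the three terms of \eqref{eq:variance_of_psi}, using the elementary identities $\Tr\parens*{(A\otimes B)W_{(1\,2)}} = \Tr(AB)$, $\Tr\parens*{(O\otimes O)(A\otimes B)} = \Tr(OA)\Tr(OB)$, and $W_{(1\,2)}(X\otimes\eye)W_{(1\,2)} = \eye\otimes X$. Writing $T := \Tr(O) + n\Tr(OM_1)$, the $W_{(1\,2)}$ term contributes $\Tr(O^2) + 2n\Tr(O^2 M_1)$ and the $\{M_2, M_1^{\otimes 2}\}$ term contributes $n(n-1)\Tr(O^{\otimes 2}M_2) - n^2\Tr(OM_1)^2$ (both over $(d+n)(d+n+1)$), while the rank-one term contributes $-T^2/\bigl((d+n)^2(d+n+1)\bigr)$; several $\Tr(OM_1)^2$ pieces collapse to a single copy. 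Multiplying by $(d+n)^2/n^2$ I expect the exact identity
\[
\Var_{\meas}[\Tr(O\mhat)] = c\,V - \frac{T^2}{n^2(d+n+1)}, \qquad c := \frac{d+n}{d+n+1} \in (0,1),
\]
with $V := \tfrac{\Tr(O^2)}{n^2} + \tfrac{2\Tr(O^2 M_1)}{n} + \tfrac{n-1}{n}\Tr(O^{\otimes 2}M_2) - \Tr(OM_1)^2$.

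The last step bounds the right-hand side by $V$. Clearing the positive factor $d+n+1$, the inequality $c\,V - T^2/(n^2(d+n+1)) \le V$ is equivalent to $n^2 V + T^2 \ge 0$; but pairing $O^{\otimes 2}$ against $\E[\Psi^{\otimes 2}\mid\success]$ from \Cref{thm:moments_of_psi_given_ms} gives exactly $n^2 V + T^2 = (d+n)(d+n+1)\Tr\parens*{O^{\otimes 2}\E[\Psi^{\otimes 2}\mid\success]} = (d+n)(d+n+1)\,\E\bracks*{\Tr(O\Psi)^2 \mid \success} \ge 0$, since $\Tr(O\Psi)$ is real. Hence $\Var_{\meas}[\Tr(O\mhat)] \le V$, and the stated bound follows by replacing $\Tr(O^2 M_1)$ with $\inftynorm{O^2}\Tr(M_1) = \inftynorm{O}^2$, using $O^2 \succeq 0$ and $M_1 \in \dens{\CCD}$ by \Cref{lem:m_is_a_state}. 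The bulk of the work is the trace bookkeeping in the middle step, which is mechanical but involves a few near-cancellations; the one non-routine observation is that discarding $-T^2/(n^2(d+n+1))$ and replacing $c$ by $1$ is legitimate precisely because $\E[\Tr(O\Psi)^2\mid\success] \ge 0$.
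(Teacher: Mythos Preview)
Your proof is correct and follows essentially the same route as the paper: compute $\Var_{\meas}[\mhat] = \tfrac{(d+n)^2}{n^2}\Var[\Psi\mid\success]$, pair the expression \eqref{eq:variance_of_psi} with $O^{\otimes 2}$, and then simplify to reach $V$ before invoking $\Tr(O^2 M_1)\le\inftynorm{O}^2$ via $M_1\in\dens{\CCD}$. The one point where you are more careful than the paper is the passage from $cV - T^2/(n^2(d+n+1))$ to $V$: the paper simply says it is ``dropping negative terms, rounding $\tfrac{d+n}{d+n+1}$ up to $1$,'' which tacitly assumes $V\ge 0$, whereas your identification $n^2V + T^2 = (d+n)(d+n+1)\,\E[\Tr(O\Psi)^2\mid\success]\ge 0$ makes the inequality airtight without that assumption.
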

    \begin{proof}
        We have already seen $\E[\mhat]$, and 
        \[
        \Var[\mhat] = \Var \bracks*{\tfrac{1}{n} \bracks*{(d+n) \Psi - \eye}} = \Var[\tfrac{d+n}{n} \Psi] = \tfrac{(d+n)^2}{n^2} \Var[\Psi].
        \]
        We can substitute in \eqref{eq:variance_of_psi} from Theorem~\ref{thm:moments_of_psi_given_ms} to get an expression for $\Var_{\meas}[\mhat]$. 
        \[
            \Var_{\meas}[\mhat] = \frac{d+n}{d+n+1} \parens*{ \frac{W_{(1\,2)}\parens*{\eye + nM_1 \otimes \eye + n \eye \otimes M_1}}{n^2} + \frac{n-1}{n} M_2 - M_1^{\otimes 2} - \frac{(\eye + nM_1)^{\otimes 2}}{n^2(d+n)^2}}
        \]
        We remind the reader that $\Var[\mhat]$ is a $2$-qudit linear operator, and to get the variance of $\Tr(O \mhat)$, we need to take the trace with $O^{\otimes 2}$ since
        \begin{align*}
            \Var[\Tr(O \mhat)] &= \E[\Tr(O \mhat)^2] - \E[\Tr(O \mhat)]^2 \\
            &= \Tr(O^{\otimes 2} \E[\mhat^{\otimes 2}]) - \Tr(O^{\otimes 2} \E[\mhat]^{\otimes 2}) \\
            &= \Tr \parens*{O^{\otimes 2} \Var[\mhat]}.
        \end{align*}
        We therefore take the trace with $O^{\otimes 2}$ and make a few simplifications -- dropping negative terms, rounding $\frac{d+n}{d+n+1}$ up to $1$, and so on. 
        \begin{align*}
            & \qquad \Var[\Tr(O \mhat)] \\
            &= \Tr \parens*{O^{\otimes 2}\tfrac{d+n}{d+n+1} \parens*{ \frac{W_{(1\,2)}\parens*{\eye + nM_1 \otimes \eye + n \eye \otimes M_1}}{n^2} + \frac{n-1}{n} M_2 - M_1^{\otimes 2} - \frac{(\eye + nM_1)^{\otimes 2}}{n^2(d+n)^2}}} \\
            &= \tfrac{d+n}{d+n+1} \parens*{\frac{\Tr(O^2)}{n^2} + \frac{2\Tr(O^2 M_1)}{n} + \frac{n-1}{n} \Tr(O^{\otimes 2} M_2) - \Tr(O M_1)^2 - \parens*{\frac{\Tr(O) + n \Tr(OM_1)}{n(d+n)}}^2 } \\
            &\leq \frac{\Tr(O^2)}{n^2} + \frac{2 \Tr(O^2 M_1)}{n} + \frac{n-1}{n} \Tr(O^{\otimes 2} M_2) - \Tr(O M_1)^2. 
        \end{align*}
        Last, we apply H\"{o}lder's inequality for Schatten norms to bound $\Tr(O^2 M_1) \leq \inftynorm{O}^2 \norm{M_1}_1 = \inftynorm{O}^{2}$, using the fact that $\norm{M_1}_1 = 1$ because $M_1$ is a state (Lemma~\ref{lem:m_is_a_state}).
    \end{proof}

    \subsubsection{Pure State Classical Shadows}

    As a quick exercise, we can re-derive the mean and variance of the pure state classical shadows estimator from \cite{gps2022sampleoptimal}. When $\rho$ is pure, it is not hard to see that $\Tr_{n \to k}(\rho^{\otimes n}) = \rho^{\otimes k}$ and thus 
    \begin{align*}
        M_1 &= \frac{\Tr_{n \to 1}(\rho^{\otimes n})}{\Tr_{n \to 0}(\rho^{\otimes n})} = \frac{\rho}{1} = \rho, &  M_2 &= \frac{\Tr_{n \to 2}(\rho^{\otimes n})}{\Tr_{n \to 0}(\rho^{\otimes n})} = \frac{\rho^{\otimes 2}}{1} = \rho^{\otimes 2}.
    \end{align*}
    Thus $\mhat$ is an unbiased estimator for $M_1 = \rho$ (which is generally \emph{not} the case when $\rho$ is mixed). 
    \begin{lemma}
        Let $O$ be a Hermitian observable. When $\rho$ is pure, $\E[\Tr(O\mhat)] = \Tr(O\rho)$ and 
        \begin{align*}
        \Var_{\meas}[\Tr(O\mhat)] &\leq \frac{\Tr(O^2)}{n^2} + \frac{2 \inftynorm{O}^2}{n}.
        \end{align*}
    \end{lemma}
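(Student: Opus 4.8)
The plan is to derive this lemma as an immediate specialization of Corollary~\ref{cor:estimator_mean_and_variance_wrt_ms}, using only the observation (already recorded just above the statement) that when $\rho$ is pure we have $M_1 = \rho$ and $M_2 = \rho^{\otimes 2}$. First I would note why this identification holds: $\rho^{\otimes n}$ is a rank-one projector onto $\ket\psi^{\otimes n}$, which lies in the symmetric subspace, so $\sym^{(n)}\rho^{\otimes n} = \rho^{\otimes n}$ and hence $\Tr_{n\to k}(\rho^{\otimes n}) = \Tr_{1,\ldots,n-k}(\rho^{\otimes n}) = \rho^{\otimes k}$ for every $k$; in particular $\Tr_{n\to 0}(\rho^{\otimes n}) = 1$, so the measurement succeeds with certainty and $M_k = \rho^{\otimes k}$.

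For the mean, Corollary~\ref{cor:estimator_mean_and_variance_wrt_ms} gives $\E_{\meas}[\mhat] = M_1 = \rho$, and therefore $\E[\Tr(O\mhat)] = \Tr(O M_1) = \Tr(O\rho)$ by linearity of the trace. This settles the first claim with no further work.

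For the variance, I would substitute $M_1 = \rho$ and $M_2 = \rho^{\otimes 2}$ into the bound of Corollary~\ref{cor:estimator_mean_and_variance_wrt_ms}:
\[
\Var_{\meas}[\Tr(O\mhat)] \;\le\; \frac{\Tr(O^2)}{n^2} + \frac{2\inftynorm{O}^2}{n} + \frac{n-1}{n}\,\Tr(O^{\otimes 2}\rho^{\otimes 2}) - \Tr(O\rho)^2.
\]
Using multiplicativity of the trace over tensor products, $\Tr(O^{\otimes 2}\rho^{\otimes 2}) = \Tr(O\rho)^2$, so the last two terms collapse to $\bigl(\tfrac{n-1}{n} - 1\bigr)\Tr(O\rho)^2 = -\tfrac{1}{n}\Tr(O\rho)^2 \le 0$. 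Dropping this nonpositive term yields the stated bound $\frac{\Tr(O^2)}{n^2} + \frac{2\inftynorm{O}^2}{n}$.

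There is essentially no obstacle here: the lemma is a direct corollary, and the only point worth spelling out carefully is the identity $\Tr_{n\to k}(\rho^{\otimes n}) = \rho^{\otimes k}$ for pure $\rho$, which reduces to the fact that $\rho^{\otimes n}$ already lives in the symmetric subspace. Everything else is a one-line substitution and the cancellation $\tfrac{n-1}{n}-1 = -\tfrac1n$.
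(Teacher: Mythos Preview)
Your proposal is correct and matches the paper's own proof essentially line for line: both invoke Corollary~\ref{cor:estimator_mean_and_variance_wrt_ms} with $M_1=\rho$, $M_2=\rho^{\otimes 2}$, note $\Tr(O^{\otimes 2}\rho^{\otimes 2})=\Tr(O\rho)^2$, and observe that the last two terms combine to something nonpositive. You are slightly more explicit about why $\Tr_{n\to k}(\rho^{\otimes n})=\rho^{\otimes k}$ and about the exact cancellation $-\tfrac{1}{n}\Tr(O\rho)^2$, but there is no substantive difference.
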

    \begin{proof}
        Use Corollary~\ref{cor:estimator_mean_and_variance_wrt_ms}. The expectation of $\Tr(O \mhat)$ follows immediately. For the variance, we observe that $\Tr(O^{\otimes 2}M_2) = \Tr(O^{\otimes 2} \rho^{\otimes 2}) = \Tr(O \rho)^{2}$, which is then more than cancelled out by $\Tr(O M_1)^{2} = \Tr( O \rho )^{2}$.
    \end{proof}

    \subsection{Classical mixture of orthogonal tensor products} 
    \label{sec:mixture}

    Corollary~\ref{cor:estimator_mean_and_variance_wrt_ms} reduces the mean and variance calculation to computing $M_1$ and $M_2$. However, even for $n = 4$, $\Tr_{4 \to 1}(\rho^{\otimes 4})$ is the unwieldy polynomial
    \[
        \frac{1}{24} \parens*{\rho (1 + 3\Tr(\rho^2) + 2\Tr(\rho^3)) + \rho^2(3 + \Tr(\rho^2)) + 6\rho^{3} + 6\rho^{4}}, 
    \]
    and $\Tr_{4 \to 2}(\rho^{\otimes 4})$ is even worse. It is hard to say much about, e.g., the overlap of $M_1$ with the leading eigenvector ($\Phi_1$), beyond the fact that it is some symmetric polynomial in the eigenvalues of $\rho$. It is unclear how to bound it in terms of the deviation, $\eta = 1 - \lambda_1$, especially for arbitrary $n$. We need a different approach to compute $M_1$ and $M_2$ for $\rho^{\otimes n}$. 

    Recall that $\rho = \sum_{i=1}^{d} \lambda_i \Phi_i$ where $\Phi_i := \ketbra{\phi_i}{\phi_i}$ are projectors onto the eigenvectors, $\ket{\phi_i}$, which form an orthonormal basis. It follows that we can expand $\rho^{\otimes n}$ in a basis of $n$-fold tensor products, $\Phi_{\va} := \otimes_{i=1}^{n} \Phi_{a_i}$, where $\va = (a_1, \ldots, a_n) \in [d]^{n}$. That is,
    \[
        \rho^{\otimes n} = \sum_{\va \in [d]} \parens*{\prod_{i=1}^{n} \lambda_{a_i}} \Phi_{\va}
    \]
    Since $\rho^{\otimes n}$ is exchangeable, we can express the right hand side as a sum of exchangeable operators. Specifically, we group the terms by type, where the \emph{type} of $\Phi_{\va}$ is a vector $\ve = (e_1, \ldots, e_d)$ where $e_i := \# \{ j : a_j = i \}$ is the number of occurrences of $\Phi_i$. The grouped states we call $\sigma(\ve)$.
	\begin{defn}
		Fix a basis $\ket{\phi_1}, \dotsc, \ket{\phi_d}$, and let $\Phi_i = \ketbra{\phi_i}{\phi_i}$ for all $i = 1, \ldots, d$. Given $\ve = (e_1, \dotsc, e_d) \in \mathbb N^{d}$ such that $e_1 + \dotsm + e_d = n$, define a mixed state 
		\[
		\sigma(e_1,\dotsc,e_d) := \frac{1}{n!} \sum_{\pi \in \symm_n} W_{\pi} \left( \bigotimes_{j=1}^{d} \Phi_j^{\otimes e_j}\right) W_{\pi}^{\dag} \in \densk{n}.
		\]
	\end{defn}
    Let us define some shorthand notation with $\ve$ for later use. 
	\begin{align*}
		\ve! &:= e_1! \dotsm e_d!, & \binom{n}{\ve} &:= \frac{n!}{\ve!}, & \vlambda^{\ve} &:= \lambda_1^{e_1} \dotsm \lambda_d^{e_d}, & 
	\end{align*}
    We can now succinctly write $\rho^{\otimes n}$ as a convex combination of $\sigma(\ve)$. 
    \begin{prop} 
    \label{prop:rho_as_mixture}
    \[
    \rho^{\otimes n} = \sum_{\ve} \binom{n}{\ve} \vlambda^{\ve} \sigma(\ve)
    \]
    \end{prop}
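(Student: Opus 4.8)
The plan is a direct \emph{expand-and-regroup} argument using only the definition of $\sigma(\ve)$ together with a short orbit-counting step; no representation theory is required.

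First, I would start from the spectral decomposition $\rho=\sum_{i=1}^d\lambda_i\Phi_i$ and expand the tensor power multilinearly,
\[
\rho^{\otimes n}=\sum_{\va\in[d]^n}\Big(\prod_{j=1}^n\lambda_{a_j}\Big)\Phi_{\va},\qquad \Phi_{\va}:=\Phi_{a_1}\otimes\cdots\otimes\Phi_{a_n}.
\]
The key observation is that the scalar $\prod_{j}\lambda_{a_j}$ depends on $\va$ only through its \emph{type} $\ve(\va):=(e_1,\dots,e_d)$, where $e_i:=\#\{j:a_j=i\}$; indeed it equals $\vlambda^{\ve(\va)}$. Grouping the sum by type gives
\[
\rho^{\otimes n}=\sum_{\ve\,:\,e_1+\cdots+e_d=n}\vlambda^{\ve}\sum_{\va\,:\,\ve(\va)=\ve}\Phi_{\va}.
\]

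Second, I would identify the inner sum with a multiple of $\sigma(\ve)$. Write $\mathbf{b}(\ve):=(1^{e_1},2^{e_2},\dots,d^{e_d})\in[d]^n$ for the sorted representative of type $\ve$, so that $\bigotimes_{j=1}^d\Phi_j^{\otimes e_j}=\Phi_{\mathbf{b}(\ve)}$. A one-line check from the definition of $W_\pi$ shows $W_\pi\Phi_{\va}W_\pi^{\dag}=\Phi_{\pi\cdot\va}$, where $(\pi\cdot\va)_j:=a_{\pi^{-1}(j)}$; hence $\sigma(\ve)=\tfrac{1}{n!}\sum_{\pi\in\symm_n}\Phi_{\pi\cdot\mathbf{b}(\ve)}$. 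As $\pi$ ranges over $\symm_n$, the word $\pi\cdot\mathbf{b}(\ve)$ ranges over exactly the $\va$ of type $\ve$; by orbit--stabilizer, each such $\va$ arises from precisely $|\symm_n|/\binom{n}{\ve}=\ve!$ permutations, since the stabilizer of any word of type $\ve$ is a product $\symm_{e_1}\times\cdots\times\symm_{e_d}$ of size $\ve!=e_1!\cdots e_d!$. Therefore $\sigma(\ve)=\tfrac{\ve!}{n!}\sum_{\va:\ve(\va)=\ve}\Phi_{\va}$, i.e.\ $\sum_{\va:\ve(\va)=\ve}\Phi_{\va}=\binom{n}{\ve}\sigma(\ve)$.

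Plugging this back into the grouped sum immediately yields $\rho^{\otimes n}=\sum_{\ve}\binom{n}{\ve}\vlambda^{\ve}\sigma(\ve)$, as claimed. As a sanity check that this is a genuine convex combination: each $\sigma(\ve)$ is an average of unitary conjugates of the density matrix $\Phi_{\mathbf{b}(\ve)}$ and so is itself a density matrix, while $\sum_{\ve}\binom{n}{\ve}\vlambda^{\ve}=(\lambda_1+\cdots+\lambda_d)^n=1$ by the multinomial theorem. There is no serious obstacle here; the only place demanding care is the counting step — verifying that the position-permutation action of $\symm_n$ hits each word of type $\ve$ with multiplicity exactly $\ve!$ (rather than once), which is precisely what converts the $\tfrac{1}{n!}$ in the definition of $\sigma(\ve)$ into the multinomial coefficient $\binom{n}{\ve}$.
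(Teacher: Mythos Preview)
Your proof is correct and follows essentially the same expand-and-regroup approach the paper sketches: expand $\rho^{\otimes n}$ multilinearly over $\va\in[d]^n$, observe the coefficient depends only on the type $\ve(\va)$, and group terms accordingly. The paper leaves the counting step implicit, whereas you spell out the orbit--stabilizer argument that produces the factor $\binom{n}{\ve}$; this added detail is fine and the argument is sound.
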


    A mixture of this form is indistinguishable from a $\sigma(\ve)$ selected at random from the distribution with mass $\Pr[\ve] = \binom{n}{\ve} \vlambda^{\ve}$, which we recognize as a multinomial distribution with $n$ trials for $d$ events with probabilities $\lambda_1, \ldots, \lambda_d$. However, the pertinent distribution for our calculations, which we name $\truedist$, is this multinomial conditioned on successful measurement, since $\mhat$ is conditional on a successful measurement outcome. That is, the probability mass function for for $\truedist$ is 
    \[
    \Pr[\ve \mid \success] = \frac{\Pr[\success \mid \ve] \Pr[\ve]}{\Pr[\success]} \propto \Pr[\success \mid \ve] \cdot \binom{n}{\ve} \vlambda^{\ve}.
    \]
    We will calculate shortly (Theorem~\ref{thm:sigma0}), the probability of success for a given $\ve$, to make this distribution more explicit.

    Recall that the purpose of decomposing $\rho^{\otimes n}$ as a mixture (Proposition~\ref{prop:rho_as_mixture}) was to give another path to compute $M_1$ and $M_2$. To this end, we define 
    \begin{align*}
        M_k^{\ve} &:= \frac{\Tr_{n \to k}(\sigma(\ve))}{\Tr_{n \to 0}(\sigma(\ve))},
    \end{align*}
    for all $k \geq 1$, and prove the following connection to the original $M_k$s.
    \begin{thm}
        \label{thm:ms_from_mse}
        Fix $0 \leq k \leq n$, and then $M_k(\rho^{\otimes n}) = \E_{\ve \samp \truedist}[ M_{k}(\sigma(\ve))]$, i.e., $M_k(\rho^{\otimes n})$ is the expectation of $M_k(\sigma(\ve))$ over $\ve$ sampled from $\truedist$. 
    \end{thm}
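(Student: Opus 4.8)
The plan is to exploit linearity of the partial-trace operations $\Tr_{n\to k}$ together with the mixture decomposition of Proposition~\ref{prop:rho_as_mixture}, and then to carefully track how conditioning on success turns the multinomial weights $\binom{n}{\ve}\vlambda^{\ve}$ into the weights defining $\truedist$. First I would recall that $M_k(\rho^{\otimes n}) = \Tr_{n\to k}(\rho^{\otimes n})/\Tr_{n\to 0}(\rho^{\otimes n})$ and that $\Tr_{n\to k}(A) = \Tr_{1,\dots,n-k}(\sym^{(n)}A)$ is linear in $A$. Applying this to $\rho^{\otimes n} = \sum_{\ve}\binom{n}{\ve}\vlambda^{\ve}\,\sigma(\ve)$ gives, for every $k$ (including $k=0$),
\begin{align}
\Tr_{n\to k}(\rho^{\otimes n}) &= \sum_{\ve}\binom{n}{\ve}\vlambda^{\ve}\,\Tr_{n\to k}(\sigma(\ve)).
\end{align}
In particular, with $k=0$, $\Tr_{n\to 0}(\rho^{\otimes n}) = \Pr[\success] = \sum_{\ve}\binom{n}{\ve}\vlambda^{\ve}\,\Tr_{n\to 0}(\sigma(\ve))$, and $\Tr_{n\to 0}(\sigma(\ve)) = \Tr(\sym^{(n)}\sigma(\ve)) = \Pr[\success\mid\ve]$, which is exactly the conditional success probability appearing in the stated pmf of $\truedist$.

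Next I would divide the $k$ identity by the $k=0$ identity and insert a factor of $\Tr_{n\to 0}(\sigma(\ve))$ in numerator and denominator of each summand so as to expose $M_k(\sigma(\ve))$:
\begin{align}
M_k(\rho^{\otimes n}) &= \frac{\sum_{\ve}\binom{n}{\ve}\vlambda^{\ve}\,\Tr_{n\to 0}(\sigma(\ve))\,M_k(\sigma(\ve))}{\sum_{\ve}\binom{n}{\ve}\vlambda^{\ve}\,\Tr_{n\to 0}(\sigma(\ve))}.
\end{align}
The denominator is $\Pr[\success]$, and each coefficient in the numerator is $\binom{n}{\ve}\vlambda^{\ve}\,\Pr[\success\mid\ve] = \Pr[\ve]\Pr[\success\mid\ve] \propto \Pr[\ve\mid\success]$, with the same normalization $\Pr[\success]$ cancelling. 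Hence the right-hand side is precisely $\sum_{\ve}\Pr[\ve\mid\success]\,M_k(\sigma(\ve)) = \E_{\ve\samp\truedist}[M_k(\sigma(\ve))]$, which is the claim. One should note that $\sum_{\ve}\binom{n}{\ve}\vlambda^{\ve} = (\lambda_1+\dots+\lambda_d)^n = 1$ is not what matters here; what matters is that the \emph{same} normalizing constant $\Pr[\success]$ appears after inserting the conditional success probabilities, so the $M_k(\sigma(\ve))$ are averaged against a bona fide probability distribution.

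The only subtlety — and the one place I would be careful — is the well-definedness of the division: I should invoke Lemma~\ref{lem:m_is_a_state} (or simply the hypothesis that the measurement does not always fail) to ensure $\Tr_{n\to 0}(\rho^{\otimes n})\neq 0$, and restrict the sums to $\ve$ with $\Tr_{n\to 0}(\sigma(\ve))>0$ so that each $M_k(\sigma(\ve))$ is defined; the $\ve$ with zero success probability contribute nothing to either sum and can be dropped harmlessly. Everything else is a one-line consequence of linearity of $\Tr_{n\to k}$ and Bayes' rule, so there is no real obstacle beyond this bookkeeping.
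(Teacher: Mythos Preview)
Your proof is correct and follows essentially the same approach as the paper: expand $\Tr_{n\to k}(\rho^{\otimes n})$ by linearity over the mixture decomposition, then insert the factor $\Tr_{n\to 0}(\sigma(\ve))$ to convert each summand into $\Pr[\ve\mid\success]\,M_k(\sigma(\ve))$ via Bayes' rule. Your added remark about well-definedness when $\Tr_{n\to 0}(\sigma(\ve))=0$ is a nice bit of hygiene the paper leaves implicit.
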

    \begin{proof}
        \begin{align*}
        M_k(\rho^{\otimes n}) 
        &= \frac{\Tr_{n \to k}(\rho^{\otimes n})}{\Tr_{n \to 0}(\rho^{\otimes n})} & \text{definition} \\
        &= \frac{\Tr_{n \to k}(\rho^{\otimes n})}{\Pr[\success]} & \text{because $\Pr[\success] = \Tr_{n \to 0}(\rho^{\otimes n})$} \\
        &= \frac{1}{\Pr[\success]} \sum_{\ve} \Pr[\ve] \cdot \Tr_{n \to k}(\sigma(\ve)) & \text{linearity of trace} \\
        &= \sum_{\ve} \frac{\Pr[\ve]}{\Pr[\success]} \cdot \Tr_{n \to k}(\sigma(\ve)) \cdot \frac{\Pr[\success \mid \ve]}{\Tr_{n \to 0}(\sigma(\ve))} & \text{since $\Pr[\success \mid \ve] = \Tr_{n \to 0}(\sigma(\ve))$} \\ 
        &= \sum_{\ve} \frac{\Pr[\success \mid \ve] \Pr[\ve]}{\Pr[\success]} \cdot \frac{\Tr_{n \to k}(\sigma(\ve))}{\Tr_{n \to 0}(\sigma(\ve))} & \text{rearrange} \\ 
        &= \sum_{\ve} \Pr[\ve \mid \success] \cdot M_k(\sigma(\ve)) & \text{Bayes' rule} \\
        &= \E[ M_{k}(\sigma(\ve)) \mid \success].
        \end{align*}
    \end{proof}
    That is, $M_k = \E[M_k^{\ve}]$. We stress that there are now two sources of randomness affecting our estimator: \emph{mixture randomness} arising from a random choice of initial state $\sigma(\ve)$, and \emph{measurement randomness} caused by the inherent randomness of measuring a quantum state.
    
    \subsection{\texorpdfstring{$M_1$}{M1} and \texorpdfstring{$M_2$}{M2} for \texorpdfstring{$\sigma(\ve)$}{sigma(e)}}
    \label{sec:combinatorics}

    Section~\ref{sec:chiribella} and Section~\ref{sec:estimator} showed that moments of $\Psi$ and $\mhat$ can be calculated from $M_1$ and $M_2$, then Section~\ref{sec:mixture} expressed $M_1$ and $M_2$ as expectations of $M_1^{\ve}$ and $M_2^{\ve}$. In this section, we calculate 
    \[
    Z^{\ve} = \Pr[\success \mid \ve] = \Tr_{n \to 0}(\sigma(\ve)),
    \]
    as well as $M_1^{\ve}$, and $M_2^{\ve}$. 
    
    As a starting point for all three calculations, we have that
    \[
    \Tr_{n \to k}(\sigma(\ve)) = \Tr_{[n-k]}(\sym^{(n)}\sigma(\ve)) = \Tr_{[n-k]}(\sym^{(n)} \Phi^{\ve})= \frac{1}{n!} \sum_{\pi \in \symm_n} \Tr_{[n-k]}(W_{\pi} \Phi^{\ve})
    \]
    where we define $\Phi^{\otimes \ve} = \bigotimes_{i=1}^{d} \Phi_j^{\otimes e_j}$.

    We claim that expressions like $\Tr_{[n-k]}(W_{\pi} \Phi^{\otimes \ve})$ can be evaluated combinatorially --- the permutation will loop through various $\Phi_i$ tensor factors of $\Phi^{\otimes \ve}$, and if any two adjacent $\Phi_i$ and $\Phi_j$ are orthogonal (we call this \emph{mixing eigenstates} in the theorems that follow), then the entire term vanishes. If not, then many of the $\Phi_i$ are traced out, and those that remain appear in the result. 

    The following proofs may be a bit opaque if the reader cannot visualize $\Tr_{[n-k]}(W_{\pi} \Phi^{\otimes \ve})$ in the language of tensor networks. Tensor networks are a graphical model of linear operators where, e.g., the permutation operator $W_{\pi}$ is drawn as a literal permutation of legs, and partial trace is achieved by looping the output legs for the traced-out qudits back to the input legs. We cannot include a full introduction to tensor networks here, but refer the reader to either \cite{roberts2017chaos} or \cite{gps2022sampleoptimal} to see examples of these ideas in action on very similar problems.

    With that, we begin by calculating the full trace. Since $\sym^{(n)}$ is the sum of the successful measurement outcomes, it also represents the probability of success.
    \begin{thm}
        \label{thm:sigma0}
        For all $n \geq 1$ and $\ve$ such that $e_1 + \cdots + e_d = n$, 
        \[
        Z^{\ve} := \Tr_{n \to 0}(\sigma(\ve)) = \Tr( \sym^{(n)} \sigma(\ve)) = \binom{n}{\ve}^{-1}.
        \]
    \end{thm}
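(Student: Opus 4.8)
The plan is to turn the trace into a permutation count. First, by the definition of $\Tr_{n\to k}$ (cf.\ the footnote establishing $\Tr_{n\to 0}(A) = \Tr(\sym^{(n)}A)$), the claim is purely about $\Tr(\sym^{(n)}\sigma(\ve))$. Since $\sym^{(n)}$ is exchangeable it commutes with every $W_\pi$, so using cyclicity of the trace and $W_\pi^{\dag}\sym^{(n)}W_\pi = \sym^{(n)}$,
\begin{equation*}
\Tr(\sym^{(n)}\sigma(\ve)) = \frac{1}{n!}\sum_{\pi\in\symm_n}\Tr\!\parens*{\sym^{(n)} W_\pi\Phi^{\otimes\ve}W_\pi^{\dag}} = \frac{1}{n!}\sum_{\pi}\Tr\!\parens*{W_\pi^{\dag}\sym^{(n)}W_\pi\,\Phi^{\otimes\ve}} = \Tr(\sym^{(n)}\Phi^{\otimes\ve}).
\end{equation*}
So it suffices to compute $\Tr(\sym^{(n)}\Phi^{\otimes\ve})$, the trace against a \emph{single} tensor product of eigenprojectors, where $\Phi^{\otimes\ve} = \bigotimes_{i=1}^{d}\Phi_i^{\otimes e_i}$.

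Next I would expand $\sym^{(n)} = \frac{1}{n!}\sum_{\tau\in\symm_n}W_\tau$ by Fact~\ref{fact:symmetric_subspace_size}, so $\Tr(\sym^{(n)}\Phi^{\otimes\ve}) = \frac{1}{n!}\sum_{\tau}\Tr(W_\tau\Phi^{\otimes\ve})$, and evaluate each term. Write $\Phi^{\otimes\ve} = \Phi_{a_1}\otimes\cdots\otimes\Phi_{a_n}$ for a fixed ``color'' function $a\colon[n]\to[d]$ taking the value $i$ exactly $e_i$ times. By Fact~\ref{fact:count_trace_permutation}, $\Tr(W_\tau\Phi^{\otimes\ve}) = 1$ when $a_{\tau(j)} = a_j$ for all $j$ and $0$ otherwise; concretely, the trace factors over the cycles of $\tau$, and a cycle $(j_1\,\cdots\,j_m)$ contributes $\langle\phi_{a_{j_1}}\!\mid\!\phi_{a_{j_2}}\rangle\cdots\langle\phi_{a_{j_m}}\!\mid\!\phi_{a_{j_1}}\rangle$, which is $1$ if all the colors $a_{j_1},\dotsc,a_{j_m}$ coincide and $0$ as soon as two consecutive ones differ (using $\Tr\Phi_i = 1$ and orthonormality of the $\ket{\phi_i}$). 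Since ``consecutive colors agree all the way around a cycle'' is equivalent to ``all colors in the cycle agree'', the surviving permutations are exactly those stabilizing every color class $a^{-1}(i)$.

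Finally I would count: a permutation stabilizing each class $a^{-1}(1),\dotsc,a^{-1}(d)$ of sizes $e_1,\dotsc,e_d$ is an arbitrary product of permutations internal to those classes, so there are $e_1!\cdots e_d! = \ve!$ of them. Hence $\Tr(\sym^{(n)}\Phi^{\otimes\ve}) = \ve!/n! = \binom{n}{\ve}^{-1}$, as claimed. There is no real obstacle here --- the argument is bookkeeping --- and the only point deserving a careful sentence is the equivalence between $\Tr(W_\tau\Phi^{\otimes\ve})\neq 0$ and $\tau$ preserving the color classes, which follows from the cycle decomposition of $\tau$ together with orthonormality of $\ket{\phi_1},\dotsc,\ket{\phi_d}$; readers who prefer can verify it via the tensor-network picture of $\Tr_{n\to 0}(W_\tau\Phi^{\otimes\ve})$.
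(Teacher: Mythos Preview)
Your proof is correct and follows essentially the same approach as the paper's: reduce $\Tr(\sym^{(n)}\sigma(\ve))$ to $\frac{1}{n!}\sum_{\tau}\Tr(W_\tau\Phi^{\otimes\ve})$, observe that only permutations preserving the eigenstate blocks contribute (each contributing $1$), and count those as $\ve!$. The only cosmetic difference is that the paper expands both $\sigma(\ve)$ and $\sym^{(n)}$ into a double sum over $\symm_n$ and collapses it by reindexing, whereas you invoke $W_\pi^{\dag}\sym^{(n)}W_\pi=\sym^{(n)}$ and cyclicity directly; the resulting single permutation sum and the subsequent counting argument are identical.
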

    \begin{proof}
        Expand $\sigma$ with the definition and $\sym^{(n)}$ as an average over permutations:
		\begin{align*}
			\Tr\left( \sym^{(n)} \sigma(\ve) \right)
            = \frac{1}{(n!)^2} \sum_{\pi,\sigma \in \symm_n} \Tr\left( W_{\pi} \left(\bigotimes_{j=1}^{d} \Phi_j^{\otimes e_j} \right) W_{\pi}^{\dag} W_\sigma \right)
            = \frac{1}{n!} \sum_{\pi \in \symm_{n}} \Tr\left( \Phi_1^{\otimes e_1} \dotsm \Phi_d^{\otimes e_d} W_{\pi} \right).
		\end{align*}
		Any permutation $\pi$ which \emph{mixes eigenstates} by having a cycle which involves $\Phi_i$ and $\Phi_j$ for $i \neq j$ may be ignored, since the trace factor for that cycle will be zero. If a permutation does not mix eigenstates, then we have 
		$
		\Tr\left( \Phi_1^{\otimes e_1} \dotsm \Phi_d^{\otimes e_d} W_{\pi} \right) = 1,
		$
		since it is a product of traces of the form $\Tr(\Phi_i^{k})$, each of which is $1$ because $\Phi_i$ is pure.
		
		In other words, it suffices to count permutations which do not mix eigenstates. Clearly there are $e_i!$ ways to permute the $\Phi_i^{\otimes e_i}$ factors among themselves, and we make this choice independently for each $i$ for a total of $\ve!$ permutations which give unit trace. Therefore,
        \[
            \Tr\left( \sym^{(n)} \sigma(\ve) \right) = \frac{\ve!}{n!}= \binom{n}{\ve}^{-1},
        \]
        completing the proof.
    \end{proof}

    Next, we find that $M_1^{\ve}$ has a surprisingly clean form.
    \begin{thm}
        \label{thm:sigma1}
        For all $n \geq 1$ and $\ve$ such that $e_1 + \cdots + e_d = n$, 
        \[
        M_1^{\ve} := \frac{\Tr_{n \to 1}(\sigma(\ve))}{\Tr_{n \to 0}(\sigma(\ve))} = \frac{\Tr_{1,\ldots,n-1}( \sym^{(n)} \sigma(\ve))}{Z^{\ve}} = \frac{1}{n} \sum_{i=1}^{d} e_i \Phi_i.
        \]
    \end{thm}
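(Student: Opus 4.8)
The plan is to recognize $\sym^{(n)}\sigma(\ve)$ as a (scalar multiple of a) generalized Dicke state and then read off $M_1^{\ve}$ as its single-qudit reduced density matrix, which I compute by a short counting argument. First I would observe that $\Phi^{\otimes\ve} = \bigotimes_{j=1}^{d}\Phi_j^{\otimes e_j}$ is the rank-one projector onto the product vector $\ket{\psi_\ve} := \ket{\phi_1}^{\otimes e_1}\otimes\cdots\otimes\ket{\phi_d}^{\otimes e_d}$. Next, expanding $\sigma(\ve)=\frac1{n!}\sum_{\tau\in\symm_n} W_\tau \Phi^{\otimes\ve} W_\tau^{\dag}$ and using the absorption identity $\sym^{(n)}W_\tau = W_\tau\sym^{(n)} = \sym^{(n)}$ together with $\frac1{n!}\sum_\tau W_\tau^{\dag}=\sym^{(n)}$, I would establish the operator identity
\[
\sym^{(n)}\sigma(\ve) \;=\; \sym^{(n)}\Phi^{\otimes\ve}\sym^{(n)} \;=\; \bigl(\sym^{(n)}\ket{\psi_\ve}\bigr)\bigl(\bra{\psi_\ve}\sym^{(n)}\bigr).
\]

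Now exactly $\ve!$ permutations fix $\ket{\psi_\ve}$, while the remaining ones realize the $\binom{n}{\ve}-1$ other distinct rearrangements of the multiset of kets, so $\sym^{(n)}\ket{\psi_\ve} = \binom{n}{\ve}^{-1}\ket{S_\ve}$, where $\ket{S_\ve}:=\sum_{w}\ket{w}$ is the unnormalized sum over all $\binom{n}{\ve}$ distinct rearrangements. Since these rearrangements are mutually orthonormal, $\langle S_\ve \vert S_\ve\rangle = \binom{n}{\ve}$, hence
\[
\sym^{(n)}\sigma(\ve) \;=\; \binom{n}{\ve}^{-1}\,\frac{\ket{S_\ve}\bra{S_\ve}}{\binom{n}{\ve}},
\]
which in particular re-derives $Z^{\ve}=\Tr_{n\to 0}(\sigma(\ve))=\binom{n}{\ve}^{-1}$ and identifies $M_1^{\ve} = \Tr_{1,\ldots,n-1}\bigl(\ket{S_\ve}\bra{S_\ve}/\binom{n}{\ve}\bigr)$ as the single-site reduced state of the normalized type-$\ve$ Dicke state.

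Finally I would compute this reduced state directly. In $\Tr_{1,\ldots,n-1}(\ket{S_\ve}\bra{S_\ve}) = \sum_{w,w'}\Tr_{1,\ldots,n-1}(\ket{w}\bra{w'})$, a cross term with $w\neq w'$ survives only if $w$ and $w'$ agree on positions $1,\ldots,n-1$; but two arrangements of the same multiset that agree on $n-1$ of the $n$ positions must also agree on the last, so all cross terms vanish. The surviving diagonal terms give $\sum_{w}\ketbra{w_n}{w_n}$, where $w_n$ is the symbol in position $n$ of $w$, and the number of arrangements with $w_n=i$ is $\tfrac{(n-1)!}{e_1!\cdots(e_i-1)!\cdots e_d!} = \tfrac{e_i}{n}\binom{n}{\ve}$. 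Dividing by $\binom{n}{\ve}$ yields $M_1^{\ve} = \tfrac1n\sum_{i=1}^{d} e_i\,\Phi_i$, as claimed.

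The step I expect to be the main obstacle is the first operator identity: partial trace does not enjoy unrestricted cyclicity, so one must resist moving a $\sym^{(n)}$ across a partial-trace sign and instead carry out all the symmetrizer bookkeeping at the level of the full operator $\sym^{(n)}\sigma(\ve)$ before tracing out. An equivalent route, closer in spirit to the proof of Theorem~\ref{thm:sigma0} and the tensor-network picture, is to write $\Tr_{1,\ldots,n-1}(\sym^{(n)}\sigma(\ve)) = \tfrac1{n!}\sum_{\pi\in\symm_n}\Tr_{1,\ldots,n-1}(W_\pi\Phi^{\otimes\ve})$ and argue via \Cref{fact:count_trace_permutation} that a permutation contributes $0$ unless no cycle mixes eigenstates, in which case it contributes $\Phi_i$ where $i$ labels the cycle through position $n$; counting the $\ve!$ colour-preserving permutations according to the colour of position $n$ again gives $\tfrac1n\sum_i e_i\Phi_i$. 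The delicate part of that variant is tracking which cycle contains position $n$.
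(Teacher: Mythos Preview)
Your primary argument is correct and takes a genuinely different route from the paper. You first establish the operator identity $\sym^{(n)}\sigma(\ve)=\sym^{(n)}\Phi^{\otimes\ve}\sym^{(n)}$, recognize this as $\binom{n}{\ve}^{-2}\ket{S_\ve}\bra{S_\ve}$ for the unnormalized Dicke vector $\ket{S_\ve}$, and then compute $M_1^{\ve}$ as the one-site marginal of the normalized Dicke state by a pigeonhole argument that kills cross terms and a multinomial count on the diagonal. The paper, by contrast, never passes through the pure Dicke state: it expands $\sym^{(n)}=\tfrac{1}{n!}\sum_{\pi}W_{\pi}$, uses exchangeability of $\sym^{(n)}\sigma(\ve)$ to average over which qudit is \emph{not} traced out, and then observes that only the $\ve!$ permutations that do not mix eigenstates survive, each contributing the $\Phi_j$ sitting at the surviving position. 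Your approach is more conceptual and immediately reusable (once you know the symmetric projection is rank one, any reduced state follows from standard Dicke-state marginals), whereas the paper's cycle-counting argument is more uniform with its proofs of $Z^{\ve}$ and $M_2^{\ve}$ and scales mechanically to higher $k$ without needing the cross-term vanishing argument. Your ``equivalent route'' at the end is precisely the paper's proof, including the point you flag as delicate---tracking which cycle contains the surviving position---which the paper handles by averaging over that position rather than fixing it.
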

    \begin{proof}
        For any exchangeable $A \in \densk{n}$, we have 
        \[
        \Tr_{[n-1]}(A) = \Tr_{-1}(A) = \Tr_{-i}(W_\pi A W_{\pi^{-1}}) = \Tr_{-i}(A)
        \]
        for any $\pi \in \symm_n$ such that $\pi(i) = n$. Therefore, we can expand $\sym^{(n)}$ as an average over permutations, and also average over the indices which are traced out to obtain
        \[
            \Tr_{n \to 1}(\sigma(\ve)) = \Tr_{[n-1]}( \sym^{(n)} \sigma(\ve)) = \frac{1}{n} \sum_{i=1}^{n} \frac{1}{n!} \sum_{\pi \in \symm_n} \Tr_{-i}\parens*{ W_{\pi} \bigotimes_{j=1}^{d} \Phi_j^{\otimes e_j} }
        \]
        As before, if $\pi$ mixes eigenstates $\Phi_i$ and $\Phi_j$ in the same cycle (for $i \neq j$), then that term contributes $0$. There are $e_1! \dotsm e_d!$ permutations which do not mix eigenstates. Suppose index $i$ corresponds to $\Phi_j$. In each of the non-mixing permutations, the partial trace is $\Phi_j$. Since there are $e_j$ indices where there is a $\Phi_j$, it follows that 
        \[
            \Tr_{n \to 1}(\sigma(\ve)) = \binom{n}{\ve}^{-1} \frac{1}{n} \sum_{i=1}^{d} e_i \Phi_i.
        \]
        Dividing through by $Z^{\ve} = \Tr_{n \to 0}(\sigma(\ve)) = \binom{n}{\ve}^{-1}$ finishes the proof.
    \end{proof}

    \begin{thm}
        \label{thm:sigma2}
        For all $n \geq 1$ and $\ve$ such that $e_1 + \cdots + e_d = n$, 
        \[
        M_2^{\ve} = 
        \frac{\Tr_{n \to 1}(\sigma(\ve))}{\Tr_{n \to 0}(\sigma(\ve))} = \frac{\Tr_{1,\ldots,n-1}( \sym^{(n)} \sigma(\ve))}{Z^{\ve}} = \frac{2\sym^{(2)}}{n(n-1)}
        \parens*{\sum_{i \neq j} e_i e_j \Phi_i \otimes \Phi_j + \sum_{k} \binom{e_k}{2} \Phi_k^{\otimes 2}}.
        \]
    \end{thm}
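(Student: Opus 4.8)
The plan is to follow the template of the proofs of \Cref{thm:sigma0} and \Cref{thm:sigma1}, but now keeping track of a two-qudit output operator rather than a scalar (as for $Z^{\ve}$) or a single-qudit operator (as for $M_1^{\ve}$). Starting from $\Tr_{n\to 2}(\sigma(\ve)) = \Tr_{[n-2]}(\sym^{(n)}\sigma(\ve))$, I would use the exchangeability of $\sym^{(n)}\sigma(\ve)$ to replace $\Tr_{[n-2]}$ by the average, over all ordered pairs $(p,q)$ of distinct positions in $[n]$, of the partial trace that removes every position except $p$ and $q$ (placing the $p$-qudit in the first output slot and the $q$-qudit in the second), while expanding $\sym^{(n)} = \frac1{n!}\sum_{\pi\in\symm_n}W_\pi$ and $\sigma(\ve)$ as a uniform mixture of $W_\tau\Phi^{\otimes\ve}W_\tau^\dagger$ with $\Phi^{\otimes\ve} = \bigotimes_j\Phi_j^{\otimes e_j}$. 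Exactly as in the proof of \Cref{thm:sigma1}, after averaging over the retained positions the computation collapses to $\frac{1}{n(n-1)\,n!}\sum_{p\ne q}\sum_{\pi\in\symm_n}\Tr_{-\{p,q\}}(W_\pi\Phi^{\otimes\ve})$.

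Next I would evaluate each term $\Tr_{-\{p,q\}}(W_\pi\Phi^{\otimes\ve})$ combinatorially in the spirit of \Cref{fact:count_trace_permutation}: writing $a_m$ for the eigenstate index occupying position $m$ (so $a_m = k$ for exactly $e_k$ positions $m$), the traced-out legs contribute a product of Kronecker deltas $\prod_{m\notin\{p,q\}}\delta_{a_m,\,a_{\pi^{-1}(m)}}$, and the two retained legs leave behind the rank-one operator $\ketbra{\phi_{a_{\pi^{-1}(p)}}}{\phi_{a_p}}\otimes\ketbra{\phi_{a_{\pi^{-1}(q)}}}{\phi_{a_q}}$. Hence a permutation contributes a nonzero term precisely when every eigenstate-mixing cycle of $\pi$ passes through $p$ or $q$, and a short case analysis on the cycle structure shows that the surviving two-qudit operator is $\Phi_{a_p}\otimes\Phi_{a_q}$ when $p$ and $q$ lie in different cycles of $\pi$, and is the swapped operator $W_{(1\,2)}(\Phi_{a_p}\otimes\Phi_{a_q})$ when they lie in a common cycle (that cycle then decomposes into an arc staying inside $p$'s eigenblock followed by an arc inside $q$'s eigenblock, so it is allowed to link two distinct eigenblocks — this is the source of the off-diagonal terms).

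It then remains to count. For a fixed ordered pair $(p,q)$: if $a_p\ne a_q$, there are $\ve!$ surviving different-cycle permutations (namely all $\ve!$ eigenblock-preserving permutations) and, by a direct count of the allowed cycle shapes, $\ve!$ surviving same-cycle permutations, contributing a total of $\ve!\,(\eye + W_{(1\,2)})(\Phi_{a_p}\otimes\Phi_{a_q}) = 2\ve!\,\sym^{(2)}(\Phi_{a_p}\otimes\Phi_{a_q})$; if $a_p = a_q = k$, each type contributes $\tfrac{\ve!}{2}\Phi_k^{\otimes 2}$ — using that exactly half of the $e_k!$ permutations of eigenblock $k$ place two fixed elements in a common cycle, and that $W_{(1\,2)}\Phi_k^{\otimes 2} = \Phi_k^{\otimes 2}$ — for a total of $\ve!\,\Phi_k^{\otimes 2}$. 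Summing over the $e_ie_j$ ordered pairs of each mixed type $(i,j)$ and the $e_k(e_k-1) = 2\binom{e_k}{2}$ ordered pairs of each pure type $k$, dividing by $n(n-1)\,n!$, and finally dividing by $Z^{\ve} = \binom{n}{\ve}^{-1}$ from \Cref{thm:sigma0}, yields $\frac{2\sym^{(2)}}{n(n-1)}\bigl(\sum_{i\ne j}e_ie_j\,\Phi_i\otimes\Phi_j + \sum_k\binom{e_k}{2}\Phi_k^{\otimes 2}\bigr)$, as claimed. I expect the main obstacle to be exactly this combinatorial accounting: unlike in \Cref{thm:sigma0,thm:sigma1}, where every surviving permutation contributes the same thing and one simply counts permutations, here the contribution depends on the joint cycle structure of $\pi$ relative to the two retained positions, and one must be careful that a cycle through both $p$ and $q$ is permitted to change eigenstate index exactly at $p$ and at $q$ — which is what produces the off-diagonal contributions and assembles the symmetrizer $2\sym^{(2)} = W_{(1)(2)} + W_{(1\,2)}$. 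As a sanity check one can verify $\Tr(M_2^{\ve}) = 1$: since $\Tr(\sym^{(2)}(\Phi_i\otimes\Phi_j)) = \tfrac12(1+\delta_{ij})$, the claimed right-hand side has trace $\frac1{n(n-1)}\bigl(\sum_{i\ne j}e_ie_j + \sum_k e_k(e_k-1)\bigr) = \frac{n^2-n}{n(n-1)} = 1$, consistent with \Cref{lem:m_is_a_state}.
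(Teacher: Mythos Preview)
Your proposal is correct and follows essentially the same combinatorial strategy as the paper: average over which two positions are retained, expand $\sym^{(n)}$ as a sum of permutations, and split according to whether the retained positions lie in the same cycle or different cycles of $\pi$. The only notable difference is that the paper uses the bijection $\pi\leftrightarrow(p\,q)\pi$ between the two cycle types to factor out $2\sym^{(2)}$ at the outset and then only counts different-cycle (Type~I) survivors, whereas you count Type~I and Type~II survivors separately and assemble $2\sym^{(2)}=W_{(1)(2)}+W_{(1\,2)}$ at the end; your direct count that there are $\ve!$ same-cycle survivors when $a_p\ne a_q$ is correct but is exactly the step the paper's bijection shortcut avoids.
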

    \begin{proof}
        Expand $\sym^{(n)}$ as an average over permutations $\pi$, and average over an ordered pair of distinct indices, $a$ and $b$.
        \[
            \Tr_{[n-2]}( \sym^{(n)} \sigma(\ve)) = \frac{1}{n(n-1)} \sum_{a \neq b} \frac{1}{n!} \sum_{\pi \in \symm_n} \Tr_{-a,-b}\parens*{ W_{\pi} \bigotimes_{k=1}^{d} \Phi_k^{\otimes e_k} }
        \]
        For each $a, b$, divide the permutations into those where $a, b$ are in separate cycles (type \Romannum{1}), and those where $a, b$ are in the same cycle (type \Romannum{2}). We note that composing a permutation with the transposition $(a \, b)$ also changes the type, and since this operation is clearly invertible, it is a bijection between the two types of permutations. That is, there are equally many type \Romannum{1} and type \Romannum{2} permutations. In fact, we can pair up the permutations $\pi, \pi'$ (one of each type) matched by the bijection, factor $(W_{(a)(b)} + W_{(a\,b)})$ out of $W_{\pi} + W_{\pi'}$ and out of the partial trace where it becomes $2 \sym^{(2)}$. In other words, it suffices to analyze type \Romannum{1} permutations and then multiply by $2 \sym^{(2)}$.

        Let us say a position $a$ is \emph{incident} to state $\Phi_i$ if $\Phi_i$ is the $a$th term of the tensor product. The first case we are interested in is when $a$ and $b$ are incident to the same state, $\Phi_i$. We observe that there are $e_1! \cdots e_d!$ non-mixing permutations total, but this includes both type \Romannum{1} and type \Romannum{2} permutations. There are $e_i (e_i - 1)$ ways to pick $a$ and $b$ from $\Phi_i^{\otimes e_i}$, but we divide this in half to get those of type \Romannum{1}. Hence, when $a$ and $b$ are incident to the same $\Phi_i$, we have
        \[
             \sum_{\pi \in \textrm{Type \Romannum{1}}} \Tr_{-a,-b}\parens*{ W_{\pi} \bigotimes_{k=1}^{d} \Phi_k^{\otimes e_k} } 
             = \ve! \binom{e_i}{2} \Phi_i \otimes \Phi_i.
        \]
        The second case is $a$ incident to $\Phi_i$ and $b$ incident to $\Phi_j$ for $i \neq j$. There are $e_1! \cdots e_d!$ type \Romannum{1} permutations which do not mix eigenstates, and thus have a nonzero partial trace to contribute to the sum. Note that $\ve!$ over $n!$ from the average over permutations gives the $\binom{n}{\ve}^{-1}$ we've come to expect in these calculations. There are $e_i$ indices $a$ within $\Phi_i^{\otimes e_i}$, and $e_j$ indices $b$ within $\Phi_j^{\otimes e_j}$. To summarize, we get
        \[
             \sum_{\pi \in \textrm{Type \Romannum{1}}} \Tr_{-a,-b}\parens*{ W_{\pi} \bigotimes_{k=1}^{d} \Phi_k^{\otimes e_k} } 
             = \ve! \cdot e_i e_j \Phi_i \otimes \Phi_j.
        \]
        Altogether, this leads to
        \begin{align*}
             \Tr_{[n-2]}( \sym^{(n)} \sigma(\ve)) &= \frac{1}{n(n-1)} \sum_{a \neq b} \frac{1}{n!} \sum_{\pi \in \symm_n} \Tr_{-a,-b}\parens*{ W_{\pi} \bigotimes_{k=1}^{d} \Phi_k^{\otimes e_k} } \\
             &= \frac{2\sym^{(2)}}{n(n-1)} \sum_{a \neq b} \frac{1}{n!} \sum_{\pi \in \textrm{Type \Romannum{1}}} \Tr_{-a,-b}\parens*{ W_{\pi} \bigotimes_{k=1}^{d} \Phi_k^{\otimes e_k} } \\
             &= \frac{2\sym^{(2)}}{n(n-1)} \cdot \frac{\ve!}{n!} \left(\sum_i \binom{e_i}{2} \Phi_i \otimes \Phi_i + \sum_{i \neq j} e_i e_j \Phi_i \otimes \Phi_j \right),
        \end{align*}
        and dividing by $Z^{\ve} = \Tr_{n \to 0}(\sigma(\ve)) = \binom{n}{\ve}^{-1}$ yields the stated result. 
    \end{proof}

    Let us combine and summarize Theorems~\ref{thm:sigma0},~\ref{thm:sigma1}, and~\ref{thm:sigma2} in one result. 
    \begin{cor}
    \label{cor:symmetric_subspace_components}
        For all $n \geq 1$,
        \begin{align*}
            Z := \Tr \parens*{\sym^{(n)} \rho^{\otimes n}} &= \sum_{\ve} \vlambda^{\ve}, \\
            M_1 &= \frac{1}{Z} \sum_{\ve} \vlambda^{\ve} \frac{1}{n} \parens*{\sum_{i=1}^{d} e_i \Phi_i}, \\
            M_2 &= \frac{1}{Z} \sum_{\ve} \vlambda^{\ve} 
        \frac{2\sym^{(2)}}{n(n-1)}
        \parens*{\sum_{i \neq j} e_i e_j \Phi_i \otimes \Phi_j + \frac{1}{2} \sum_{k} e_k(e_k-1) \Phi_k \otimes \Phi_k},
        \end{align*}
        where all three outer sums are over $\ve \in \mathbb N^{d}$ such that $e_1 + \cdots + e_d = n$.
    \end{cor}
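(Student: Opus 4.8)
The plan is to obtain all three identities by pushing the per-type closed forms of \Cref{thm:sigma0,thm:sigma1,thm:sigma2} through the mixture decomposition $\rho^{\otimes n} = \sum_{\ve} \binom{n}{\ve} \vlambda^{\ve} \sigma(\ve)$ of \Cref{prop:rho_as_mixture}, using nothing more than linearity of the partial trace. Since $\Tr_{n \to k}(\cdot) = \Tr_{1,\ldots,n-k}(\sym^{(n)}\,\cdot\,)$ is linear, for each $k$ we get
\[
\Tr_{n \to k}(\rho^{\otimes n}) = \sum_{\ve} \binom{n}{\ve} \vlambda^{\ve}\, \Tr_{n \to k}(\sigma(\ve)),
\]
and the key observation is that every right-hand partial trace carries the factor $Z^{\ve} = \Tr_{n \to 0}(\sigma(\ve)) = \binom{n}{\ve}^{-1}$ coming from \Cref{thm:sigma0}, which cancels the multinomial weight $\binom{n}{\ve}$ exactly, leaving only $\vlambda^{\ve}$ times the ``shape'' operator of $M_k^{\ve}$.

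Carrying this out: for $k=0$, \Cref{thm:sigma0} gives $Z = \sum_{\ve} \binom{n}{\ve}\vlambda^{\ve}\binom{n}{\ve}^{-1} = \sum_{\ve}\vlambda^{\ve}$. For $k=1$, write $\Tr_{n\to1}(\sigma(\ve)) = Z^{\ve} M_1^{\ve} = \binom{n}{\ve}^{-1}\tfrac1n\sum_i e_i\Phi_i$ via \Cref{thm:sigma1}; summing against $\binom{n}{\ve}\vlambda^{\ve}$ and dividing by $Z$ yields the stated $M_1 = \Tr_{n\to1}(\rho^{\otimes n})/Z$. For $k=2$, \Cref{thm:sigma2} gives $\Tr_{n\to2}(\sigma(\ve)) = \binom{n}{\ve}^{-1}\tfrac{2\sym^{(2)}}{n(n-1)}\bigl(\sum_{i\neq j} e_i e_j\, \Phi_i\otimes\Phi_j + \sum_k \binom{e_k}{2}\Phi_k^{\otimes2}\bigr)$; the same cancellation and division by $Z$ produces $M_2$, after rewriting $\binom{e_k}{2} = \tfrac12 e_k(e_k-1)$ to match the stated form.

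An equivalent route is through \Cref{thm:ms_from_mse}: since $\Pr[\success\mid\ve] = Z^{\ve} = \binom{n}{\ve}^{-1}$, the success-conditioned distribution $\truedist$ has mass $\Pr[\ve\mid\success] \propto \binom{n}{\ve}^{-1}\cdot\binom{n}{\ve}\vlambda^{\ve} = \vlambda^{\ve}$, hence $\Pr[\ve\mid\success] = \vlambda^{\ve}/Z$, and then $M_k = \E_{\ve\sim\truedist}[M_k^{\ve}] = \tfrac1Z\sum_{\ve}\vlambda^{\ve} M_k^{\ve}$, into which one substitutes \Cref{thm:sigma1,thm:sigma2}. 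I do not expect a genuine obstacle in this corollary: it is a bookkeeping consolidation, and the only point requiring care is tracking that the $\binom{n}{\ve}$ weights and the $Z^{\ve}$ normalizations cancel cleanly and that the two ways of writing the diagonal $M_2$ term agree.
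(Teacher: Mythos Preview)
Your proposal is correct and follows essentially the same approach as the paper: expand $\Tr_{n\to k}(\rho^{\otimes n})$ by linearity over the mixture decomposition, use $\Tr_{n\to 0}(\sigma(\ve))=\binom{n}{\ve}^{-1}$ to cancel the multinomial weight, leaving $\sum_{\ve}\vlambda^{\ve} M_k^{\ve}$, and divide by $Z$. The paper's proof is a one-line version of exactly this computation, and your alternative route through \Cref{thm:ms_from_mse} is indeed equivalent.
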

    \begin{proof}
        For all $k$ we have,
        \[
        \Tr_{n \to k}(\rho^{\otimes n}) = \sum_{\ve} \binom{n}{\ve} \vlambda^{\ve} \Tr_{n \to k} \parens*{\sigma(\ve)} = \sum_{\ve} \vlambda^{\ve} \frac{\Tr_{n \to k} \parens*{\sigma(\ve)}}{\Tr_{n \to 0}(\sigma(\ve))} = \sum_{\ve} \vlambda^{\ve} M_k^{\ve},
        \]
        and then dividing through by $Z$ gives the results for $M_1$ and $M_2$. 
    \end{proof}

    Since $Z^{\ve} = \Pr[\success \mid \ve] = \binom{n}{\ve}^{-1}$, and $Z = \Pr[\success]$ we can now see that the probability mass function for $\truedist$ is
    \[
    \Pr[\ve \mid \success] = \frac{\Pr[\success \mid \ve] \Pr[\ve]}{\Pr[\success]} = \frac{\vlambda^{\ve}}{Z}.
    \]
    In light of this, the expressions in the corollary appear to be expectations over $\ve \samp \truedist$, which is precisely what Theorem~\ref{thm:ms_from_mse} proves, so everything squares up nicely. It is a good time to also bound the probability the measurement is successful in terms of $\lambda_1$. 

    \begin{thm}
        \label{thm:probability_of_success}
        The probability of a successful measurement is $Z$ and 
        \begin{equation}
        \lambda_1^{n-1} \leq Z \leq \frac{\lambda_1^{n+1}}{2 \lambda_1 - 1} = \lambda_1^{n-1} (1 + \bigo{\eta^2}) \label{eq:meas_fail_prob}
        \end{equation}
    \end{thm}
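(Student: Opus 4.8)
The plan is to work directly from the closed form $Z = \sum_{\ve} \vlambda^{\ve}$ established in Corollary~\ref{cor:symmetric_subspace_components}, where the sum runs over all $\ve \in \mathbb N^{d}$ with $e_1 + \cdots + e_d = n$. This is the complete homogeneous symmetric polynomial of degree $n$ in the eigenvalues, and the whole proof is extracting the two bounds from it using only $\lambda_i \ge 0$, $\sum_i \lambda_i = 1$, and the standing hypothesis $\lambda_1 > 1/2$.

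For the lower bound I would keep only the terms with $e_1 \in \{n-1,n\}$ and discard the rest (all nonnegative). The unique term with $e_1 = n$ contributes $\lambda_1^{n}$, and for each $j \in \{2,\dots,d\}$ there is exactly one term with $e_1 = n-1$ and $e_j = 1$, contributing $\lambda_1^{n-1}\lambda_j$; hence
\[
Z \;\ge\; \lambda_1^{n} + \sum_{j=2}^{d}\lambda_1^{n-1}\lambda_j \;=\; \lambda_1^{n-1}\sum_{j=1}^{d}\lambda_j \;=\; \lambda_1^{n-1}.
\]
For the upper bound I would regroup the sum by $m := n - e_1 = e_2 + \cdots + e_d$, which gives $Z = \sum_{m=0}^{n}\lambda_1^{n-m}\,h_m$ with $h_m := \sum_{e_2+\cdots+e_d = m}\lambda_2^{e_2}\cdots\lambda_d^{e_d}$. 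Since $h_m$ is a sum of degree-$m$ monomials in $\lambda_2,\dots,\lambda_d$, each of which appears with coefficient at least $1$ in the multinomial expansion of $(\lambda_2+\cdots+\lambda_d)^m = \eta^m$, we get $h_m \le \eta^m$. Summing the geometric series — which converges precisely because the ratio $\eta/\lambda_1 < 1$ is equivalent to $\lambda_1 > 1/2$ — yields
\[
Z \;\le\; \lambda_1^{n}\sum_{m=0}^{n}\left(\frac{\eta}{\lambda_1}\right)^{m} \;\le\; \lambda_1^{n}\cdot\frac{1}{1-\eta/\lambda_1} \;=\; \frac{\lambda_1^{n+1}}{\lambda_1 - \eta} \;=\; \frac{\lambda_1^{n+1}}{2\lambda_1 - 1},
\]
using $\lambda_1 - \eta = 2\lambda_1 - 1$. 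Finally, substituting $\lambda_1 = 1-\eta$ gives $\dfrac{\lambda_1^{n+1}}{2\lambda_1-1} = \lambda_1^{n-1}\dfrac{(1-\eta)^2}{1-2\eta} = \lambda_1^{n-1}\left(1 + \dfrac{\eta^2}{1-2\eta}\right) = \lambda_1^{n-1}\bigl(1+\bigo{\eta^2}\bigr)$, which is the claimed reformulation of the right-hand side.

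There is no deep obstacle here; the only points requiring care are the elementary bound $h_m \le \eta^m$ and, more importantly, the observation that the geometric series converges only because $\lambda_1 > 1/2$ — without that hypothesis the upper bound is vacuous, and this is exactly where the ``almost-pure'' assumption enters. One could instead derive the same estimate from the generating-function identity $\sum_{n\ge 0} h_n t^n = \prod_i (1-\lambda_i t)^{-1}$, but the finite regrouping above is cleaner and avoids any convergence bookkeeping beyond the single inequality $\eta < \lambda_1$.
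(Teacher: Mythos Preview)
Your proposal is correct and follows essentially the same route as the paper: both proofs truncate the sum $Z=\sum_{\ve}\vlambda^{\ve}$ to the $e_1\in\{n-1,n\}$ terms for the lower bound, and for the upper bound both regroup by $e_1$ (equivalently by $m=n-e_1$), dominate the inner sum by $(\lambda_2+\cdots+\lambda_d)^{m}=\eta^{m}$ via multinomial coefficients, and sum the resulting geometric series using $\lambda_1>1/2$. Your exact identity $\tfrac{(1-\eta)^2}{1-2\eta}=1+\tfrac{\eta^2}{1-2\eta}$ is a slightly cleaner way to read off the $1+\bigo{\eta^2}$ factor than the paper's Taylor expansion, but otherwise the arguments are the same.
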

    \begin{proof}
        Recall that $Z$ is the probability of success and $Z = \sum_{\ve} \vlambda^{\ve}$ over $\ve$ totaling $n$. On the one hand, it is lower bounded by the terms where $e_1 = n$ (i.e., $\lambda_1^{n}$) and $e_1 = n-1$ (i.e., $\lambda_1^{n-1} (\lambda_2 + \cdots + \lambda_d)$). It follows that 
        \[
        Z \geq \lambda_1^{n} + \lambda_1^{n-1}(\lambda_2 + \cdots + \lambda_d) = \lambda_1^{n-1}(\lambda_1 + \cdots + \lambda_d) = \lambda^{n-1}.
        \]
        On the other hand, since $\lambda_1$ is the dominant eigenvalue, it also makes sense to expand around it.
        \begin{align*}
            Z &= \sum_{e_1 + \cdots + e_d = n} \vlambda^{\ve} = \sum_{e_1=0}^{n} \sum_{e_2 + \cdots + e_d = n-e_1} \vlambda^{\ve}.
        \end{align*}
        We can then insert multinomial coefficients to simplify the $\lambda_2, \ldots, \lambda_d$ part.
        \begin{align*}
            Z &\leq \sum_{e_1=0}^{n} \sum_{e_2 + \cdots + e_d = n-e_1} \binom{n-e_1}{e_2,\ldots,e_d}\vlambda^{\ve} = \sum_{e_1=0}^{n} \lambda_1^{e_1} (\lambda_2 + \cdots + \lambda_d)^{n-e_1}
        \end{align*}
        Re-indexing and letting the sum extend to infinity, we have 
        \begin{align*}
            Z &\leq \sum_{j=0}^{n} \lambda_1^{n-j} (1 - \lambda_1)^{j}
            = \lambda_1^{n} \sum_{j=0}^{\infty} \parens*{\frac{1 - \lambda_1}{\lambda_1}}^{j}
            = \frac{\lambda_1^{n+1}}{2 \lambda_1 - 1}.
        \end{align*}
        Finally, $\eta = 1 - \lambda_1$ and we note that
        \[
            \frac{\lambda_1^2}{2\lambda_1 - 1} = \frac{(1- \eta)^{2}}{1 - 2\eta} = 1 + \eta^2 + \bigo{\eta^3},
        \]
        so the (multiplicative) gap between the two bounds is only $1 + \eta^2 + \bigo{\eta^3}$.
    \end{proof}
    
    \subsection{Geometric approximation}
    \label{sec:geometric_approximation}

    We now have expressions for the mean and variance of $\mhat$ in terms of $M_1$ and $M_2$ (Corollary~\ref{cor:estimator_mean_and_variance_wrt_ms}), expressions for $M_1$ and $M_2$ as expectations over $M_1^{\ve}$, $M_2^{\ve}$ (Theorem~\ref{thm:ms_from_mse}), expressions for $M_1^{\ve}$ and $M_2^{\ve}$ in terms of $\ve$ (Theorem~\ref{thm:sigma1}, Theorem~\ref{thm:sigma2}), and the distribution $\truedist$ for the expectation. There is one last obstacle to overcome: we would like to compute $\E_{\ve \in \truedist}[e_i]$ and $\E_{\ve \in \truedist}[e_i e_j]$, since those appear in $M_1$ and $M_2$. Exact expressions for these expectations have eluded us,\footnote{Also $Z$, which we could only upper and lower bound in Theorem~\ref{thm:probability_of_success}.} so we define an approximation, $\fakedist$, of the true distribution such that $\E_{\ve \in \fakedist}[e_i]$ is straightforward.

    Suppose the first eigenvalue is much larger than the rest, i.e., $\lambda_1 \gg \lambda_2 \geq \cdots \geq \lambda_d$. Hence, the $\ve$ vectors with highest probability in $\truedist$ have $e_1$ close to $n$, as large as possible. Let us rewrite the probability mass using the fact that $e_1 = n - e_2 - \cdots - e_d$. As long as $e_1, \ldots, e_d \in \mathbb N$, we have 
    \[
    f(\ve) = \frac{1}{Z} \lambda_1^{n-e_2-\ldots-e_d} \prod_{i=2}^{d} \lambda_i^{e_i} = \frac{\lambda_1^n}{Z} \prod_{i=2}^{d} \parens*{\frac{\lambda_i}{\lambda_1}}^{e_i}.
    \]
    It \emph{appears} that $\truedist$ factors as a product distribution on $e_2, \ldots, e_d$, i.e., it is proportional to $f_2(e_2) \cdots f_d(e_d)$ where $f_i(e_i) = (\frac{\lambda_i}{\lambda_1})^{e_i} (1 - \frac{\lambda_i}{\lambda_1})$ is the p.d.f.\ of a geometric random variable with mean $\frac{\lambda_i}{\lambda_1 - \lambda_i}$. We know $e_2, \ldots, e_d$ are not independent in $\truedist$, so there is a catch: in the \emph{very} unlikely event that $e_2 + \ldots + e_d$ exceeds $n$, the condition $e_1 = n - (e_2 + \cdots + e_d)$ requires us to set $e_1 < 0$. In fact, this is the only difference between the distributions.

    \begin{lemma}
        \label{lem:true_is_conditional_of_fake}
        The distribution $\truedist$ is exactly $\fakedist$ conditioned on $e_1 \geq 0$. 
    \end{lemma}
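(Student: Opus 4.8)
The plan is to compare the two probability mass functions directly on their common support. First I would pin down the support of each distribution. By \Cref{thm:sigma0} and the discussion following \Cref{cor:symmetric_subspace_components}, the distribution $\truedist$ is supported on $\{\ve \in \mathbb N^{d} : e_1 + \cdots + e_d = n\}$ with $\Pr_{\truedist}[\ve] = \vlambda^{\ve}/Z$; the multinomial coefficient $\binom{n}{\ve}$ from the mixture has already cancelled against $Z^{\ve} = \binom{n}{\ve}^{-1}$ when we conditioned on success. The distribution $\fakedist$, on the other hand, is most naturally described by its marginals on $(e_2, \dotsc, e_d) \in \mathbb N^{d-1}$: these are independent geometrics, so $\Pr_{\fakedist}[(e_2,\dotsc,e_d)] = \prod_{i=2}^{d} (\lambda_i/\lambda_1)^{e_i}(1 - \lambda_i/\lambda_1)$, with $e_1 := n - (e_2 + \cdots + e_d)$ determined and possibly negative. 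Each ratio $\lambda_i/\lambda_1$ is strictly less than $1$ for $i \geq 2$, since $\lambda_2 + \cdots + \lambda_d = \eta < 1/2 < \lambda_1$, so these geometrics are well-defined.

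Next I would observe that the event $\{e_1 \geq 0\}$ under $\fakedist$ is exactly the event $\{e_2 + \cdots + e_d \leq n\}$, i.e.\ the set of outcomes $(e_2,\dotsc,e_d)$ which, together with the determined value $e_1 \geq 0$, yield a vector $\ve \in \mathbb N^{d}$ summing to $n$ --- precisely the support of $\truedist$. This event has positive probability (take $e_2 = \cdots = e_d = 0$), so the conditioning is well-defined.

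The key computation is then to show the two pmfs agree up to a global constant on this support. Substituting $e_1 = n - e_2 - \cdots - e_d$, I would rewrite
\[
\vlambda^{\ve} = \lambda_1^{e_1} \prod_{i=2}^{d} \lambda_i^{e_i} = \lambda_1^{n} \prod_{i=2}^{d} \parens*{\frac{\lambda_i}{\lambda_1}}^{e_i},
\]
so that $\Pr_{\fakedist}[\ve] = \parens*{\prod_{i=2}^{d}(1 - \lambda_i/\lambda_1)}\, \lambda_1^{-n}\, \vlambda^{\ve}$; that is, on the support of $\truedist$ we have $\Pr_{\fakedist}[\ve] \propto \vlambda^{\ve}$ with a proportionality constant independent of $\ve$. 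Conditioning $\fakedist$ on $\{e_1 \geq 0\}$ and renormalizing therefore produces the distribution with mass $\propto \vlambda^{\ve}$ on $\{\ve \in \mathbb N^{d} : \sum_i e_i = n\}$, which is exactly $\truedist$. As a sanity check, the induced normalizing constant is $Z = \sum_{\ve} \vlambda^{\ve}$, matching \Cref{cor:symmetric_subspace_components}.

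I do not expect a genuine obstacle here; the only point requiring care is the bookkeeping of supports --- in particular, remembering that $\fakedist$ does place honest mass on vectors with $e_1 < 0$ (this is exactly the mass that conditioning strips away), and that the multinomial coefficients have already been absorbed when we identified $\truedist$ with the pmf $\vlambda^{\ve}/Z$ rather than with the raw multinomial.
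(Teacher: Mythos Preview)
Your proposal is correct and follows essentially the same approach as the paper's proof: both arguments compute the pmf of $\fakedist$ on the support $\{e_1 \geq 0\}$, observe it is proportional to $\vlambda^{\ve}$ via the substitution $e_1 = n - e_2 - \cdots - e_d$, and conclude by renormalization. Your version is slightly more explicit about the well-definedness of the geometrics and the nontriviality of the conditioning event, which the paper's proof leaves implicit.
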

    \begin{proof}
        For $\ve$ with $e_1 \geq 0$ (the full support of $\truedist$) we have already seen that the p.d.f.\ $f$ factors as a product of $f_i$ (times a constant).
        \[
            f(\ve) = \frac{1}{Z} \lambda_1^{n-e_2-\ldots-e_d} \prod_{i=2}^{d} \lambda_i^{e_i} = \frac{\lambda_1^n}{Z} \prod_{i=2}^{d} \parens*{\frac{\lambda_i}{\lambda_1}}^{e_i} = \frac{\lambda_1^n}{Z} \prod_{i=2}^{d} \frac{f_i(e_i)}{1 - \tfrac{\lambda_i}{\lambda_1}}
        \]
        That is, whenever $e_1, \ldots, e_d \geq 0$, the two distributions are proportional. The only other $\ve$ with any support in $\fakedist$ are those with $e_1 < 0$, therefore if we condition on $e_1 \geq 0$ then $\fakedist$ becomes $\truedist$. 
    \end{proof}

    The two distributions are \emph{very} close to each other. We have consigned the proofs to Appendix~\ref{app:distributions}, but we quote the highlights below. First, the probability that $e_1 < 0$ is indeed very small, which in turn bounds the total variation distance, $\norm{\truedist - \fakedist}_{TV}$.
    \begin{restatable}{thm}{distributiontv}
        \label{thm:prob_e_negative}
        \[
        \Pr_{\ve \samp \fakedist}[e_1 < 0] \leq \Delta := \parens*{\frac{1 - \lambda_1}{\lambda_1}}^{n+1} \frac{\lambda_1}{2\lambda_1 - 1}.
        \]
        It follows that $\norm{\truedist - \fakedist}_{TV} = \Delta$.
    \end{restatable}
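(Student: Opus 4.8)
The plan is to bound the probability of the single bad event $\{e_1<0\}$ head-on and then read the total-variation claim off the conditioning relationship of \Cref{lem:true_is_conditional_of_fake}. First I would record that under $\fakedist$ the coordinates $e_2,\dots,e_d$ are independent, $e_i$ being geometric with parameter $\lambda_i/\lambda_1$, and $e_1 := n-(e_2+\dots+e_d)$; it is worth noting in passing that $\fakedist$ is well defined precisely because $\eta<1/2$ forces $\lambda_i \le 1-\lambda_1 < \lambda_1$ for each $i\ge 2$, so every geometric parameter lies in $[0,1)$ and the geometric series below converges.

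The core estimate is elementary. The event $\{e_1<0\}$ is exactly $\{S\ge n+1\}$ for $S:=e_2+\dots+e_d$, so I would compute $\Pr_{\fakedist}[S=m]$ and sum over $m\ge n+1$. Multiplying out the geometric pmfs gives $\Pr_{\fakedist}[S=m]=\bigl(\prod_{i\ge 2}(1-\lambda_i/\lambda_1)\bigr)\sum_{e_2+\dots+e_d=m}\prod_{i\ge 2}(\lambda_i/\lambda_1)^{e_i}$; the prefactor is at most $1$, and since every multinomial coefficient is at least $1$, the multinomial theorem yields $\sum_{e_2+\dots+e_d=m}\prod_{i\ge 2}(\lambda_i/\lambda_1)^{e_i}\le\bigl(\sum_{i\ge 2}\lambda_i/\lambda_1\bigr)^m=\bigl((1-\lambda_1)/\lambda_1\bigr)^m$, using $\sum_{i\ge 2}\lambda_i=1-\lambda_1$. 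Summing this geometric series from $m=n+1$ produces exactly $\bigl((1-\lambda_1)/\lambda_1\bigr)^{n+1}\tfrac{\lambda_1}{2\lambda_1-1}=\Delta$, which is the first inequality.

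For the total-variation bound I would invoke \Cref{lem:true_is_conditional_of_fake}, which identifies $\truedist$ with $\fakedist$ conditioned on $E:=\{e_1\ge 0\}$. A one-line computation — $\fakedist(\cdot\mid E)$ agrees with $\fakedist/q$ on $E$ and vanishes off it, where $q:=\Pr_{\fakedist}[E]$ — shows $\norm{\fakedist(\cdot\mid E)-\fakedist}_{TV}=1-q$, with the supremum over events attained at $E$ itself, so $\norm{\truedist-\fakedist}_{TV}=\Pr_{\fakedist}[e_1<0]\le\Delta$. I do not anticipate a genuine obstacle here; the one step that repays care is discarding the multinomial coefficients in the middle estimate (legitimate since they are $\ge 1$), which is exactly what collapses the type-$m$ sum to the clean power $((1-\lambda_1)/\lambda_1)^m$ and makes the geometric series telescope to precisely $\Delta$ rather than to something messier.
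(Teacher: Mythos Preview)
Your proposal is correct and follows essentially the same route as the paper: bound $\Pr[S=m]$ by dropping the $\prod_{i\ge 2}(1-\lambda_i/\lambda_1)\le 1$ prefactor and inserting multinomial coefficients to collapse the type sum to $((1-\lambda_1)/\lambda_1)^m$, then sum the geometric tail; for the total-variation part, both you and the paper use the conditioning characterization from \Cref{lem:true_is_conditional_of_fake} to conclude $\norm{\truedist-\fakedist}_{TV}=\Pr_{\fakedist}[e_1<0]\le\Delta$. Your phrasing of the TV step (the general identity $\norm{\fakedist(\cdot\mid E)-\fakedist}_{TV}=1-\Pr_{\fakedist}[E]$) is slightly cleaner than the paper's ``mass that must be moved'' description, and you are right to write $\le\Delta$ rather than $=\Delta$, since only the inequality $\Pr_{\fakedist}[e_1<0]\le\Delta$ has been established.
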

    When we rewrite $\Delta$ in terms of $\eta \leq \tfrac{1}{3}$,
    \[
    \Delta = \parens*{\frac{\eta}{1 - \eta}}^{n+1} \frac{1 - \eta}{1-2\eta} \leq 2 \cdot \parens*{\tfrac{3}{2}\eta}^{n+1}, 
    \]
    we see that $\Delta = \bigo{\eta^{2}}$, even if $n = 1$. More realistically, we will have $n \approx \frac{1}{\eta}$, and then $\Delta$ vanishes even more quickly, as shown in the plot Figure~\ref{fig:plot}. 
    \begin{figure}
    \centering
    \input{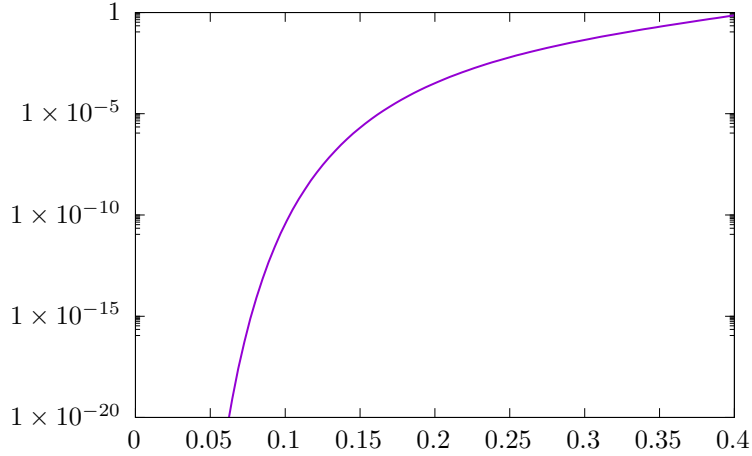}
    \caption{An upper bound on the mass of $\fakedist$ outside the support of $\truedist$ (as in Theorem~\ref{thm:prob_e_negative}) as a function of $\eta := 1 - \lambda_1$, assuming $n = 1/\eta$, semilog scale.} \label{fig:plot}
    \end{figure}

    We separately bound the change in $\E[\ve]$ for $\truedist$ versus $\fakedist$.
    \begin{restatable}{thm}{distributionmean}
        \label{thm:approx_first_order}
        The difference between $\truedist$ and $\fakedist$ for first-order expectations is at most 
        \[
        \norm{\E_{\truedist}[\ve] - \E_{\fakedist}[\ve]}_{1} = \sum_{i} \abs{\E_{\truedist}[e_i] - \E_{\fakedist}[e_i]} \leq \frac{2 \Delta}{1 - \Delta} \parens*{n + \frac{1}{2 \lambda_1 - 1}}.
        \]
    \end{restatable}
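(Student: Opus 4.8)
The plan is to lean on Lemma~\ref{lem:true_is_conditional_of_fake}, which identifies $\truedist$ with $\fakedist$ conditioned on the event $\{e_1 \geq 0\}$, together with the tail estimate of Theorem~\ref{thm:prob_e_negative}. Write $\delta := \Pr_{\fakedist}[e_1 < 0]$, so that $\delta \leq \Delta$. I would begin from the crude but exact bound
\[
\sum_i \abs{\E_{\truedist}[e_i] - \E_{\fakedist}[e_i]} \;=\; \sum_i \abs{\sum_{\ve} e_i \parens*{\truedist(\ve) - \fakedist(\ve)}} \;\leq\; \sum_{\ve} \parens*{\sum_i \abs{e_i}} \abs{\truedist(\ve) - \fakedist(\ve)},
\]
so that it suffices to control the signed measure $\truedist - \fakedist$ weighted pointwise by $\sum_i \abs{e_i}$.

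Two elementary observations make the right-hand side explicit. First, every $\ve$ in the support of either distribution has $e_2 + \cdots + e_d = n - e_1$ with $e_2, \ldots, e_d \geq 0$, so $\sum_i \abs{e_i} = n$ when $e_1 \geq 0$ and $\sum_i \abs{e_i} = n + 2\abs{e_1}$ when $e_1 < 0$. Second, by the conditioning in Lemma~\ref{lem:true_is_conditional_of_fake} we have $\truedist(\ve) = \fakedist(\ve)/(1-\delta)$ on $\{e_1 \geq 0\}$, hence $\truedist(\ve) - \fakedist(\ve) = \tfrac{\delta}{1-\delta}\fakedist(\ve)$ there, while $\truedist(\ve) - \fakedist(\ve) = -\fakedist(\ve)$ on $\{e_1 < 0\}$. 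Substituting, the sum over $\{e_1 \geq 0\}$ contributes $n \cdot \tfrac{\delta}{1-\delta} \cdot (1-\delta) = n\delta$; the $n$-part of the sum over $\{e_1 < 0\}$ contributes $n\delta$; and the remaining $2\abs{e_1}$-part contributes $2\,\E_{\fakedist}[\abs{e_1}\,\mathbf{1}[e_1 < 0]] = 2\,\E_{\fakedist}[(S-n)^+]$, where $S := e_2 + \cdots + e_d$ (using that $\abs{e_1} = S - n$ on $\{e_1 < 0\}$). So the $\ell_1$ distance is at most $2n\delta + 2\,\E_{\fakedist}[(S-n)^+]$.

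The only genuinely non-routine step is the upper-tail estimate for $S$, and here I would reuse the over-counting device from the proofs of Theorems~\ref{thm:probability_of_success} and~\ref{thm:prob_e_negative}. Since $S$ is a sum of independent geometric variables, $\Pr_{\fakedist}[S = N]$ equals $\prod_{i\geq2}(1 - \lambda_i/\lambda_1)$ times a sum over compositions of $N$ into parts $e_2,\ldots,e_d$ of $\prod_{i\geq2}(\lambda_i/\lambda_1)^{e_i}$; bounding the product factor by $1$ and inserting multinomial coefficients over-counts this by $\parens*{\sum_{i\geq2}\lambda_i/\lambda_1}^{N} = \parens*{\tfrac{1-\lambda_1}{\lambda_1}}^{N}$. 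Then $\E_{\fakedist}[(S-n)^+] = \sum_{N > n}(N-n)\Pr_{\fakedist}[S = N] \leq \sum_{k\geq1} k\,\parens*{\tfrac{1-\lambda_1}{\lambda_1}}^{n+k}$, which is a standard series: using $\sum_{k\geq1} k x^{k} = x/(1-x)^2$ and $1 - \tfrac{1-\lambda_1}{\lambda_1} = \tfrac{2\lambda_1-1}{\lambda_1}$ (positive, since $\eta < 1/2$) it evaluates to $\parens*{\tfrac{1-\lambda_1}{\lambda_1}}^{n+1}\tfrac{\lambda_1^2}{(2\lambda_1-1)^2} = \Delta\,\tfrac{\lambda_1}{2\lambda_1-1} \leq \tfrac{\Delta}{2\lambda_1-1}$.

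Combining with $\delta \leq \Delta$ gives $\sum_i \abs{\E_{\truedist}[e_i] - \E_{\fakedist}[e_i]} \leq 2n\Delta + \tfrac{2\Delta}{2\lambda_1-1} = 2\Delta\parens*{n + \tfrac{1}{2\lambda_1-1}}$, which implies the stated bound since $\Delta \leq \tfrac{\Delta}{1-\Delta}$. The points needing care are sign bookkeeping (the denominators $2\lambda_1 - 1$ are positive exactly because of the standing assumption $\eta < 1/2$) and making sure the multinomial over-counting is applied to the tail event $\{S = N\}$ with $N > n$, rather than to the bulk as in Theorem~\ref{thm:probability_of_success}; everything else is routine.
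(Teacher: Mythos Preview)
Your proof is correct and reaches the same bound by essentially the same ingredients as the paper---Lemma~\ref{lem:true_is_conditional_of_fake}, the tail estimate $\Pr_{\fakedist}[S=N]\le((1-\lambda_1)/\lambda_1)^N$, and the $e_1\ge 0$ versus $e_1<0$ split---but the organization differs. The paper writes $\E_{\fakedist}[\ve]=(1-p)\E_{\truedist}[\ve]+p\,\E_{\fakedist}[\ve\mid e_1<0]$, rearranges, and then bounds $\norm{\E_{\fakedist}[\ve_{-1}]}_1$ and $\norm{\E_{\fakedist}[\ve_{-1}\mid e_1<0]}_1$ separately (the latter via $\E_{\fakedist}[S\,\mathbf{1}[S>n]]=\sum_{j>n}j\,x^j$). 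You instead work pointwise with the signed measure $\truedist-\fakedist$ weighted by $\sum_i|e_i|$, which reduces the tail piece to $\E_{\fakedist}[(S-n)^+]=\sum_{k\ge 1}k\,x^{n+k}$. Your route is slightly more direct and in fact yields the sharper constant $2\Delta(n+\tfrac{1}{2\lambda_1-1})$ before you relax to $\tfrac{2\Delta}{1-\Delta}(\cdot)$ to match the stated bound; the paper's conditional-expectation decomposition picks up the $1/(1-\Delta)$ factor intrinsically. Either way, the substance is the same.
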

    Last, the variance difference is quantified with covariance matrices. 
    \begin{restatable}{lemma}{covariance} \label{lem:covariance_approx}
        The covariance matrices of $\truedist$ and $\fakedist$ are related as follows:
        \[
            \Sigma_{\fakedist} \succeq \Sigma_{\truedist} (1 - \Delta)^{2}.
        \]
    \end{restatable}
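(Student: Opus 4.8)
The plan is to exploit \Cref{lem:true_is_conditional_of_fake}, which identifies $\truedist$ with $\fakedist$ conditioned on the event $A := \{e_1 \ge 0\}$, and then to apply the matrix form of the law of total variance to the two-valued indicator $Z$ of whether $A$ occurs. Write $\mu_\fakedist := \E_\fakedist[\ve]$, $\mu_\truedist := \E_\truedist[\ve]$, and $\mu' := \E_\fakedist[\ve \mid A^c]$, and recall from \Cref{thm:prob_e_negative} that $\Pr_\fakedist[A^c] = \Delta$; in particular $\Delta < 1$, so $\truedist$ is well defined, and all the second moments in play are finite since every coordinate $e_i$ is (an affine function of) geometric random variables.

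First I would record the decomposition
\begin{align*}
\Sigma_\fakedist
&= \E_\fakedist\!\big[\Cov(\ve \mid Z)\big] + \Cov_\fakedist\!\big(\E[\ve \mid Z]\big) \\
&= (1-\Delta)\,\Sigma_\truedist \;+\; \Delta\,\Cov_\fakedist(\ve \mid A^c) \;+\; \Delta(1-\Delta)\,(\mu_\truedist - \mu')(\mu_\truedist - \mu')^{\!\top},
\end{align*}
where the first summand uses $\Cov_\fakedist(\ve \mid A) = \Sigma_\truedist$ (this is exactly \Cref{lem:true_is_conditional_of_fake}), and the last summand is the between-group variance of the two-point random variable $\E[\ve \mid Z]$, which takes value $\mu_\truedist$ with probability $1-\Delta$ and value $\mu'$ with probability $\Delta$. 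The second and third summands are positive semidefinite --- a conditional covariance matrix and a scaled rank-one outer product, respectively --- so they may be dropped, giving $\Sigma_\fakedist \succeq (1-\Delta)\,\Sigma_\truedist$. Since $0 \le 1-\Delta \le 1$ and $\Sigma_\truedist \succeq 0$, this yields the claimed $\Sigma_\fakedist \succeq (1-\Delta)^2\,\Sigma_\truedist$ a fortiori. (Equivalently, one can test against an arbitrary vector $v$: the scalar law of total variance applied to $v^\top \ve$ gives $\Var_\fakedist[v^\top \ve] \ge (1-\Delta)\,\Var_\truedist[v^\top \ve]$, which is the same inequality in quadratic-form language.)

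I expect the only genuine subtlety to be the bookkeeping around the fact that $\fakedist$ and $\truedist$ have different means: a naive attempt to compare the uncentered second-moment matrices $\E[\ve\ve^\top]$ entrywise and then subtract $\mu\mu^\top$ gets tangled in cross terms coming from $\mu_\fakedist \neq \mu_\truedist$. The total-variance identity sidesteps this by absorbing the mean shift into the manifestly PSD between-group term. If one insists on the direct route instead, the relevant computation --- $\E_\fakedist[\ve\ve^\top] = (1-\Delta)\,\E_\truedist[\ve\ve^\top] + \Delta\,\E_\fakedist[\ve\ve^\top \mid A^c]$, together with $\mu_\fakedist = (1-\Delta)\mu_\truedist + \Delta\mu'$ and the inequality $\E_\truedist[(\ve - \mu_\fakedist)(\ve - \mu_\fakedist)^\top] \succeq \Sigma_\truedist$ --- is precisely where a (mildly lossy) factor of $(1-\Delta)$, and hence the stated $(1-\Delta)^2$, naturally emerges. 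Either way, this lemma needs nothing about $\Delta$ beyond $\Delta \in [0,1)$; the quantitative bound from \Cref{thm:prob_e_negative} is only used downstream when this estimate is combined with the mean bound of \Cref{thm:approx_first_order}.
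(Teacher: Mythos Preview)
Your argument is correct and in fact establishes the stronger bound $\Sigma_\fakedist \succeq (1-\Delta)\,\Sigma_\truedist$, from which the stated inequality follows since $0 \le 1-\Delta \le 1$. One cosmetic slip: \Cref{thm:prob_e_negative} gives only $\Pr_\fakedist[A^c] \le \Delta$, not equality, so you should carry $p := \Pr_\fakedist[A^c]$ through the decomposition and invoke $1-p \ge 1-\Delta$ at the end; nothing else changes.

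The paper takes a genuinely different route. It writes the covariance via two independent copies, $\Sigma_\fakedist \propto \E_{\ve,\ve' \sim \fakedist}[(\ve-\ve')(\ve-\ve')^\top]$, and then conditions on the joint event $\{e_1 \ge 0\} \cap \{e_1' \ge 0\}$, which has probability at least $(1-\Delta)^2$; on that event the two samples are i.i.d.\ from $\truedist$ by \Cref{lem:true_is_conditional_of_fake}, and the complementary piece is a nonnegative combination of rank-one outer products and may be dropped. This is why the paper naturally lands on the squared factor. Your single-sample law-of-total-variance argument is tighter by one factor of $(1-\Delta)$ and arguably more transparent, at the mild cost of having to name the conditional mean and covariance on $A^c$; the paper's two-copy trick sidesteps those conditional quantities entirely, since every term $(\ve-\ve')(\ve-\ve')^\top$ is manifestly PSD regardless of centering.
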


    \subsection{Mean of the estimator}
    \label{sec:mean_of_estimator}

    In this section, we find the mean of the estimator $\mhat$ conditioned on the success of the measurement (on state $\rho$). Recall that $\E[\mhat] = M_1$ by Corollary~\ref{cor:estimator_mean_and_variance_wrt_ms}, and Theorem~\ref{thm:ms_from_mse} expands this into 
    \[
        M_1 := M_1(\rho^{\otimes n}) =\E_{\ve \samp \truedist}[ M_{1}(\sigma(\ve))] = \E_{\ve \samp \truedist}[ M_1^{\ve} ].
    \]
    Below we approximate this expectation, except with the geometric random variable distribution $\fakedist$ in place of $\truedist$

    \begin{theorem}[Mean with $\fakedist$]
        \label{thm:estimator_mean_fakedist}
        \[
            \E_{\ve \in \fakedist}[M_1^{\ve}] = \E_{\ve \in \fakedist} \bracks*{\frac{1}{n} \parens*{\sum_{i=1}^{d} e_i \Phi_i}} =
            \Phi_1 + \frac{1}{n} \sum_{j=2}^{d} \tfrac{\lambda_j}{\lambda_1 - \lambda_j} \parens*{ \Phi_j - \Phi_1 }
        \]
    \end{theorem}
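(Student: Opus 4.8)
The plan is to reduce the statement to a short scalar computation via linearity. First I would invoke Theorem~\ref{thm:sigma1}, which gives the exact identity $M_1^{\ve} = \frac{1}{n}\sum_{i=1}^{d} e_i \Phi_i$ for every type vector $\ve$ with $e_1 + \cdots + e_d = n$; this already establishes the first equality in the statement. Since this expression is \emph{linear} in the counts $e_1,\dotsc,e_d$, linearity of expectation immediately yields
\[
\E_{\ve \samp \fakedist}[M_1^{\ve}] = \frac{1}{n}\sum_{i=1}^{d} \E_{\ve \samp \fakedist}[e_i]\,\Phi_i,
\]
so the entire problem collapses to evaluating the $d$ scalar first moments $\E_{\fakedist}[e_i]$.

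Next I would compute those moments from the definition of $\fakedist$. For $i \geq 2$, the coordinate $e_i$ is an independent geometric random variable with mean $\lambda_i/(\lambda_1 - \lambda_i)$, so by the standard first moment of the geometric distribution $\E_{\fakedist}[e_i] = \lambda_i/(\lambda_1 - \lambda_i)$ with nothing further to check. The coordinate $e_1$ is not independent: it is pinned by the constraint $e_1 = n - (e_2 + \cdots + e_d)$, which holds pointwise and hence in expectation, giving $\E_{\fakedist}[e_1] = n - \sum_{j=2}^{d} \lambda_j/(\lambda_1 - \lambda_j)$. Substituting these into the display above and separating the $i=1$ term,
\[
\E_{\ve \samp \fakedist}[M_1^{\ve}] = \frac{1}{n}\Bigl(n - \sum_{j=2}^{d} \tfrac{\lambda_j}{\lambda_1 - \lambda_j}\Bigr)\Phi_1 + \frac{1}{n}\sum_{j=2}^{d} \tfrac{\lambda_j}{\lambda_1 - \lambda_j}\,\Phi_j = \Phi_1 + \frac{1}{n}\sum_{j=2}^{d} \tfrac{\lambda_j}{\lambda_1 - \lambda_j}\bigl(\Phi_j - \Phi_1\bigr),
\]
which is exactly the claimed identity.

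There is no serious obstacle: the result is essentially bookkeeping built on Theorem~\ref{thm:sigma1} and the product structure of $\fakedist$. The one point deserving care is not conflating $\fakedist$ with $\truedist$ — the clean closed form for $\E[e_i]$ is precisely the payoff of introducing $\fakedist$, since under $\truedist$ the analogous moments have no tractable expression — together with correctly treating $e_1$ as the dependent coordinate determined by the constraint rather than as an independent geometric variable.
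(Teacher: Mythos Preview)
Your proposal is correct and matches the paper's own proof essentially line for line: substitute $e_1 = n - \sum_{j\ge 2} e_j$, use the geometric mean $\E_{\fakedist}[e_j] = \lambda_j/(\lambda_1-\lambda_j)$ for $j\ge 2$ (the paper cites Proposition~\ref{prop:geometric_moments} for this), and collect terms. The only cosmetic difference is that the paper groups $\Phi_j - \Phi_1$ before taking the expectation rather than after, but the computation is identical.
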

    \begin{proof}
        We rewrite with $e_1 = n - e_2 - \cdots - e_d$ and use that $\E_{\ve \in \fakedist}[e_i] = \frac{\lambda_i}{\lambda_1 - \lambda_i}$ for all $2 \leq i \leq d$ (by Proposition~\ref{prop:geometric_moments}).
        \begin{align*}
            \E_{\ve \in \fakedist} \bracks*{\frac{1}{n} \parens*{\sum_{i=1}^{d} e_i \Phi_i}} 
            &= \frac{1}{n} \parens*{n \Phi_1 + \sum_{i=2}^{d} \E_{\ve \in \fakedist}[e_i] (\Phi_i - \Phi_1)}
            = \Phi_1 + \frac{1}{n} \sum_{i=2}^{d} \frac{\lambda_i}{\lambda_1 - \lambda_i} (\Phi_i - \Phi_1).
        \end{align*}
    \end{proof}
    \begin{cor}
        \label{cor:estimator_mean_truedist}
        \[
        \tracenorm{ M_1 - \Phi_1 - \frac{1}{n} \sum_{j=2}^{d} \tfrac{\lambda_j}{\lambda_1 - \lambda_j} \parens*{ \Phi_j - \Phi_1 }} = \bigo{\eta^2} 
        \]
    \end{cor}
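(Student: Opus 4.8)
The plan is to express both $M_1$ and the operator in the statement as the \emph{same} linear functional of $\ve$, evaluated under $\truedist$ and under $\fakedist$ respectively, and then invoke the first-moment closeness of these two distributions. Combining \Cref{thm:ms_from_mse} with \Cref{thm:sigma1} gives $M_1 = \E_{\ve \samp \truedist}[M_1^{\ve}] = \E_{\ve \samp \truedist}\bracks*{\tfrac1n \sum_{i=1}^d e_i \Phi_i}$, while \Cref{thm:estimator_mean_fakedist} shows that the operator $\Phi_1 + \tfrac1n \sum_{j=2}^d \tfrac{\lambda_j}{\lambda_1 - \lambda_j}(\Phi_j - \Phi_1)$ appearing in the corollary equals $\E_{\ve \samp \fakedist}\bracks*{\tfrac1n \sum_{i=1}^d e_i \Phi_i}$. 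So the quantity to bound is exactly the difference of these two expectations.

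Next I would eliminate $e_1$. Writing $e_1 = n - (e_2 + \dots + e_d)$ gives $\tfrac1n \sum_{i=1}^d e_i \Phi_i = \Phi_1 + \tfrac1n \sum_{i=2}^d e_i (\Phi_i - \Phi_1)$, a form in which the deterministic $\Phi_1$ term cancels between the two expectations, leaving only the $i \ge 2$ terms, where $\truedist$ and $\fakedist$ genuinely differ. Hence
\[
M_1 - \Phi_1 - \tfrac1n \sum_{j=2}^d \tfrac{\lambda_j}{\lambda_1 - \lambda_j}(\Phi_j - \Phi_1) = \tfrac1n \sum_{i=2}^d \parens*{\E_{\truedist}[e_i] - \E_{\fakedist}[e_i]} (\Phi_i - \Phi_1).
\]
Taking trace norms and using $\tracenorm{\Phi_i - \Phi_1} \le 2$ (a difference of two rank-one projectors onto orthonormal eigenvectors) reduces the claim to bounding $\tfrac2n \sum_{i=2}^d \abs{\E_{\truedist}[e_i] - \E_{\fakedist}[e_i]}$.

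Now I would apply \Cref{thm:approx_first_order}, which bounds $\sum_i \abs{\E_{\truedist}[e_i] - \E_{\fakedist}[e_i]}$ by $\tfrac{2\Delta}{1-\Delta}\parens*{n + \tfrac{1}{2\lambda_1 - 1}}$, with $\Delta = \parens*{\tfrac{1-\lambda_1}{\lambda_1}}^{n+1}\tfrac{\lambda_1}{2\lambda_1-1}$ as in \Cref{thm:prob_e_negative}. Multiplying by $\tfrac2n$ yields a bound of $\tfrac{4\Delta}{1-\Delta}\parens*{1 + \tfrac{1}{n(2\lambda_1-1)}}$; the crucial point is that the linear-in-$n$ term is exactly absorbed by the $\tfrac1n$ prefactor, leaving something proportional to $\Delta$. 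Rewriting in terms of $\eta = 1 - \lambda_1$: since $2\lambda_1 - 1 = 1 - 2\eta$ is bounded below by a constant for $\eta$ bounded away from $1/2$ (in particular as $\eta \to 0$), the factor $1 + \tfrac{1}{n(2\lambda_1 - 1)}$ is $\bigo{1}$ for all $n \ge 1$, $1/(1-\Delta) = \bigo{1}$, and $\Delta = \parens*{\tfrac{\eta}{1-\eta}}^{n+1}\tfrac{1-\eta}{1-2\eta} = \bigo{\eta^{2}}$ already at $n = 1$ (and smaller for larger $n$). Therefore the whole expression is $\bigo{\eta^2}$.

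\textbf{Expected main difficulty.} There is no real conceptual obstacle here: this corollary is the final assembly step. The only points requiring care are (i) performing the substitution $e_1 = n - \sum_{i \ge 2} e_i$ so that the $\Phi_1$ contributions cancel exactly, leaving only the indices where the two distributions differ, and (ii) verifying that the $1/n$ prefactor precisely cancels the linear-in-$n$ term from \Cref{thm:approx_first_order} and that no constant — notably $1/(2\lambda_1-1)$ — blows up in the regime of interest. Both are immediate given the earlier lemmas.
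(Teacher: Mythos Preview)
Your proof is correct and follows essentially the same approach as the paper: express the quantity as the difference of $\E_{\truedist}[M_1^{\ve}]$ and $\E_{\fakedist}[M_1^{\ve}]$, reduce the trace norm to $\tfrac{1}{n}\sum_i |\E_{\truedist}[e_i] - \E_{\fakedist}[e_i]|$, and apply \Cref{thm:approx_first_order} together with $\Delta = \bigo{\eta^2}$. The only difference is cosmetic: the paper observes directly that the operator is diagonal in the $\Phi_i$ basis, so the trace norm equals $\tfrac{1}{n}\sum_{i=1}^d |\E_{\truedist}[e_i] - \E_{\fakedist}[e_i]|$ exactly, whereas your substitution $e_1 = n - \sum_{i\ge 2} e_i$ followed by the triangle inequality with $\tracenorm{\Phi_i - \Phi_1}\le 2$ costs an immaterial factor of two.
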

    \begin{proof}
        This result is about comparing $M_1 = \E_{\ve \in \truedist}\bracks*{M_1^{\ve}}$ with 
        \[
        \E_{\ve \in \fakedist}[M_1^{\ve}] = \E_{\ve \in \fakedist}\bracks*{\frac{1}{n} \parens*{\sum_{i=1}^{d} e_i \Phi_i}} = \Phi_1 + \frac{1}{n} \sum_{j=2}^{d} \tfrac{\lambda_j}{\lambda_1 - \lambda_j} \parens*{ \Phi_j - \Phi_1 }
        \]
        from the previous theorem. That is, we are bounding the difference in $M_1$ due to our approximation of $\truedist$ with $\fakedist$. Since the operators are diagonal in the $\Phi_i$ basis, the trace norm simplifies:
        \[
        \frac{1}{n} \parens*{\tracenorm{\sum_{i=1}^{d} (\E_{\ve \in \truedist}\bracks*{e_i} - \E_{\ve \in \fakedist}\bracks*{e_i}) \Phi_i}} = \frac{1}{n} \sum_{i=1}^{d} \abs{\E_{\ve \in \truedist}[e_i] - \E_{\ve \in \fakedist}[e_i]}.
        \]
        Theorem~\ref{thm:approx_first_order} upper bounds this by 
        \[
        \frac{1}{n} \frac{2 \Delta}{1 - \Delta} \parens*{n + \frac{1}{2 \lambda_1 - 1}} = \bigo{\Delta} \subseteq \bigo{\eta^2}. 
        \]
    \end{proof}

    \subsection{Variance of the estimator}
    \label{sec:variance_of_estimator}

    Recall that there are two sources of variance for $\Tr(O \mhat)$: mixture randomness (from $\rho^{\otimes n}$ being a mixture of pure states $\sigma(\ve)$), and the inherent randomness of quantum measurement. These two sources of randomness are responsible for variance of $\Tr(O \mhat)$, as formalized by the law of total variance.
    \begin{thm}[Law of total variance]
		\label{thm:total_variance}
		\[
			\Var_{\ve \samp \truedist,\meas}(\Tr(O \mhat)) = \Var_{\ve \samp \truedist}[ \E_{\meas}(\Tr(O\mhat) \mid \ve)] + \E_{\ve \samp \truedist}[ \Var_{\meas}(\Tr(O\mhat) \mid \ve)].
		\]
	\end{thm}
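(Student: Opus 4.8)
The plan is to apply the textbook law of total variance to the real-valued random variable $X := \Tr(O\mhat)$, viewing it as a function of a two-stage random process: first the \emph{mixture draw} $\ve \samp \truedist$, which selects the symmetrized pure-state block $\sigma(\ve)$ inside the decomposition of $\rho^{\otimes n}$ (conditioned on a successful measurement, cf.\ Theorem~\ref{thm:ms_from_mse}), and then the intrinsic \emph{measurement randomness} of $\mathcal M_n$ applied to $\sigma(\ve)$. Note $X$ is real because $O$ and $\mhat$ are Hermitian, and bounded because $\Psi$ is either $0$ or a rank-one projector, so all second moments below are finite and the manipulations are legitimate.

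First I would write $\Var_{\ve\samp\truedist,\meas}(X) = \E_{\ve\samp\truedist,\meas}[X^2] - \E_{\ve\samp\truedist,\meas}[X]^2$ and apply the tower property to each term:
\[
\E_{\ve\samp\truedist,\meas}[X^2] = \E_{\ve\samp\truedist}\bracks*{\E_{\meas}(X^2 \mid \ve)}, \qquad \E_{\ve\samp\truedist,\meas}[X] = \E_{\ve\samp\truedist}\bracks*{\E_{\meas}(X \mid \ve)}.
\]
Next I would substitute the identity $\E_{\meas}(X^2 \mid \ve) = \Var_{\meas}(X \mid \ve) + \E_{\meas}(X \mid \ve)^2$ into the first of these, giving
\[
\E_{\ve\samp\truedist,\meas}[X^2] = \E_{\ve\samp\truedist}\bracks*{\Var_{\meas}(X \mid \ve)} + \E_{\ve\samp\truedist}\bracks*{\E_{\meas}(X \mid \ve)^2}.
\]
Finally I would subtract $\E_{\ve\samp\truedist,\meas}[X]^2 = \parens*{\E_{\ve\samp\truedist}\bracks*{\E_{\meas}(X\mid\ve)}}^2$; the last two displayed terms then combine into $\Var_{\ve\samp\truedist}\bracks*{\E_{\meas}(X\mid\ve)}$ by the definition of variance, and rearranging yields the claimed identity.

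There is essentially no obstacle here—this is a standard fact—so the only points worth stating carefully are bookkeeping ones: every conditional expectation and variance is implicitly taken \emph{given a successful measurement} (matching the definition of $\mhat$, which we only define on success), and the randomness genuinely factors as a two-stage process so the tower property applies. If one wanted to be fully explicit one could also invoke the boundedness of $X$ (in terms of $d$, $n$, and $\inftynorm{O}$) to justify interchanging expectations, but for our purposes this is immediate.
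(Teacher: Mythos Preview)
Your proposal is correct and is the standard textbook derivation of the law of total variance; the paper itself does not supply a proof for this statement, simply invoking it by name as a well-known fact. There is nothing to compare---your argument is exactly what one would write if a proof were required.
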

    \noindent We go on to bound these terms individually. Since the expressions for variance get somewhat unwieldy, we introduce shorthand common terms involving $O$ and $\Phi_i$: let $O_i := \Tr(O \Phi_i)$ and $O_{ij} := \Tr(O \Phi_i O \Phi_j)$. In general $O_i O_j \neq O_{ij}$, but it will be important that
    \[
    O_{kk} = \Tr(O \Phi_k O \Phi_k) = \Tr(O \Phi_k)^2 = O_k^2
    \]
    for all $1 \leq k \leq d$. We also introduce $\mhat^{\ve}$ to represent the estimator conditioned on $\sigma(\ve)$ being the input state. 

    \subsubsection{Variance due to mixture randomness}
    
	\begin{thm}
		\label{thm:mixture_variance}
		The variance in $\Tr(O \mhat)$ due to $\rho^{\otimes n}$ being a mixture of $\sigma(\ve)$ is 
		\[
		\Var_{\ve \in \truedist}[\E_{\textup{meas}}(\Tr(O\mhat) \mid \ve)] = \Var_{\ve \in \truedist}[\E_{\textup{meas}}(\Tr(O\mhat^{\ve})] \leq \frac{4 \norm{O}_\infty^2}{n^2 (1 - \Delta)^2}\frac{\lambda_1(1-\lambda_1)}{(2 \lambda_1 - 1)^2}.
		\]
	\end{thm}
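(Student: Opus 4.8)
The plan is to reduce the quantity to a quadratic form in the covariance matrix of $\truedist$ and then invoke Lemma~\ref{lem:covariance_approx} to trade the unwieldy distribution $\truedist$ for the product distribution $\fakedist$, where everything becomes an explicit computation.

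First I would unpack the conditional expectation. Conditioned on the mixture component $\ve$, the estimator $\mhat^{\ve}$ is (by Corollary~\ref{cor:estimator_mean_and_variance_wrt_ms}) an unbiased estimator of $M_1^{\ve}$, and Theorem~\ref{thm:sigma1} gives $M_1^{\ve} = \frac{1}{n}\sum_{i=1}^{d} e_i \Phi_i$ explicitly. Hence $\E_{\meas}(\Tr(O\mhat^{\ve})) = \Tr(O M_1^{\ve}) = \frac{1}{n}\sum_{i=1}^{d} e_i O_i$ in the $O_i = \Tr(O\Phi_i)$ shorthand. Since $e_1 + \cdots + e_d = n$ is deterministic, substituting $e_1 = n - e_2 - \cdots - e_d$ rewrites this as $O_1 + \frac{1}{n}\sum_{i=2}^{d} e_i (O_i - O_1)$; the constant $O_1$ drops out of the variance, leaving
\[
\Var_{\ve \in \truedist}[\E_{\meas}(\Tr(O\mhat)\mid\ve)] = \frac{1}{n^{2}}\, \vv^{\mathsf{T}}\,\Sigma_{\truedist}\,\vv,
\]
where $\vv = (O_2 - O_1, \ldots, O_d - O_1)$ and $\Sigma_{\truedist}$ is the covariance matrix of $(e_2,\ldots,e_d)$ under $\truedist$.

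Next I would apply Lemma~\ref{lem:covariance_approx}, which gives $\Sigma_{\fakedist} \succeq (1-\Delta)^{2}\,\Sigma_{\truedist}$; pairing both sides with $\vv$ (legitimate since the ordering is in the Loewner sense) yields $\vv^{\mathsf{T}}\Sigma_{\truedist}\vv \le (1-\Delta)^{-2}\,\vv^{\mathsf{T}}\Sigma_{\fakedist}\vv$. Under $\fakedist$ the coordinates $e_2,\ldots,e_d$ are independent geometric random variables, so $\Sigma_{\fakedist}$ is diagonal, and a one-line computation (or Proposition~\ref{prop:geometric_moments}) identifies its $i$-th entry as $\tfrac{\lambda_1\lambda_i}{(\lambda_1-\lambda_i)^{2}}$. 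Therefore
\[
\vv^{\mathsf{T}}\Sigma_{\fakedist}\vv = \sum_{i=2}^{d} \frac{\lambda_1\lambda_i}{(\lambda_1-\lambda_i)^{2}}\,(O_i - O_1)^{2}.
\]

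Finally, two elementary estimates close the argument: since each $\Phi_i$ is a rank-one projector, $|O_i| = |\Tr(O\Phi_i)| \le \inftynorm{O}$, so $(O_i - O_1)^{2} \le 4\inftynorm{O}^{2}$; and since $\lambda_2 + \cdots + \lambda_d = 1-\lambda_1$ forces $\lambda_i \le 1 - \lambda_1$ for every $i \ge 2$, we get $\lambda_1 - \lambda_i \ge 2\lambda_1 - 1 > 0$ (using $\lambda_1 > 1/2$), hence $(\lambda_1-\lambda_i)^{2} \ge (2\lambda_1-1)^{2}$. Combining these with $\sum_{i=2}^{d}\lambda_i = 1 - \lambda_1$ gives $\vv^{\mathsf{T}}\Sigma_{\fakedist}\vv \le 4\inftynorm{O}^{2}\,\tfrac{\lambda_1(1-\lambda_1)}{(2\lambda_1-1)^{2}}$, and stringing together the displays above produces exactly the claimed bound. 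The only step requiring any care is the passage from the Loewner inequality of Lemma~\ref{lem:covariance_approx} to the scalar inequality for the quadratic form; the genuine complexity of $\truedist$ is entirely absorbed into that lemma, so the rest is bookkeeping. (The identity $O_{kk} = O_k^{2}$ flagged in the text is not needed here; it will be used for the measurement-variance term.)
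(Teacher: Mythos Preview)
Your proof is correct and follows essentially the same route as the paper's: express $\E_{\meas}[\Tr(O\mhat)\mid\ve]=\frac{1}{n}\sum_i e_i O_i$, pass to a quadratic form in the covariance matrix, invoke Lemma~\ref{lem:covariance_approx} to swap $\truedist$ for $\fakedist$, and then exploit the independence of $e_2,\ldots,e_d$ under $\fakedist$ together with the elementary bounds $(O_i-O_1)^2\le 4\inftynorm{O}^2$ and $\lambda_1-\lambda_i\ge 2\lambda_1-1$. The only cosmetic difference is ordering: the paper applies the Loewner inequality to the full $d$-dimensional covariance and substitutes $e_1=n-\sum_{i\ge 2}e_i$ afterward, whereas you substitute first and work with the $(d-1)$-dimensional covariance (which is fine, since a Loewner inequality on the full matrix restricts to any principal submatrix, or equivalently since your $\vv$ is just the paper's vector padded with a zero in the first coordinate).
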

    \begin{proof}
        First, $\E_{\meas}(\Tr(O \mhat^{\ve}) \mid \ve) = \Tr(O \E_{\meas}(\mhat^{\ve})) = \Tr(O M_1^{\ve})$ by Corollary~\ref{cor:estimator_mean_and_variance_wrt_ms} --- the corollary is for a general state $A$, so it applies with $\sigma(\ve)$. 

        Theorem~\ref{thm:sigma1} gives an expression for $M_1^{\ve}$:
        \[
            \Tr(O M_1^{\ve}) = \frac{1}{n} \sum_{i=1}^{d} e_i \Tr(O \Phi_i) = \frac{1}{n} \sum_{i=1}^{d} e_i O_i = \frac{\vv^{\top} \ve}{n}, 
        \]
        where $\vv = (O_1, \ldots, O_d)$ is the vector of $O_i$s. The variance is thus 
        \[
        \Var_{\ve \samp \truedist}[\Tr(O M_1^{\ve})] = \frac{1}{n^2} \Var_{\ve \samp \truedist}[\vv^{\top} \ve] = \frac{\vv^{\top} \Sigma_{\truedist} \vv}{n^2},
        \]
        where $\Sigma_{\truedist}$ is the covariance matrix for $\truedist$. By Lemma~\ref{lem:covariance_approx}, $\Sigma_{\truedist} \preceq \frac{1}{(1 - \Delta)^2} \Sigma_{\fakedist}$, and thus 
        \[
        \Var_{\ve \samp \truedist}[\Tr(O M_1^{\ve})] = \frac{\vv^{\top} \Sigma_{\truedist} \vv}{n^2} \leq \frac{\vv^{\top} \Sigma_{\fakedist} \vv}{n^2 (1 - \Delta)^2} = \frac{1}{(1 - \Delta)^2} \Var_{\ve \samp \fakedist}\bracks*{\frac{1}{n} \sum_{i=1}^{d} e_i O_i}.
        \]
        Since $e_2, \ldots, e_d$ are independent under $\fakedist$, we rewrite with $e_1 = n - e_2 - \cdots - e_d$ and simplify as much as possible.
        \[
        \Var_{\ve \samp \fakedist}\bracks*{\frac{1}{n}\sum_{i=1}^{d} e_i O_i} = \Var_{\ve \samp \fakedist}\bracks*{O_1 + \frac{1}{n} \sum_{i=2}^{d} e_i (O_i - O_1)} = \frac{1}{n^2} \sum_{i=2}^{d}\Var_{\ve \samp \fakedist}\bracks*{e_i} (O_i - O_1)^2
        \]
        Since $(O_i - O_1)^2 \leq 4\inftynorm{O}^2$ and $\Var_{\ve \samp \fakedist}[e_i] \leq \frac{\lambda_1 \lambda_i}{\lambda_1 - \lambda_i} \leq \frac{\lambda_1 \lambda_i}{2 \lambda_1 - 1}$, the variance is bounded by 
        \[
        \Var_{\ve \samp \truedist}[\Tr(O M_1^{\ve})] \leq \frac{1}{n^2(1- \Delta)^2} \sum_{i=2}^{d}\Var_{\ve \samp \fakedist}\bracks*{e_i} (O_i - O_1)^2 \leq \frac{4 \inftynorm{O}^2}{n^2 (1 - \Delta)^2} \frac{\lambda_1 (1 - \lambda_1)}{(2 \lambda_1 - 1)^2}.
        \]
    \end{proof}
    
    We note that the variance due to mixture randomness is small in all the ways we want: it is a function of $\norm{O}^2$ rather than $\Tr(O^2)$, it is quadratic (rather than linear) in $\frac{1}{n}$, and it is multiplied by a factor of $\eta = 1 - \lambda_1$. We proceed with the analysis of the other (dominant) term in the variance. 
    
    \subsubsection{Variance due to measurement randomness}

	\begin{lemma}
		\label{lem:meas_variance_part1}
		\begin{equation}
    		\Var_{\meas}[\Tr(O \mhat) \mid \ve] \leq \frac{\Tr(O^2)}{n^2} + \frac{2\inftynorm{O}^2}{n} + \frac{1}{n^2} \sum_{i \neq j} e_i e_j O_{ij} \label{eq:measurevar_start}
        \end{equation}
	\end{lemma}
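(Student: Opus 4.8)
The plan is to reduce directly to Corollary~\ref{cor:estimator_mean_and_variance_wrt_ms}. That corollary holds for an arbitrary exchangeable state $A$, so I would apply it with $A = \sigma(\ve)$: conditioning the measurement on the type $\ve$ is the same as running the measurement on input $\sigma(\ve)$, and the roles of $M_1$, $M_2$ are then played by $M_1^{\ve}$, $M_2^{\ve}$. This gives
\[
\Var_{\meas}[\Tr(O\mhat) \mid \ve] \leq \frac{\Tr(O^2)}{n^2} + \frac{2\inftynorm{O}^2}{n} + \frac{n-1}{n}\Tr\parens*{O^{\otimes 2} M_2^{\ve}} - \Tr\parens*{O M_1^{\ve}}^2,
\]
so the entire content of the lemma is to show that the last two terms sum to at most $\frac{1}{n^2}\sum_{i \neq j} e_i e_j O_{ij}$.

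To do this I would substitute the closed forms of Theorems~\ref{thm:sigma1} and~\ref{thm:sigma2}. The only nontrivial step is computing the traces against $O^{\otimes 2}$: writing $\sym^{(2)} = \tfrac12\parens*{\eye + W_{(1\,2)}}$ and using the swap identity $\Tr\parens*{(X\otimes Y)W_{(1\,2)}} = \Tr(XY)$ together with cyclicity of the trace, one gets $\Tr\parens*{O^{\otimes 2}\sym^{(2)}(\Phi_i \otimes \Phi_j)} = \tfrac12(O_i O_j + O_{ij})$, and in particular $\Tr\parens*{O^{\otimes 2}\sym^{(2)}\Phi_k^{\otimes 2}} = O_k^2$ since $O_{kk} = O_k^2$. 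Feeding this into the expression for $M_2^{\ve}$ and collecting the multinomial prefactors yields
\[
\frac{n-1}{n}\Tr\parens*{O^{\otimes 2} M_2^{\ve}} = \frac{1}{n^2}\parens*{\sum_{i \neq j} e_i e_j\,(O_i O_j + O_{ij}) + \sum_k e_k(e_k - 1)\,O_k^2}.
\]

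Finally I would expand $\Tr(O M_1^{\ve})^2 = \tfrac{1}{n^2}\parens*{\sum_i e_i O_i}^2 = \tfrac{1}{n^2}\parens*{\sum_{i \neq j} e_i e_j O_i O_j + \sum_k e_k^2 O_k^2}$ and subtract. The off-diagonal $O_i O_j$ terms cancel exactly, leaving $\frac{1}{n^2}\sum_{i \neq j} e_i e_j O_{ij}$ together with a diagonal remainder $\frac{1}{n^2}\sum_k\parens*{e_k(e_k-1) - e_k^2}O_k^2 = -\frac{1}{n^2}\sum_k e_k O_k^2$, which is nonpositive and can be discarded. That gives the stated bound. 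The argument is essentially pure bookkeeping; the one place where care matters is keeping the $i=j$ contribution in $M_2^{\ve}$ (the $\binom{e_k}{2}\Phi_k^{\otimes 2}$ piece) separate from the off-diagonal part, since it is precisely the mismatch between $e_k(e_k-1)$ and the $e_k^2$ coming from squaring $\Tr(O M_1^{\ve})$ that produces the favorable cancellation — drop that term too early and the bound would not close.
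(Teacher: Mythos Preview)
Your proposal is correct and follows essentially the same approach as the paper's proof: apply Corollary~\ref{cor:estimator_mean_and_variance_wrt_ms} to $\sigma(\ve)$, expand $M_2^{\ve}$ and $M_1^{\ve}$ via Theorems~\ref{thm:sigma1} and~\ref{thm:sigma2}, cancel the $O_iO_j$ cross terms against $\Tr(OM_1^{\ve})^2$, and drop the nonpositive $-\tfrac{1}{n^2}\sum_k e_k O_k^2$ remainder. The bookkeeping and the key observation about the $e_k(e_k-1)$ versus $e_k^2$ mismatch are exactly what the paper does.
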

    \begin{proof} 
        Recall that Corollary~\ref{cor:estimator_mean_and_variance_wrt_ms} already gives 
        \[
        \Var_{\meas}[\Tr(O\mhat^{\ve})] \leq \frac{\Tr(O^2)}{n^2} + \frac{2 \inftynorm{O}^2}{n} + \frac{n-1}{n} \Tr(O^{\otimes 2} M_2^{\ve}) - \Tr(O M_1^{\ve})^2,
        \]
        by applying it to $\sigma(\ve)$. The first two terms match the goal, so we focus on bounding the last two terms, using the expressions for $M_1^{\ve}$ and $M_2^{\ve}$ from Theorem~\ref{thm:sigma1} and Theorem~\ref{thm:sigma2}.

        \begin{align*}
            \qquad n(n-1) \Tr(O^{\otimes 2} M_2^{\ve})
            &= \Tr \parens*{ O^{\otimes 2} 2\sym^{(2)} \parens*{\sum_{i \neq j} e_i e_j \Phi_i \otimes \Phi_j + \sum_{k} \binom{e_k}{2} \Phi_k^{\otimes 2}} } \\
            &= \sum_{i \neq j} e_i e_j (O_i O_j + O_{ij})+ \tfrac{1}{2} \sum_{k} (e_k^2 - e_k) (O_k^2 + O_{kk}) \\
            &= \sum_{i \neq j} e_i e_j O_i O_j + \sum_{i \neq j} e_i e_j O_{ij} + \sum_{k} e_k^2 O_k^2 - \sum_{k} e_k O_k^2 \\
            &= \sum_{i, j} e_i e_j O_i O_j + \sum_{i \neq j} e_i e_j O_{ij} - \sum_{k} e_k O_k^2 \\
            &= n^2 \Tr(O M_1^{\ve})^{2} + \sum_{i \neq j} e_i e_j O_{ij} - \sum_{k} e_k O_k^2
        \end{align*}
        We can drop the negative term, and then it follows that 
        \[
            \frac{n-1}{n} \Tr(O^{\otimes 2} M_2^{\ve}) - \Tr(O M_1^{\ve})^2 \leq \sum_{i \neq j} e_i e_j O_{ij},
        \]
        from which we get the result.
    \end{proof}

    Now let us separately bound the last term of \eqref{eq:measurevar_start}.
    \begin{lemma}
        \label{lem:meas_variance_part2}
        \begin{align*}
            \frac{1}{n^2} \E_{\ve \samp \truedist}\bracks*{\sum_{i \neq j} e_i e_j O_{ij}} &\leq \frac{2}{n} \frac{1 - \lambda_1}{2 \lambda_1 - 1} \norm{O}_{\infty}^{2} + \bigo{\Delta}
        \end{align*}
    \end{lemma}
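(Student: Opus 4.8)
The plan is to bound the matrix-element factor $O_{ij}$ by a constant, collapse the combinatorial sum $\sum_{i\neq j}e_ie_j$ into a one-body quantity, and then transfer the expectation from the (intractable) true distribution $\truedist$ to the product distribution $\fakedist$, where it can be evaluated in closed form via Proposition~\ref{prop:geometric_moments}. For the first step, observe that for $i\neq j$ the quantity $O_{ij}=\Tr(O\Phi_iO\Phi_j)=\abs{\langle\phi_i\vert O\vert\phi_j\rangle}^{2}$ is a nonnegative real number with $O_{ij}\le\inftynorm{O}^{2}$, so pointwise in $\ve$ we have $\sum_{i\neq j}e_ie_jO_{ij}\le\inftynorm{O}^{2}\sum_{i\neq j}e_ie_j$. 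For the second step, set $S:=\sum_{j\ge2}e_j=n-e_1$ and split the sum according to whether the index $1$ is present:
\[
\sum_{i\neq j}e_ie_j = 2e_1S + \Bigl(S^{2}-\sum_{j\ge2}e_j^{2}\Bigr) \le 2e_1S+S^{2} = 2nS-S^{2} \le 2nS,
\]
which holds for every type $\ve$ with nonnegative entries summing to $n$, in particular on the support of $\truedist$. Combining the two steps gives $\frac{1}{n^{2}}\E_{\ve\samp\truedist}\bigl[\sum_{i\neq j}e_ie_jO_{ij}\bigr]\le\frac{2\inftynorm{O}^{2}}{n}\E_{\ve\samp\truedist}[S]$.

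Next I would estimate $\E_{\truedist}[S]$ through $\fakedist$. Since $S=\sum_{j\ge2}e_j$, Theorem~\ref{thm:approx_first_order} yields
\[
\bigl|\E_{\truedist}[S]-\E_{\fakedist}[S]\bigr| \le \sum_{j\ge2}\bigl|\E_{\truedist}[e_j]-\E_{\fakedist}[e_j]\bigr| \le \frac{2\Delta}{1-\Delta}\Bigl(n+\tfrac{1}{2\lambda_1-1}\Bigr),
\]
and Proposition~\ref{prop:geometric_moments} gives $\E_{\fakedist}[S]=\sum_{j\ge2}\tfrac{\lambda_j}{\lambda_1-\lambda_j}$. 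The crucial point is that, because the spectrum sums to one, $\lambda_1+\lambda_j\le\sum_i\lambda_i=1$ and hence $\lambda_1-\lambda_j\ge2\lambda_1-1$ for every $j\ge2$; therefore $\E_{\fakedist}[S]\le\tfrac{1}{2\lambda_1-1}\sum_{j\ge2}\lambda_j=\tfrac{1-\lambda_1}{2\lambda_1-1}$.

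Putting the pieces together,
\[
\frac{1}{n^{2}}\E_{\truedist}\Bigl[\sum_{i\neq j}e_ie_jO_{ij}\Bigr] \le \frac{2\inftynorm{O}^{2}}{n}\cdot\frac{1-\lambda_1}{2\lambda_1-1} + \frac{2\inftynorm{O}^{2}}{n}\cdot\frac{2\Delta}{1-\Delta}\Bigl(n+\tfrac{1}{2\lambda_1-1}\Bigr),
\]
and using $\inftynorm{O}\le1$, $\Delta=\bigo{\eta^{2}}$ (so $1-\Delta=\Theta(1)$), and $\tfrac{1}{2\lambda_1-1}=\bigo{1}$ since $\eta=1-\lambda_1\le1/3$, the error term is $\bigo{\Delta}$, which is exactly the claimed bound. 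The computation is short, so there is no serious obstacle; the points that need care are the pointwise estimate $\sum_{i\neq j}e_ie_j\le2nS$ and the substitution $\lambda_1-\lambda_j\ge2\lambda_1-1$ (which quietly relies on $\sum_i\lambda_i=1$), together with verifying that both the $\truedist$-versus-$\fakedist$ discrepancy and the stray $1/(2\lambda_1-1)$ factor are absorbed into $\bigo{\Delta}$, the latter using that $\eta$ is bounded away from $1/2$.
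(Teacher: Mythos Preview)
Your proof is correct and follows essentially the same approach as the paper: bound $O_{ij}\le\inftynorm{O}^{2}$, reduce $\sum_{i\neq j}e_ie_j$ to $2nS$ with $S=\sum_{j\ge2}e_j$, evaluate $\E_{\fakedist}[S]\le\tfrac{1-\lambda_1}{2\lambda_1-1}$ via the geometric means and the inequality $\lambda_1-\lambda_j\ge2\lambda_1-1$, and absorb the $\truedist$--$\fakedist$ discrepancy from Theorem~\ref{thm:approx_first_order} into $\bigo{\Delta}$. The only cosmetic difference is that you derive $\sum_{i\neq j}e_ie_j\le2nS$ pointwise by splitting off the index $1$, whereas the paper reaches the same bound via $n^{2}-e_1^{2}$ and expanding $e_1^{2}=(n-S)^{2}$.
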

    \begin{proof}
        First, observe that 
        \begin{align*}
        O_{ij} &= \Tr(O \Phi_i O \Phi_j) = \langle \phi_i | O | \phi_j \rangle \langle \phi_j | O | \phi_i \rangle = \abs{\langle \phi_i | O | \phi_j \rangle}^2 \leq \norm{O}_{\infty}^{2}.
        \end{align*}
        In other words, we can bound each $O_{ij}$ by $\norm{O}_{\infty}^{2}$, and then it clearly suffices to bound $\sum_{i \neq j} e_i e_j$. To start, we can look at the sum as being over all $e_i e_j$ -- which totals $n^2$ on the basis that $e_1 + \cdots + e_d = n$ for all $\ve$ -- minus the ``diagonal'' terms, of which we claim only $e_1^2$ will be relevant.
        \begin{align*}
            \sum_{i \neq j} \E_{\ve \in \truedist}[e_i e_j] 
            &= \sum_{i, j} \E_{\ve \in \truedist}[e_i e_j] - \sum_{k} \E_{\ve \in \truedist}[e_k^2] \\
            &= \E_{\ve \in \truedist}\bracks*{\sum_{i, j} e_i e_j} - \sum_{k} \E_{\ve \in \truedist}[e_k^2] \\
            &\leq n^2 - \E_{\ve \samp \truedist}[e_1^2]
        \end{align*}
        Write $e_1$ as $n - e_2 - \cdots - e_d$, and we get 
        \begin{align*}
            \E_{\ve \samp \truedist}[e_1^2] 
            &= \E_{\ve \samp \truedist}\bracks*{\parens*{n - \sum_{i=2}^{d} e_i}^2} \\
            &= n^2 - 2n \sum_{i=2}^{d} \E_{\ve \samp \truedist}\bracks*{e_i} + \sum_{i=2}^{d} \sum_{j=2}^{d}\E_{\ve \samp \truedist}\bracks*{e_i e_j} \\
            &\geq n^2 - 2n \sum_{i=2}^{d} \E_{\ve \samp \truedist}[e_i]. 
        \end{align*}
        It follows that $\sum_{i \neq j} \E_{\ve \samp \truedist}[e_i e_j] \leq 2n \sum_{i=2}^{d} \E_{\ve \samp \truedist}[e_i]$. Under the approximate distribution, this is 
        \[
        \sum_{i=2}^{d} \E_{\ve \samp \fakedist}[e_i] = \sum_{i=2}^{d} \frac{\lambda_i}{\lambda_1 - \lambda_i} \leq \frac{1 - \lambda_1}{2\lambda_1 - 1},
        \]
        and Theorem~\ref{thm:approx_first_order} bounds the difference from the true distribution by at most $\bigo{\Delta}$. The result follows.
    \end{proof}
    \subsection{Conclusion}

    We finish the section by stating and proving a more formal version of Theorem~\ref{thm:robustness_of_joint_measurement}.
    \begin{theorem}
        For unknown state with deviation $\eta$, the standard joint measurement on $n$ copies succeeds with probability at least $(1-\eta)^{n-1}$. Conditioned on success, there is an estimator $\phihat$ such that 
        \begin{align*}
            \E[\Tr(O\phihat)] &= \Tr(O\Phi_1) + \frac{1}{n} \cdot \parens*{\frac{\Tr(O\rho) - \Tr(O\Phi_1)}{1 - \eta} + \bigo{\inftynorm{O}\eta^2}} \\
            \Var[\Tr(O\phihat)] &= \frac{\Tr(O^2)}{n^2} + \frac{6\inftynorm{O}^2}{n} + \frac{8 \inftynorm{O}^2}{n^2} + \bigo{\Delta}
        \end{align*}
    \end{theorem}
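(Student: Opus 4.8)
The estimator is $\phihat := \mhat = \tfrac{1}{n}\bigl[(d+n)\Psi - \eye\bigr]$, defined only on the event that the standard symmetric joint measurement $\mathcal{M}_n$ on $\rho^{\otimes n}$ succeeds, and the statement is an assembly of the pieces established in the preceding subsections. The success-probability claim is immediate from Theorem~\ref{thm:probability_of_success}, which gives $\Pr[\success] = Z \ge \lambda_1^{n-1} = (1-\eta)^{n-1}$. So the plan is to handle the mean and the variance in turn, conditioning throughout on success.

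For the mean, I would start from Corollary~\ref{cor:estimator_mean_and_variance_wrt_ms} (applied to $A = \rho^{\otimes n}$), which gives $\E_{\meas}[\phihat] = M_1$ and hence $\E[\Tr(O\phihat)] = \Tr(O M_1)$. Corollary~\ref{cor:estimator_mean_truedist} then says $M_1$ is trace-norm within $\bigo{\Delta}$ of $\Phi_1 + \tfrac{1}{n}\sum_{j=2}^{d}\tfrac{\lambda_j}{\lambda_1-\lambda_j}(\Phi_j-\Phi_1)$; pairing with $O$ and applying H\"older's inequality $|\Tr(OX)|\le\inftynorm{O}\tracenorm{X}$ turns this into an additive error $\bigo{\inftynorm{O}\Delta}$, which — since $\Delta\le(\tfrac{\eta}{1-\eta})^{n+1}\tfrac{1-\eta}{1-2\eta}=\bigo{\eta^2/n}$ for $\eta$ bounded away from $1/2$ — is of the advertised size $\bigo{\inftynorm{O}\eta^2/n}$. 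The one genuinely new step is to recognize $\tfrac{1}{n}\sum_{j\ge2}\tfrac{\lambda_j}{\lambda_1-\lambda_j}\Tr(O(\Phi_j-\Phi_1))$ as $\tfrac{1}{n}\cdot\tfrac{\Tr(O\rho)-\Tr(O\Phi_1)}{1-\eta}$ up to small error: writing $O_j:=\Tr(O\Phi_j)$, the identity $\Tr(O\rho)-\Tr(O\Phi_1)=\sum_{j\ge2}\lambda_j(O_j-O_1)$ (using $\sum_j\lambda_j=1$) gives $\tfrac{\Tr(O\rho)-\Tr(O\Phi_1)}{1-\eta}=\sum_{j\ge2}\tfrac{\lambda_j}{\lambda_1}(O_j-O_1)$, so the discrepancy is $\sum_{j\ge2}\bigl(\tfrac{\lambda_j}{\lambda_1-\lambda_j}-\tfrac{\lambda_j}{\lambda_1}\bigr)(O_j-O_1)=\sum_{j\ge2}\tfrac{\lambda_j^2}{\lambda_1(\lambda_1-\lambda_j)}(O_j-O_1)$; bounding $|O_j-O_1|\le 2\inftynorm{O}$, $\sum_{j\ge2}\lambda_j^2\le(\sum_{j\ge2}\lambda_j)^2=\eta^2$, $\lambda_1\ge\tfrac12$, and $\lambda_1-\lambda_j\ge\lambda_1-\lambda_2\ge1-2\eta=\Omega(1)$ makes this $\bigo{\inftynorm{O}\eta^2}$, and dividing by $n$ and merging with the earlier error yields the claimed mean.

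For the variance, $\Var[\Tr(O\phihat)]$ is the total variance of $\Tr(O\mhat)$ over the mixture randomness ($\ve\samp\truedist$) and the measurement randomness, so I would invoke the law of total variance (Theorem~\ref{thm:total_variance}) and bound the two pieces separately. The mixture term $\Var_{\ve\samp\truedist}[\E_{\meas}(\Tr(O\mhat)\mid\ve)]$ is bounded in Theorem~\ref{thm:mixture_variance} by $\tfrac{4\inftynorm{O}^2}{n^2(1-\Delta)^2}\cdot\tfrac{\lambda_1(1-\lambda_1)}{(2\lambda_1-1)^2}$; since $\tfrac{\lambda_1(1-\lambda_1)}{(2\lambda_1-1)^2}$ and $(1-\Delta)^{-2}$ are $\bigo{1}$ for $\eta$ bounded away from $1/2$, this accounts for the $\tfrac{8\inftynorm{O}^2}{n^2}$ term (the precise constant coming from whatever explicit threshold on $\eta$ one fixes). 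The measurement term $\E_{\ve\samp\truedist}[\Var_{\meas}(\Tr(O\mhat)\mid\ve)]$ is handled by Lemma~\ref{lem:meas_variance_part1} — giving the pointwise bound $\tfrac{\Tr(O^2)}{n^2}+\tfrac{2\inftynorm{O}^2}{n}+\tfrac{1}{n^2}\sum_{i\ne j}e_ie_jO_{ij}$ — followed by Lemma~\ref{lem:meas_variance_part2}, which bounds $\tfrac{1}{n^2}\E_{\ve\samp\truedist}[\sum_{i\ne j}e_ie_jO_{ij}]\le\tfrac{2}{n}\tfrac{1-\lambda_1}{2\lambda_1-1}\inftynorm{O}^2+\bigo{\Delta}$; using once more that $\tfrac{1-\lambda_1}{2\lambda_1-1}=\bigo{1}$, the two combine to $\tfrac{\Tr(O^2)}{n^2}+\tfrac{6\inftynorm{O}^2}{n}+\bigo{\Delta}$. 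Summing the mixture and measurement contributions gives the stated variance.

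Most of the genuine difficulty lies upstream — Chiribella's theorem, the $\sigma(\ve)$ decomposition with its combinatorial evaluation of $M_1^{\ve}$ and $M_2^{\ve}$, and in particular the geometric approximation $\truedist\approx\fakedist$ together with its consequences for the first and second moments (Theorem~\ref{thm:approx_first_order}, Lemma~\ref{lem:covariance_approx}). Against that backdrop this final step is essentially bookkeeping: the only new computation is the elementary identity relating $\sum_{j\ge2}\tfrac{\lambda_j}{\lambda_1-\lambda_j}(O_j-O_1)$ to $\tfrac{\Tr(O\rho)-\Tr(O\Phi_1)}{1-\eta}$, and the only place any care is needed is in keeping straight which error terms carry a $\tfrac1n$, an $\inftynorm{O}$, or an $\eta^2$, and in absorbing the bounded factors $\tfrac{\lambda_1(1-\lambda_1)}{(2\lambda_1-1)^2}$, $(1-\Delta)^{-1}$, and $\tfrac{1-\lambda_1}{2\lambda_1-1}$ into the explicit constants $6$ and $8$ using the standing assumption that $\eta$ is bounded below $1/2$.
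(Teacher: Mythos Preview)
Your proposal is correct and follows essentially the same route as the paper: invoke Theorem~\ref{thm:probability_of_success} for the success probability, combine Corollary~\ref{cor:estimator_mean_and_variance_wrt_ms}, Corollary~\ref{cor:estimator_mean_truedist}, and the expansion $\tfrac{\lambda_j}{\lambda_1-\lambda_j}=\tfrac{\lambda_j}{\lambda_1}+\bigo{\lambda_j^2}$ for the mean, and use the law of total variance together with Theorem~\ref{thm:mixture_variance} and Lemmas~\ref{lem:meas_variance_part1}--\ref{lem:meas_variance_part2} for the variance. Your observation that $\Delta=\bigo{\eta^2/n}$ (via $nq^{n-1}=\bigo{1}$ for $q=\tfrac{\eta}{1-\eta}$ bounded below~$1$) is in fact a bit sharper than what the paper writes out, and is exactly what is needed to place the approximation error inside the $\tfrac1n$ prefactor.
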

    \begin{proof}
        Take $\phihat = \mhat$.  Theorem~\ref{thm:estimator_mean_fakedist} gives the mean under the approximate distribution
        \[
            \E_{\ve \samp \fakedist}[M_1^{\ve}] =
            \Phi_1 + \frac{1}{n} \sum_{j=2}^{d} \tfrac{\lambda_j}{\lambda_1 - \lambda_j} \parens*{ \Phi_j - \Phi_1 }.
        \]
        First, expand $\frac{\lambda_j}{\lambda_1 - \lambda_j}$ as:
        \[
            \frac{\lambda_j}{\lambda_1 - \lambda_j} = \frac{\lambda_j}{\lambda_1}\parens*{1 + \bigo{\frac{\lambda_j}{\lambda_1}}} = \frac{\lambda_j}{\lambda_1} \parens*{1 + \bigo{\frac{\eta}{1 - \eta}}} = \frac{\lambda_j}{\lambda_1} \parens*{1 + \bigo{\eta}}.
        \]
        The first order term gives
        \[
        \sum_{j=2}^{d} \frac{\lambda_j}{\lambda_1} (\Phi_j - \Phi_1) = \sum_{j=1}^{d} \frac{\lambda_j}{\lambda_1} (\Phi_j - \Phi_1) = \rho - \Phi_1.
        \]
        The second order term is the same, but multiplied by $\bigo{\eta}$, and since 
        \[
        \Tr(O(\rho - \Phi_1)) \leq \inftynorm{O} \tracenorm{\rho - \Phi_1} = \bigo{\inftynorm{O} \eta},
        \]
        the expectation is indeed
        \[
            \E_{\ve \samp \fakedist}[\Tr(OM_1^{\ve})] =
            \Tr(O\Phi_1) + \frac{1}{n} \parens*{\frac{\Tr(O \rho) - \Tr(O \Phi_1)}{1 - \eta} + \bigo{\inftynorm{O} \eta^2}}.
        \]
        This is the expectation under $\fakedist$, but Corollary~\ref{cor:estimator_mean_truedist} proves the approximation changes the trace distance by at most $\bigo{\Delta} \subseteq \bigo{\eta^2}$, and thus affects the final expectation by $\bigo{\inftynorm{O} \eta^2}$.

        On the variance side,     Theorem~\ref{thm:total_variance} divides the variance into a sum of mixture randomness and measurement randomness. Theorem~\ref{thm:mixture_variance} bounds the mixture randomness:
        \[
            \Var_{\ve \in \truedist}[\E_{\textup{meas}}(\Tr(O\mhat) \mid \ve)] \leq \frac{4 \norm{O}^2}{n^2}\frac{\lambda_1(1-\lambda_1)}{(2 \lambda_1 - 1)^2} + \bigo{\Delta}.
        \]
        A combination of Lemma~\ref{lem:meas_variance_part1} and Lemma~\ref{lem:meas_variance_part2} bounds the measurement randomness:
        \[ 
    		\Var_{\meas}[\Tr(O \mhat) \mid \ve] \leq \frac{\Tr(O^2)}{n^2} + \frac{2\inftynorm{O}^2}{n} + \frac{2}{n} \frac{1 - \lambda_1}{2 \lambda_1 - 1} \norm{O}_{\infty}^{2} + \bigo{\Delta}.
        \]
        The total is 
        \begin{align*}
            \Var[\tr{O\rhohat}^2] 
            &\leq \frac{\Tr(O^2)}{n^2} + \frac{2\inftynorm{O}^2}{n} + \frac{2\inftynorm{O}^{2}}{n} \frac{1 - \lambda_1}{2 \lambda_1 - 1} + \frac{4 \inftynorm{O}^2}{n^2}\frac{\lambda_1(1-\lambda_1)}{(2 \lambda_1 - 1)^2} + \bigo{\Delta} \\
            &\leq \frac{\Tr(O^2)}{n^2} + \frac{6\inftynorm{O}^2}{n} + \frac{8 \inftynorm{O}^2}{n^2} + \bigo{\Delta}.
        \end{align*}
    \end{proof}

\section{Chiribella}
\label{app:chiribella}

    In this section we adapt Chiribella's theorem \cite{chiribella2011quantum} to get expressions for $\E[\Psi]$ and $\Var[\Psi]$. The subject of this theorem is the map $\MP_{n \to k}$ defined below. 
    \begin{defn}
        \label{defn:measure_and_prepare}
        For integers $n, k \geq 0$, let $\MP_{n \to k} \colon \link{n} \to \link{k}$ be such that
        \[
        \MP_{n \to k}(A) = \frac{\symdim{d}{n}}{\symdim{d}{n+k}} \Tr_{[n]} \parens{\sym^{(n+k)}(A \otimes \eye^{\otimes k})} = \frac{\symdim{d}{n}}{\symdim{d}{n+k}} \Tr_{n+k \to k}(A \otimes I^{\otimes k})
        \]
        for all $A \in \link{n}$. We remind the reader that $\Tr_{[n]}$ is the partial trace over qudits $[n] = \{ 1, \ldots, n \}$.
    \end{defn}
    This map is an example of a ``measure and prepare map'' because it is equivalent to measuring the state with some POVM, and then preparing a state dependent on the outcome. In particular, Proposition~\ref{prop:moments_and_measp} below shows that this map measures with $\mathcal{M}_n$, and prepares $\ketbra{\psi}{\psi}^{\otimes k}$ if the outcome is $\psi$, or $0$ if the measurement fails.
    \begin{prop}
        \label{prop:moments_and_measp}
        Let $\Psi$ be the outcome of measuring an $n$-qudit state $A$ with $\mathcal{M}_s$ ($\ketbra{\psi}{\psi}$ or $0$ for failure). Then the $k^{\textrm{th}}$ moment of $\Psi$ is $\E[\Psi^{\otimes k}] = \MP_{n \to k}(A)$
        for all $k \geq 0$. 
    \end{prop}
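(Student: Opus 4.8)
The plan is to compute $\E[\Psi^{\otimes k}]$ directly from the definition of the POVM $\mathcal{M}_n$ and recognize the result as $\MP_{n\to k}(A)$. First I would observe that the failure branch contributes nothing: when $\mathcal{M}_n$ outputs $\fail$ we set $\Psi = 0$, so $\Psi^{\otimes k} = 0$ for every $k \geq 1$ (the $k=0$ case degenerates to the probability of success on both sides, under the convention that failure contributes zero to every moment). Hence only the continuum of successful outcomes $\psi$ matters, and each occurs with probability density $\Tr(F_\psi A) = \symdim{d}{n}\,\Tr(\density{\psi}^{\otimes n} A)\,\dd\psi$. This gives
\[
\E[\Psi^{\otimes k}] = \symdim{d}{n}\int_\psi \density{\psi}^{\otimes k}\;\Tr\!\parens*{\density{\psi}^{\otimes n} A}\,\dd\psi .
\]

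The key rewriting is to turn the integrand into a single partial trace over $n+k$ qudits. The scalar $\Tr(\density{\psi}^{\otimes n} A)$ multiplying the operator $\density{\psi}^{\otimes k}$ (living on $k$ fresh registers) is tautologically $\Tr_{[n]}$ of $\parens*{\density{\psi}^{\otimes n} A}\otimes \density{\psi}^{\otimes k}$; since $\density{\psi}^{\otimes k} = \density{\psi}^{\otimes k}\,\eye^{\otimes k}$, this equals $\Tr_{[n]}\!\parens*{\density{\psi}^{\otimes(n+k)}\,(A\otimes\eye^{\otimes k})}$. Substituting, and exchanging the Haar integral with the linear (and bounded) map $\Tr_{[n]}\!\parens*{(\,\cdot\,)(A\otimes\eye^{\otimes k})}$, the only state-independent piece left is $\int_\psi \density{\psi}^{\otimes(n+k)}\,\dd\psi$, which by \Cref{lem:symmetric_haar} equals $\sym^{(n+k)}/\symdim{d}{n+k}$. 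Therefore
\[
\E[\Psi^{\otimes k}] = \frac{\symdim{d}{n}}{\symdim{d}{n+k}}\,\Tr_{[n]}\!\parens*{\sym^{(n+k)}(A\otimes\eye^{\otimes k})} = \MP_{n\to k}(A),
\]
which is exactly \Cref{defn:measure_and_prepare}.

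There is no substantial obstacle here; the argument is essentially tensor-factor bookkeeping plus one interchange of an integral with a partial trace. The only points that deserve a sentence of care are: (i) checking that the register ordering obtained from ``prepare on fresh qudits $n+1,\dots,n+k$'' matches the one in \Cref{defn:measure_and_prepare}; and (ii) justifying the integral/partial-trace interchange, which, if one wishes to sidestep the continuous POVM entirely, can instead be handled by replacing $\mathcal{M}_n$ with a finite $t$-design version for any $t \ge n+k$, since such a design reproduces $\int_\psi \density{\psi}^{\otimes(n+k)}\,\dd\psi$ exactly.
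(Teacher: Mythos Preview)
Your proposal is correct and follows essentially the same route as the paper's proof: expand $\E[\Psi^{\otimes k}]$ over POVM outcomes, note the failure branch vanishes, rewrite the scalar-times-operator integrand as a partial trace over $n+k$ registers, pull the Haar integral inside to obtain $\sym^{(n+k)}/\symdim{d}{n+k}$ via \Cref{lem:symmetric_haar}, and recognize the result as \Cref{defn:measure_and_prepare}. Your additional remarks on register ordering and the $t$-design justification for the integral/trace interchange go slightly beyond what the paper spells out but are in the same spirit.
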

    \begin{proof}
        The expectation of $\Psi^{\otimes k}$ is a straightforward calculation using definition of $\mathcal{M}_n$ (Definition~\ref{defn:measurement}) and the Haar integral characterization of $\sym^{(n)}$ (Lemma~\ref{lem:symmetric_haar}) as needed.
        \begin{align*}
            \E[\Psi^{\otimes k}] &= 0^{\otimes k} \Tr(F_{\fail} A) + \int \ketbra{\psi}{\psi}^{\otimes k} \Tr(F_{\psi} A) \\
            &= \int \ketbra{\psi}{\psi}^{\otimes k} \symdim{d}{n} \Tr(\ketbra{\psi}{\psi}^{\otimes n} A) \mathrm{d} \psi & \text{definition of $F_{\psi}$} \\
            &= \symdim{d}{n} \Tr_{[n]} \parens*{ \parens*{\int\ketbra{\psi}{\psi}^{\otimes (n+k)}  \mathrm{d} \psi}(A \otimes \eye^{\otimes k})} & \text{linearity of trace, integral} \\
            &= \frac{\symdim{d}{n}}{\symdim{d}{n+k}} \Tr_{[n]} \parens*{\sym^{(n+k)} (A \otimes \eye^{\otimes k})} \\
            &= \MP_{n \to k}(A). & \text{definition of $\MP$}
        \end{align*}
    \end{proof}

    In addition to $\MP_{n \to k}$, Chiribella's theorem uses a ``cloning map'', defined below. 
    \begin{defn}[Optimal Cloning Map \cite{werner1998optimal}]
        \label{defn:clone_map}
        Let us define the superoperator $\clone_{n \to n+k} \colon \link{n} \to \link{n+k}$ on input $A \in \link{n}$ as 
        \[
            \clone_{n \to n+k}(A) = \frac{\symdim{d}{n}}{\symdim{d}{n+k}} \sym^{(n+k)} \parens*{A \otimes \eye^{\otimes k}} \sym^{(n+k)}.
        \]
    \end{defn}
    This map extends an $n$-qubit state to $n+k$ qudits. The no-cloning theorem prohibits cloning quantum states, but Werner \cite{werner1998optimal} showed that it is the optimal with respect to the fidelity of $\clone_{n \to n+k}(\sigma^{\otimes n})$ and $\sigma^{\otimes n+k}$. 

    This brings us to the key result of this section, due to \cite{chiribella2011quantum}.
    \begin{theorem}[Chiribella's theorem]
        \label{thm:chiribella}
        For $A \in \Pi_{\mathrm{sym}}^{(n)} ((\mathbb C^d)^{\otimes n})$ (in the symmetric subspace)
        \[
            \MP_{n \to k}(A) = \binom{d+k+n-1}{k}^{-1} \sum_{s=0}^{k} \binom{n}{s} \binom{d+k-1}{k-s} \clone_{s \to k} \parens*{\Tr_{[n-s]}(A)}.
        \]
    \end{theorem}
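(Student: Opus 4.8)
The statement here is essentially the content of \cite{chiribella2011quantum}, so one option is simply to port Chiribella's (differently normalized) formulation; instead I would give a short self-contained verification. Both sides of the claimed identity are linear in $A$ and $U(d)$-covariant (conjugating the $n$ input qudits by $U^{\otimes n}$ and the $k$ output qudits by $U^{\otimes k}$), since each side is built only from $\sym^{(\cdot)}$, partial traces, and tensoring with $\eye$. The symmetric operators on $(\CCD)^{\otimes n}$ are spanned by the rank-one powers $\ketbra{\phi}{\phi}^{\otimes n}$ over unit vectors $\phi$: a linear functional killing all of them corresponds to an operator $Y$ with $\bra{\phi^{\otimes n}}Y\ket{\phi^{\otimes n}}=0$ for all $\phi$, i.e.\ a polynomial in $\phi,\bar\phi$ that is bihomogeneous of bidegree $(n,n)$ and vanishes on the unit sphere, hence (being divisible by $\norm{\phi}^2-1$ yet of bounded degree) is identically zero, forcing $Y=0$. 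So it suffices to verify the identity for $A=\ketbra{\phi}{\phi}^{\otimes n}$, and by covariance for the single choice $A=\ketbra{1}{1}^{\otimes n}$, where $\ket{1},\dots,\ket{d}$ is a fixed orthonormal basis of $\CCD$.

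For that $A$ one has $\Tr_{[n-s]}(A)=\ketbra{1}{1}^{\otimes s}$, so after unwinding \cref{defn:measure_and_prepare} and \cref{defn:clone_map} and simplifying the binomial prefactors --- using $\binom{d+k-1}{k-s}\symdim{d}{s}/\symdim{d}{k}=\binom{k}{s}$, $\binom{d+k+n-1}{k}=\symdim{(d+n)}{k}$, and $\symdim{d}{n+k}/(\symdim{d}{n}\symdim{(d+n)}{k})=\binom{n+k}{k}^{-1}$ --- the claim reduces to the operator identity
\[
\Tr_{[n]}\parens*{\sym^{(n+k)}\parens*{\ketbra{1}{1}^{\otimes n}\otimes\eye^{\otimes k}}}
=\binom{n+k}{k}^{-1}\sum_{s=0}^{k}\binom{n}{s}\binom{k}{s}\,\sym^{(k)}\parens*{\ketbra{1}{1}^{\otimes s}\otimes\eye^{\otimes(k-s)}}\sym^{(k)}
\]
on $(\CCD)^{\otimes k}$. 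Both sides are supported on the symmetric subspace and commute with the diagonal torus acting on the $k$ qudits; since each weight space of $\sym^{(k)}((\CCD)^{\otimes k})$ is one-dimensional, spanned by a Dicke state $\ket{D_{\ve}}$ with $e_1+\cdots+e_d=k$ (the normalized uniform superposition of $k$-qudit basis states having $e_j$ factors equal to $\ket{j}$), both sides are diagonal in the Dicke basis and it is enough to compare their eigenvalues $\bra{D_{\ve}}(\cdot)\ket{D_{\ve}}$.

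For the left side, $\bra{D_{\ve}}\Tr_{[n]}(\sym^{(n+k)}(\ketbra{1}{1}^{\otimes n}\otimes\eye^{\otimes k}))\ket{D_{\ve}}=\norm{\sym^{(n+k)}\parens*{\ket{1}^{\otimes n}\otimes\ket{D_{\ve}}}}^2$, which is the squared overlap of $\ket{1}^{\otimes n}\otimes\ket{D_{\ve}}$ with the unique Dicke state of weight $(e_1+n,e_2,\dots,e_d)$ on $n+k$ qudits; a count of arrangements gives eigenvalue $\binom{k}{\ve}/\binom{n+k}{e_1+n,\,e_2,\dots,e_d}=\tfrac{(e_1+n)!\,k!}{e_1!\,(n+k)!}$. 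For the right side, $\bra{D_{\ve}}\sym^{(k)}(\ketbra{1}{1}^{\otimes s}\otimes\eye^{\otimes(k-s)})\sym^{(k)}\ket{D_{\ve}}=\bra{D_{\ve}}(\ketbra{1}{1}^{\otimes s}\otimes\eye^{\otimes(k-s)})\ket{D_{\ve}}=\binom{e_1}{s}/\binom{k}{s}$ by the same arrangement count, so the eigenvalue is $\binom{n+k}{k}^{-1}\sum_{s}\binom{n}{s}\binom{e_1}{s}=\binom{n+k}{k}^{-1}\binom{n+e_1}{e_1}$ by Vandermonde, which again equals $\tfrac{(e_1+n)!\,k!}{e_1!\,(n+k)!}$. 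The eigenvalues agree for every $\ve$, which proves the identity.

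I expect the only real obstacle to be bookkeeping: keeping the $\symdim{d}{\cdot}$ normalizations in $\MP$ and $\clone$ straight together with the binomial weights, and spotting that the $s$-sum collapses by Vandermonde. An alternative to reducing to $\ketbra{1}{1}^{\otimes n}$ and computing in the Dicke basis is to invoke Schur's lemma directly --- both sides are $U(d)$-covariant maps between $\sym^{(n)}((\CCD)^{\otimes n})\otimes\overline{\sym^{(n)}((\CCD)^{\otimes n})}$ and its $k$-analogue, hence act as scalars on each common irreducible summand, and those scalars can be read off from highest-weight vectors --- but this merely trades the arrangement counts for the equally fiddly task of writing down the relevant irrep decomposition and weight vectors explicitly.
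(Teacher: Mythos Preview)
The paper does not prove this theorem at all --- it simply attributes it to \cite{chiribella2011quantum} and moves on to the corollary for exchangeable states. Your proposal, by contrast, supplies a complete self-contained argument, and it is correct.

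Your route --- linearity plus $U(d)$-covariance to reduce to $A=\ketbra{1}{1}^{\otimes n}$, then diagonalizing both sides in the Dicke basis of $\sym^{(k)}((\CCD)^{\otimes k})$ and matching eigenvalues via the Vandermonde identity --- is clean and entirely elementary; it avoids any appeal to Chiribella's original machinery. The binomial bookkeeping you flag is indeed the only place one can slip, and your simplifications $\binom{d+k-1}{k-s}\symdim{d}{s}/\symdim{d}{k}=\binom{k}{s}$ and $\symdim{d}{n+k}/(\symdim{d}{n}\symdim{(d+n)}{k})=\binom{n+k}{k}^{-1}$ check out. The eigenvalue computations on both sides are right: the left side is $\|\sym^{(n+k)}(\ket{1}^{\otimes n}\otimes\ket{D_{\ve}})\|^{2}=\binom{k}{\ve}/\binom{n+k}{e_1+n,e_2,\dots,e_d}$, and the right side collapses to $\binom{n+k}{k}^{-1}\binom{n+e_1}{e_1}$ by Vandermonde, matching.

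One small wording point: in the spanning argument, the parenthetical ``being divisible by $\norm{\phi}^2-1$ yet of bounded degree'' is a slightly roundabout justification. The cleaner reason a bihomogeneous polynomial of bidegree $(n,n)$ vanishing on the unit sphere must vanish identically is direct scaling: $P(\phi,\bar\phi)=\norm{\phi}^{2n}P(\phi/\norm{\phi},\overline{\phi/\norm{\phi}})=0$ for all nonzero $\phi$. This does not affect correctness, only presentation.
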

    Before we get into the relevance of this theorem, let us quickly upgrade it from symmetric states to exchangeable states.
    \begin{cor}
        \label{cor:chiribella_prime}
        For exchangeable $A \in \link{n}$, 
        \[
            \MP_{n \to k}(A) = \binom{d+k+n-1}{k}^{-1} \sum_{s=0}^{k} \binom{n}{s} \binom{d+k-1}{k-s} \clone_{s \to k} \parens*{\Tr_{n \to s}(A)},
        \]
        where $\Tr_{n \to s}(A) = \Tr_{[n-s]}(\sym^{(n)} A)$ is taken from Section~\ref{sec:preliminaries}.
    \end{cor}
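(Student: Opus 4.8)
The plan is to reduce Corollary~\ref{cor:chiribella_prime} to Theorem~\ref{thm:chiribella} by replacing the exchangeable state $A$ with its symmetrization $\sym^{(n)} A$, which genuinely lives in the symmetric subspace. First I would observe that since $A$ is exchangeable it commutes with every $W_{\pi}$, hence with $\sym^{(n)}$, so $\sym^{(n)} A = \sym^{(n)} A \sym^{(n)}$ is a symmetric operator (in particular it is supported on the symmetric subspace, and if $A$ is Hermitian then $(\sym^{(n)}A)^{\dagger} = A\sym^{(n)} = \sym^{(n)}A$ so it remains Hermitian). Thus Theorem~\ref{thm:chiribella} applies to $\sym^{(n)} A$.

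Next I would show $\MP_{n \to k}(A) = \MP_{n \to k}(\sym^{(n)} A)$. Unpacking Definition~\ref{defn:measure_and_prepare},
\[
\MP_{n \to k}(B) = \frac{\symdim{d}{n}}{\symdim{d}{n+k}} \Tr_{[n]}\left(\sym^{(n+k)}(B \otimes \eye^{\otimes k})\right),
\]
so it suffices to prove $\sym^{(n+k)}(A \otimes \eye^{\otimes k}) = \sym^{(n+k)}(\sym^{(n)} A \otimes \eye^{\otimes k})$. This follows from the projector identity $\sym^{(n+k)}\left(\sym^{(n)} \otimes \eye^{\otimes k}\right) = \sym^{(n+k)}$: any vector fixed by all of $\symm_{n+k}$ is in particular fixed by the subgroup $\symm_n$ permuting the first $n$ qudits, so the range of $\sym^{(n+k)}$ sits inside that of $\sym^{(n)} \otimes \eye^{\otimes k}$. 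Combined with the mixed-product law $(\sym^{(n)} A) \otimes \eye^{\otimes k} = \left(\sym^{(n)} \otimes \eye^{\otimes k}\right)(A \otimes \eye^{\otimes k})$, this gives the claimed equality (note this step does not even use exchangeability).

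Finally, apply Theorem~\ref{thm:chiribella} to the symmetric operator $\sym^{(n)} A$:
\[
\MP_{n \to k}(\sym^{(n)} A) = \binom{d+k+n-1}{k}^{-1} \sum_{s=0}^{k} \binom{n}{s}\binom{d+k-1}{k-s}\, \clone_{s \to k}\left(\Tr_{[n-s]}(\sym^{(n)} A)\right),
\]
and recognize $\Tr_{[n-s]}(\sym^{(n)} A) = \Tr_{n \to s}(A)$ directly from the definition of $\Tr_{n \to s}$ in Section~\ref{sec:preliminaries}. Chaining the two equalities proves the corollary. The only step requiring genuine care is verifying that $\sym^{(n)} A$ really satisfies the hypothesis of Theorem~\ref{thm:chiribella}, i.e.\ that it is a bona fide symmetric operator rather than merely supported on the symmetric subspace on one side; this is precisely where exchangeability of $A$ enters, and everything else is routine bookkeeping with partial traces and symmetric-subspace projectors.
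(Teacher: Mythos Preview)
Your proposal is correct and follows essentially the same route as the paper: set $A' = \sym^{(n)} A$, verify it is symmetric (the paper checks $A' W_{\pi} = A'$ directly, you use commutativity to write $A' = \sym^{(n)} A \sym^{(n)}$, which is equivalent), apply Theorem~\ref{thm:chiribella} to $A'$, and then absorb $\sym^{(n)} \otimes \eye^{\otimes k}$ into $\sym^{(n+k)}$ to conclude $\MP_{n\to k}(A') = \MP_{n\to k}(A)$. Your observation that the last absorption step does not require exchangeability is correct and not made explicit in the paper.
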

    \begin{proof}
        Set $A' = \sym^{(n)} A$ and observe that $A'$ is symmetric, since $\sym^{(n)}$ clearly absorbs permutations on the left, and 
        \[
        A' W_{\pi} = \sym^{(n)} A W_{\pi} = \sym^{(n)} W_{\pi}^{\dag} A = \sym^{(n)} A = A'
        \]
        on the left, using the exchangeability of $A$. It follows that we can plug $A'$ into Theorem~\ref{thm:chiribella}, to get that the right hand side of the claim equals $\MP_{n \to k}(A')$. Then we expand with Definition~\ref{defn:measure_and_prepare} and see that the extra $\sym^{(n)}$ can be absorbed into $\sym^{(n+k)}$:
        \[
        \MP_{n \to k}(A') = \frac{\symdim{d}{n}}{\symdim{d}{n+k}} \Tr_{[n]} \parens{\sym^{(n+k)}(\sym^{(n)}A \otimes \eye^{\otimes k})} = \frac{\symdim{d}{n}}{\symdim{d}{n+k}} \Tr_{[n]} \parens{\sym^{(n+k)}(A \otimes \eye^{\otimes k})} = \MP_{n \to k}(A).
        \]
        The result follows.
    \end{proof}
    The relevance of Chiribella's theorem (or the corollary) is that it expresses $\MP_{n \to k}(A)$, and thus $\E[\Psi^{\otimes k}]$ (by Proposition~\ref{prop:moments_and_measp}), in terms of a handful of partial traces. It distills the state $A$ down to $\leq k$ qudits by partial trace, then blows it back up to $k$ qudits with the cloning map. In other words, we can compute $\E[\Psi]$ and $\E[\Psi^{\otimes 2}]$ entirely from $1$-qudit and $2$-qudit summaries (i.e., $\Tr_{[n-1]}(\sym^{(n)} A)$ and $\Tr_{[n-2]}(\sym^{(n)} A)$) of the full $n$-qudit state $A$. 

    We refactor Chiribella one more time to (i) explicitly link the calculation to $\E[\Psi^{\otimes k}]$, (ii) simplify the binomial coefficients as much as possible, and (iii) expand $\clone_{k \to s}$ with its definition so that it is not needed in the main text. This is the version of Chiribella's theorem we quote in Section~\ref{sec:chiribella}.

    \chiribella*
    \begin{proof}
        Start from the definition of $\MP$, plug in the definition of $\clone$, and expand the binomials to simplify.
        \begin{align*}
            \MP_{n \to k}(A) &= \binom{d+k+n-1}{k}^{-1} \sum_{s=0}^{k} \binom{n}{s} \binom{d+k-1}{k-s} \clone_{s \to k}(\Tr_{n \to s}(A)) \\
            &= \binom{d+k+n-1}{k}^{-1} \sym^{(k)} \parens*{\sum_{s=0}^{k} \binom{n}{s} \binom{d+k-1}{k-s} \frac{\symdim{d}{s}}{\symdim{d}{k}}  \parens*{\Tr_{n \to s}(A) \otimes \eye^{\otimes k-s}} }\sym^{(k)} \\
            &= \frac{1}{\symdim{(d+n)}{k}} \sym^{(k)} \parens*{\sum_{s=0}^{k} \binom{n}{s} \frac{(d+k-1)!}{(k-s)!(d+s-1)!} \frac{\frac{(d+s-1)!}{(d-1)!s!}}{\frac{(d+k-1)!}{(d-1)!k!}}  \parens*{\Tr_{n \to s}(A) \otimes \eye^{\otimes k-s}} }\sym^{(k)} \\
            &= \frac{1}{\symdim{(d+n)}{k}} \sym^{(k)} \parens*{\sum_{s=0}^{k} \binom{n}{s} \binom{k}{s} \parens*{\Tr_{n \to s}(A) \otimes \eye^{\otimes k-s}} }\sym^{(k)}.
        \end{align*}
        Recall that $\E[\Psi^{\otimes k}] = \MP_{n \to k}(A)$ for exchangeable $A$ by Theorem~\ref{cor:chiribella_prime}. Since 
        \begin{align*}
        \E[\Psi^{\otimes k}] &= \E[\Psi^{\otimes k} \mid \success] \Pr[\success] + \E[\Psi^{\otimes k} \mid \neg \success] \Pr[\neg \success] \\
        &= \E[\Psi^{\otimes k} \mid \success] \Tr(\sym^{(n)}A),
        \end{align*}
        we can divide through by $\Tr_{n \to 0}(A) := \Tr(\sym^{(n)} A)$ to get the expectation of $\Psi^{\otimes k}$ conditioned on success. 
    \end{proof}

    \section{Approximating the distribution of \texorpdfstring{$\ve$}{e}}
    \label{app:distributions}

    This appendix is dedicated to results about the approximate distribution $\fakedist$, and how it relates to $\truedist$ and the expectation values we wish to compute .

    First, we recall the mean and variance of the geometric random variables composing $\ve \samp \fakedist$.
    \begin{prop}
        \label{prop:geometric_moments}
        Let $\ve \samp \fakedist$. The mean and variance of $e_i$ for $2 \leq i \leq d$ is 
        \begin{align*}
            \E[e_i] &= \frac{\lambda_i}{\lambda_1 - \lambda_i}, & 
            \Var[e_i] &= \frac{\lambda_i \lambda_1}{(\lambda_1 - \lambda_i)^2},
        \end{align*}
    \end{prop}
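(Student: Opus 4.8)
The plan is to recognize that for each fixed $i$ with $2 \le i \le d$, the random variable $e_i$ under $\fakedist$ is, by construction, a geometric random variable on $\mathbb N = \{0,1,2,\dotsc\}$ with probability mass function $f_i(e) = q_i^{\,e}(1 - q_i)$ where $q_i := \lambda_i/\lambda_1$. Note that $q_i \in [0,1)$: since $\rho$ has principal eigenvalue $\lambda_1 > 1/2$ while every other eigenvalue satisfies $\lambda_i \le 1 - \lambda_1 < 1/2 < \lambda_1$, we have $\lambda_1 > \lambda_i$ strictly, so $q_i < 1$ and $f_i$ is a genuine normalizable distribution (and if $\lambda_i = 0$, then $e_i = 0$ almost surely and both claimed formulas evaluate to $0$, so that degenerate case is consistent).

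First I would record the three elementary power-series identities valid for $|q| < 1$:
\[
\sum_{e=0}^{\infty} q^{e} = \frac{1}{1-q}, \qquad \sum_{e=0}^{\infty} e\, q^{e} = \frac{q}{(1-q)^2}, \qquad \sum_{e=0}^{\infty} e^{2} q^{e} = \frac{q(1+q)}{(1-q)^3},
\]
the latter two obtained by applying $q\frac{\dd}{\dd q}$ once and twice, respectively, to the geometric series. Multiplying the second identity by $(1-q_i)$ gives
\[
\E_{\fakedist}[e_i] = (1-q_i)\sum_{e=0}^{\infty} e\, q_i^{\,e} = \frac{q_i}{1 - q_i},
\]
and substituting $q_i = \lambda_i/\lambda_1$, so that $1 - q_i = (\lambda_1 - \lambda_i)/\lambda_1$, yields $\E_{\fakedist}[e_i] = \lambda_i/(\lambda_1 - \lambda_i)$ as claimed.

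For the variance, multiply the third identity by $(1-q_i)$ to get $\E_{\fakedist}[e_i^2] = q_i(1+q_i)/(1-q_i)^2$, and then
\[
\Var_{\fakedist}[e_i] = \E_{\fakedist}[e_i^2] - \E_{\fakedist}[e_i]^2 = \frac{q_i(1+q_i)}{(1-q_i)^2} - \frac{q_i^2}{(1-q_i)^2} = \frac{q_i}{(1-q_i)^2}.
\]
Substituting $q_i = \lambda_i/\lambda_1$ and $1 - q_i = (\lambda_1 - \lambda_i)/\lambda_1$ gives $\Var_{\fakedist}[e_i] = (\lambda_i/\lambda_1)\big/\big((\lambda_1-\lambda_i)/\lambda_1\big)^2 = \lambda_i\lambda_1/(\lambda_1 - \lambda_i)^2$, completing the proof. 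There is no real obstacle here — the only point requiring a word of care is confirming $q_i < 1$ so that the geometric series converge and $f_i$ is a probability distribution, which follows immediately from $\lambda_1 > 1/2 \ge \lambda_i$; everything else is a routine computation with geometric series.
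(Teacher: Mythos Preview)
Your proof is correct. The paper does not actually give a proof of this proposition: it simply states it as a ``recall'' of the mean and variance of the geometric random variables defining $\fakedist$, treating these as well-known formulas. Your argument supplies exactly the standard derivation behind that recall, so there is nothing to compare.
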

    
    Next, we want to bound the difference between $\truedist$ and $\fakedist$. To start, we bound the probability $e_i = n - j$ for arbitrary $j$. 
    \begin{lemma}
        \label{lem:prob_e_equals}
        When $\ve \sim \fakedist$, we have $
        \Pr[e_2 + \cdots + e_d = j = n - e_1] \leq \parens*{\frac{1-\lambda_1}{\lambda_1}}^j
        $.
    \end{lemma}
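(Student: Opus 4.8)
The plan is to compute $\Pr_{\ve\samp\fakedist}[e_2+\cdots+e_d = j]$ directly from the definition of $\fakedist$ and then bound the resulting sum crudely. Recall that under $\fakedist$ the coordinates $e_2,\dots,e_d$ are independent, with $e_i$ geometric: $\Pr[e_i = k] = (\lambda_i/\lambda_1)^k\,(1-\lambda_i/\lambda_1)$ for $k\ge 0$ (well-defined since $\lambda_i < 1/2 < \lambda_1$ for all $i\ge 2$). So I would begin by writing
\[
\Pr_{\ve\samp\fakedist}[e_2+\cdots+e_d = j] = \sum_{\substack{e_2,\dots,e_d\ge 0\\ e_2+\cdots+e_d = j}} \prod_{i=2}^{d} \parens*{\frac{\lambda_i}{\lambda_1}}^{e_i}\parens*{1-\frac{\lambda_i}{\lambda_1}}.
\]

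Next I would throw away the factors $1-\lambda_i/\lambda_1 \le 1$ and pull out the common power $\lambda_1^{-j}$ (using $e_2+\cdots+e_d = j$), obtaining the upper bound $\lambda_1^{-j}\sum \prod_{i=2}^d \lambda_i^{e_i}$, where the sum is still over nonnegative compositions of $j$ into $d-1$ parts. The key step is then to insert multinomial coefficients: since $\binom{j}{e_2,\dots,e_d}\ge 1$, we have $\sum \prod_{i=2}^d \lambda_i^{e_i} \le \sum \binom{j}{e_2,\dots,e_d}\prod_{i=2}^d \lambda_i^{e_i} = (\lambda_2+\cdots+\lambda_d)^j = (1-\lambda_1)^j$ by the multinomial theorem. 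Combining, $\Pr_{\ve\samp\fakedist}[e_2+\cdots+e_d = j] \le \lambda_1^{-j}(1-\lambda_1)^j = \parens*{\tfrac{1-\lambda_1}{\lambda_1}}^{j}$, which is exactly the claim (noting $e_2+\cdots+e_d = j$ is the same event as $e_1 = n-j$ by the definition of $e_1$ under $\fakedist$).

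There is no real obstacle here: the only things to be careful about are (i) that the geometric distributions are well-defined, which follows from $\eta < 1/2$, and (ii) that dropping the normalizing factors $(1-\lambda_i/\lambda_1)$ and enlarging by multinomial coefficients both only \emph{increase} the quantity, so the inequality goes the right way. This bound is deliberately loose (it is essentially a union/tail bound summed over all ways to distribute the ``excess'' mass $j$ among the non-principal eigenvalues), which is all that is needed downstream in Theorem~\ref{thm:prob_e_negative}.
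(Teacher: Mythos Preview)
Your proposal is correct and follows essentially the same approach as the paper: write the probability as a sum over compositions of products of geometric masses, drop the normalizing factors $(1-\lambda_i/\lambda_1)\le 1$, insert multinomial coefficients $\ge 1$, and apply the multinomial theorem to collapse the sum to $((1-\lambda_1)/\lambda_1)^{j}$. The only cosmetic difference is that you factor out $\lambda_1^{-j}$ before applying the multinomial theorem, whereas the paper keeps the ratios $\lambda_i/\lambda_1$ together throughout.
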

    \begin{proof}
        From the definitions, we have
        \[
            \Pr[e_2 + \cdots + e_d = j] 
            = \sum_{e_2 + \cdots + e_d = j} \prod_{i=2}^{d} f_i(e_i)
            = \sum_{e_2 + \cdots + e_d = j} \prod_{i=2}^{d} (\tfrac{\lambda_i}{\lambda_1})^{e_i}(1 - \tfrac{\lambda_i}{\lambda_1}).
        \]
        The factors $(1 - \tfrac{\lambda_i}{\lambda_1})$ are all $\leq 1$ and can be neglected. Then we introduce multinomial coefficients $\binom{j}{\ve} \geq 1$ into the sum, letting us apply the multinomial theorem.
        \begin{align*}
            \Pr[e_2 + \cdots + e_d = j]  &\leq \sum_{e_2 + \cdots + e_d = j} \binom{j}{\ve} \prod_{i=2}^{d} (\tfrac{\lambda_i}{\lambda_1})^{e_i} = \parens*{\sum_{i=2}^{d} \frac{\lambda_i}{\lambda_1}}^{j} = \parens*{\frac{1 - \lambda_1}{\lambda_1}}^{j}.
        \end{align*}
    \end{proof}
    This leads to a bound on the probability $e_1$ is negative, which then also bounds the distance between the two distributions. 
    \distributiontv*
    \begin{proof}
        Recall that $e_1 = n - e_2 - \cdots - e_d$, so 
        \begin{align*}
            \Pr_{\ve \sim \fakedist}[e_1 < 0] = \Pr_{\ve \sim \fakedist}[e_2 + \cdots + e_d > n] = \sum_{j=n+1}^{\infty} \Pr_{\ve \sim \fakedist}[e_2 + \cdots + e_d = j].
        \end{align*}
        Use Lemma~\ref{lem:prob_e_equals} and sum a geometric series to get 
        \begin{align*}
            \Pr_{\ve \sim \fakedist}[e_1 < 0] 
            &\leq \sum_{j=n+1}^{\infty} \parens*{\frac{1-\lambda_1}{\lambda_1}}^{j}= \parens*{\frac{1 - \lambda_1}{\lambda_1}}^{n+1} \frac{\lambda_1}{2\lambda_1 - 1} = \Delta.
        \end{align*}
        We know that where $\truedist$ has support, the mass is proportional to $\fakedist$, but necessarily larger because $\fakedist$ has mass on $e_1 < 0$ where $\truedist$ does not. Moreover, $e_1 < 0$ is the only area where $\fakedist$ has support and $\truedist$ does not. Thus, $\Pr[e_1 < 0]$ is exactly the mass which must be moved to transform $\fakedist$ into $\truedist$, and hence $\norm{\truedist - \fakedist}_{TV} = \Delta$.
    \end{proof}
    We discuss in the main text that $\Delta = \bigo{\eta^2}$, and can be \emph{very} small for practical values of $\lambda_1$ and $n$.

    Recall that we need $\E_{\ve \samp \truedist}[e_i]$ and $\E_{\ve \samp \truedist}[e_i e_j]$ to evaluate $M_1$ and $M_2$.  The total variation distance alone is insufficient to bound the difference in expectation, so we must separately justify how much the approximation can distort expectations. 
    \distributionmean*
    \begin{proof}
        For some small probability $p := \Pr_{\fakedist}[e_1 < 0] \leq \Delta$, we have 
        \begin{align*}
            \E_{\fakedist}[\ve] &= \Pr[e_1 \geq 0] \E_{\fakedist}[\ve \mid e_1 \geq 0] + \Pr[e_1 < 0] \E_{\fakedist}[\ve \mid e_1 < 0] \\
            &= (1-p)\E_{\truedist}[\ve] + p\E_{\fakedist}[\ve \mid e_1 < 0].
        \end{align*}
        Rearranging, $\E_{\truedist}[\ve] = \frac{1}{1-p} \parens*{\E_{\fakedist}[\ve] - p \E_{\fakedist}[\ve \mid e_1 < 0]}$, and thus  
        \[
        \norm{\E_{\truedist}[\ve] - \E_{\fakedist}[\ve]}_{1} = \frac{p}{1-p} \norm{\E_{\fakedist}[\ve] - \E_{\fakedist}[\ve \mid e_1 < 0]}_{1}.
        \]
        Since the sum of $\ve$ is always $n$ under any of these distributions, the expectations of $\ve$ also sum to $n$. The sum of the coordinate-wise differences is $n - n = 0$, and thus the absolute difference on $e_1$ is, by triangle inequality, bounded by the absolute differences on for the other coordinates. That is,
        \[
        \norm{\E_{\fakedist}[\ve] - \E_{\fakedist}[\ve \mid e_1 < 0]}_{1} \leq 2\norm{\E_{\fakedist}[\ve_{-1}] - \E_{\fakedist}[\ve_{-1} \mid e_1 < 0]}_{1} \leq 2\norm{\E_{\fakedist}[\ve_{-1}]}_1 + 2\norm{\E_{\fakedist}[\ve_{-1} \mid e_1 < 0]}_1,
        \]
        where $\ve_{-1} = (e_2, \ldots, e_d)$.

        Since $e_2, \ldots, e_d \geq 0$, these norms are both just the sum of the entries, i.e.,
        \[
        \norm{\E_{\fakedist}[\ve_{-1}]}_1 = \sum_{i=2}^{d} \E_{\fakedist}[e_i].
        \]
        For these geometric random variables, we have $\E[e_i] = \frac{\lambda_i}{\lambda_1 - \lambda_i} \leq \frac{\lambda_i}{2\lambda_1 - 1}$, for all $2 \leq i \leq d$. This gives the following bound on the norm.  
        \[
        \norm{\E_{\fakedist}[\ve_{-1}]}_1 = \sum_{i=2}^{d} \E[e_i] \leq \frac{\sum_{i=2}^{d} \lambda_i}{2 \lambda_1 - 1} = \frac{1 - \lambda_1}{2 \lambda_1 - 1}.
        \]
        On the other hand, 
        \begin{align*}
            \Pr[e_1 < 0] \E_{\ve \sim \fakedist}[e_2 + \cdots + e_d \mid e_1 < 0]
            &= \Pr[e_2 + \cdots + e_d > n] \E_{\ve \sim \fakedist}[e_2 + \cdots + e_d \mid e_2 + \cdots + e_d > n] \\
            &= \sum_{j=n+1}^{\infty} j \cdot \Pr[e_2 + \cdots + e_d = j] \\
            &\leq \sum_{j=n+1}^{\infty} j \parens*{\frac{1- \lambda_1}{\lambda_1}}^j \\
            &= \parens*{\frac{1-\lambda_1}{\lambda_1}}^{n+1} \frac{\lambda_1}{(2 \lambda_1 - 1)^2} (n(2 \lambda_1 - 1) + \lambda_1) \\
            &= \Delta \parens*{n + \frac{\lambda_1}{2 \lambda_1 - 1}}
        \end{align*}
        \begin{align*}
            \norm{\E_{\truedist}[\ve] - \E_{\fakedist}[\ve]}_{1} 
            &=\frac{2p}{1-p} \parens*{\norm{\E_{\fakedist}[\ve_{-1}]}_1 + \norm{\E_{\fakedist}[\ve_{-1} \mid e_1 < 0]}_1} \\
            &\leq \frac{2\Delta}{1-\Delta} \parens*{\frac{1 - \lambda_1}{2 \lambda_1 - 1} + n + \frac{\lambda_1}{2 \lambda_1 - 1}} \\
            &= \frac{2 \Delta}{1 - \Delta} \parens*{n + \frac{1}{2 \lambda_1 - 1}}
        \end{align*}
    \end{proof}

    Finally, we bound the variance by going through the covariance matrices. 
    \covariance*
    \begin{proof}
        An elegant way to write the covariance matrix is  
        \[
        \Sigma_{\fakedist} = \E_{\substack{\ve \samp \fakedist \\ \ve' \samp \fakedist}}[(\ve - \ve')(\ve - \ve')^{\top}].
        \]
        Each $(\ve - \ve')(\ve - \ve')^{\top}$ is positive semidefinite, and therefore so is $\Sigma_{\fakedist}$. Now split the expectation based on whether $e_1$ and $e_1'$ are nonnegative.
        \begin{align*}
            \Sigma_{\fakedist} &= \E_{\substack{\ve \samp \fakedist \\ \ve' \samp \fakedist}}[(\ve - \ve')(\ve - \ve')^{\top} \mid e_1 \geq 0 \wedge e_1' \geq 0] \Pr_{\ve \samp \fakedist}[e_1 \geq 0]^2 + \phantom{.} \\
            &\phantom{=.} \E_{\substack{\ve \samp \fakedist \\ \ve' \samp \fakedist}}[(\ve - \ve')(\ve - \ve')^{\top} \mid e_1 < 0 \vee e_1' < 0] \parens*{1 - \Pr_{\ve \samp \fakedist}[e_1 \geq 0]^2}
        \end{align*}
        Recall that $\truedist$ is $\fakedist$ conditioned on $e_1 \geq 0$ (Lemma~\ref{lem:true_is_conditional_of_fake}), so that first conditional expectation is actually $\Sigma_{\truedist}$.
        \[
            \E_{\substack{\ve \samp \fakedist \\ \ve' \samp \fakedist}}[(\ve - \ve')(\ve - \ve')^{\top} \mid e_1 \geq 0 \wedge e_1' \geq 0] = \E_{\substack{\ve \samp \truedist \\ \ve' \samp \truedist}}[(\ve - \ve')(\ve - \ve')^{\top}] = \Sigma_{\truedist} 
        \]
        It follows that 
        \[
        \Sigma_{\fakedist} \succeq \Sigma_{\truedist} \Pr_{\ve \samp \fakedist}[e_1 \geq 0]^2 \succeq \Sigma_{\truedist} (1 - \Delta)^{2}.
        \]    
    \end{proof}

\section{Optimal Parameter Choice}
\label{sec:optimal_parameter_choice}
 
 We now prove Theorem~\ref{thm:optimal_parameter_choice}.
 \optimalchoice*
 For convenience, we duplicate~\cref{table:choice} below.

\begin{table}[h]
    \centering
    \scalebox{1}{
    \begin{tabular}{c|c|c|c}
    $\eta$ & $\bigo{1/s^{*}}$ & $\Omega(1/s^{*}) \cap \bigo{\sqrt{\epsilon}}$ & $\Omega(\sqrt{\epsilon})$ \\
    \hline
    $k$ & $1$ & $1$ & $\bigo{\frac{\initeta}{\sqrt{\epsilon}}}$ \\
    $n$ & $\bigo{s^*}$ & $\bigo{\frac{1}{\initeta}}$ & $\bigo{\frac{1}{\sqrt{\epsilon}}}$ \\
    $b$ & $1$ & $\bigo{\frac{B\initeta^2+\initeta}{\epsilon^2}}$ & $\bigo{\frac{B}{\epsilon}+\frac{1}{\epsilon^{3/2}}}$ \\
    \hline
    $s$ & $\bigo{s^*}$ & $\bigo{\frac{B \eta + 1}{\epsilon^2}}$ & $\bigo{\frac{B\eta}{\epsilon^2} + \frac{\eta}{\epsilon^{5/2}}}$
    \end{tabular}
    }
    \caption{Choice of parameters $k$, $n$, $b$ for the three regimes of $\eta$. Recall $s^* := \frac{\sqrt{B}}{\epsilon} + \frac{1}{\epsilon^2}$.}
    \label{table:choice_copy}
\end{table}

\begin{proof}

    The optimal choice of parameters $k$, $n$, $b$ is essentially a mathematical program. The objective is $s$, the expected number of samples of $\initrho$. Figure~\ref{fig:combined_subprocedures} makes it clear that $k$, $n$, $b$ multiply, and since that only counts the \emph{successful} measurements, we furthermore multiply by a factor $1/\Pr[\success]$ in expectation. We recall that $\Pr[\success]$ falls off exponentially with $n$ (cf.\  \cref{eq:meas_fail_prob}). It is therefore advisable to keep $\Pr[\success]$ above a constant, which is achieved by setting $n = \bigo{1/\eta'} = \bigo{k / \eta}$. This gives us both our objective value, $s = knb/\Pr[\success] = \Theta(knb)$, and first constraint, Equation \eqref{eq:success_constraint}.

    The remaining constraints come from correctness, i.e., the requirement to output an estimator with error at most $\epsilon$. The mean squared error of our estimate is split between the bias squared and the variance. From \cref{eq:purified_deviation} and \cref{eq:bias_norm_operator}, we deduce that the bias $\beta$ is at most $\bigo{\frac{\initeta}{kn}}$. From \cref{thm:robustness_of_joint_measurement}, the total variance of our estimator satisfies:
    $
        \mathrm{Var}(\Tr(O \hat\phi(b)))=\mathcal O(\frac{B}{n^2 b}+\frac{1}{nb}).
    $
    To ensure the mean squared error is at most $\bigo{\epsilon^2}$, the bias should be $\bigo{\epsilon}$, and each term of the variance at most $\bigo{\epsilon^2}$ (or standard deviation at most $\bigo{\epsilon}$). The variance terms translate directly to constraints \eqref{eq:variance_constraint_1}, and \eqref{eq:variance_constraint_2}, and the bias condition gives \eqref{eq:bias_constraint}. Last, we require $k, n, b$ to be at least $1$. 

\begin{align}
    \text{minimize } &&& knb \notag \\
    \text{subject to} && \frac{\eta}{nk} &= \bigo{\epsilon} \quad \quad &\text{(bias condition)} \label{eq:bias_constraint} \\
    && \frac{B}{n^2 b} &= \bigo{\epsilon^2} \quad \quad &\text{(variance condition 1)} \label{eq:variance_constraint_1} \\
    && \frac{1}{nb}&= \bigo{\epsilon^2} \quad \quad &\text{(variance condition 2)} \label{eq:variance_constraint_2} \\
    && n &= \bigo{k/\initeta} \quad \quad &\text{(success condition)} \label{eq:success_constraint} \\
    && k, n, b &\geq 1. & \text{(positivity condition)} \label{eq:positivity_constraint}
\end{align}
It remains to optimize this program for arbitrary $B$, $\epsilon$, $\eta$. 

Table~\ref{table:choice} gives optimal values for $k$, $n$, and $b$ in each of the three regimes. It is a calculation to see that these solutions are feasible and achieve the claimed sample complexities. On the other hand, optimality can be certified by the following products of constraints. 
\begin{align}
    \sqrt{\eqref{eq:variance_constraint_1}} \times (1 \leq \sqrt{b}) \times (1 \leq k) &\implies & knb &= \Omega(\sqrt{B}/\epsilon) \label{eq:dual1} \\
    \eqref{eq:variance_constraint_2} \times (1 \leq k) &\implies & knb &= \Omega(1/\epsilon^2) \label{eq:dual2} \\
    \eqref{eq:variance_constraint_1} \times \eqref{eq:success_constraint} &\implies & knb &= \Omega(B \eta / \epsilon^2) \label{eq:dual3} \\
    \sqrt{\eqref{eq:bias_constraint}} \times \eqref{eq:variance_constraint_2} \times \sqrt{\eqref{eq:success_constraint}} &\implies & knb &= \Omega(\eta / \epsilon^{5/2}) \label{eq:dual4}
\end{align}
For example, if we multiply $1/nb = \mathcal O(\epsilon^2)$ and $1 \leq k$ we get $1/nb = \mathcal O(k\epsilon^2)$, which we can rearrange to $knb = \Omega(1/\epsilon^2)$, \eqref{eq:dual2}. 
It is clear from these equations that the complexities of the three regimes---$\mathcal O(\sqrt{B}/\epsilon + 1/\epsilon^2)$, $\mathcal O((B\eta + 1)/\epsilon^2)$, and $\mathcal O(B \eta/\epsilon^2 + \eta/\epsilon^{5/2})$---arise from \eqref{eq:dual1} + \eqref{eq:dual2}, \eqref{eq:dual2} + \eqref{eq:dual3}, and \eqref{eq:dual3} + \eqref{eq:dual4} respectively. Likewise, thresholds between regimes are given by the crossover points of these inequalities, i.e., $\sqrt{B}/\epsilon \approx B\eta/\epsilon^2$ implies $\eta \approx \epsilon/\sqrt{B} = \Theta(1/s^{*})$ and $1/\epsilon^2 \approx \eta/\epsilon^{5/2}$ gives $\eta \approx \sqrt{\epsilon}$. 
\end{proof}

\section{Eigenvalue estimation}
\label{app:estimateeta}

In this appendix, we prove the following theorem, as a component of estimating $\eta$ from samples of $\rho$. 
\estimateeta*

The algorithm is simple: it performs/samples Bernoulli trials until it has seen $r$ failures total. Then it outputs $r/T$ as an estimate for $p$, where $T$ is the total number of \emph{trials}.
\begin{lemma}
    The number of trials $T$ is bounded above and below with high probability, i.e., 
    \[
    \Pr[T = \Theta(r/p)] \geq 1 - \exp(-\Theta(r)).
    \]
    Concretely, we have, e.g., 
    \begin{align*}
        \Pr[\ln(2) \tfrac{r}{p} \leq T \leq \ln(4) \tfrac{r}{p}] &\geq 1 - 2 \cdot (\tfrac{1}{2} e \ln 2)^{r} \geq 1 - 2 \cdot 0.9421^{r}.
    \end{align*}
\end{lemma}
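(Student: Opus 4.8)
The algorithm performs independent Bernoulli($p$) trials (where "success" of the trial means, confusingly, a measurement \emph{failure}, but here we just call the event with probability $p$ a "hit") until it accumulates $r$ hits, and outputs $r/T$, where $T$ is the total number of trials. I want to show $T = \Theta(r/p)$ except with probability $\exp(-\Theta(r))$, and in particular that $\Pr[\ln(2)\, r/p \le T \le \ln(4)\, r/p] \ge 1 - 2(\tfrac12 e \ln 2)^r$.

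**Plan of proof.** The key observation is that $T$ is a negative binomial random variable: $T = \sum_{j=1}^r G_j$ where the $G_j$ are i.i.d. geometric random variables on $\{1,2,\dots\}$ with parameter $p$, so $\E[G_j] = 1/p$ and $\E[T] = r/p$. The cleanest route to sharp tail bounds is a Chernoff argument directly on $T$ using its moment generating function, which factorizes: $\E[e^{tT}] = \big(\E[e^{tG_1}]\big)^r = \big(\tfrac{p e^t}{1 - (1-p)e^t}\big)^r$, valid for $e^t < 1/(1-p)$. For the upper tail $\Pr[T \ge a r/p]$ I apply Markov to $e^{tT}$ and optimize over $t > 0$; for the lower tail $\Pr[T \le a r/p]$ I apply Markov to $e^{-tT}$ with $t > 0$ (no convergence restriction there). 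In both cases the exponent is $r$ times a function of $a$ (and of $p$, but one checks the $p$-dependence only helps — the worst case is $p \to 0$, or one just bounds $1-p \le 1$ and $-\ln(1-p) \ge p$ crudely where needed), so the failure probability is of the form $c(a)^r$ with $c(a) < 1$ for $a$ bounded away from $1$ on the appropriate side. Plugging in $a = \ln 4$ for the upper bound and $a = \ln 2$ for the lower bound should reproduce the stated constant $\tfrac12 e \ln 2 \approx 0.9421$ (this is exactly the shape $\frac{e^{1-a}}{a}\cdot(\text{something})$ one expects from a Poisson-type tail, consistent with the $p\to 0$ limit where $T\cdot p \to$ a Gamma$(r,1)$ variable).

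**Writing it.** I would structure it as: (i) identify $T$ as a sum of $r$ i.i.d. geometrics and record $\E[T] = r/p$; (ii) state the MGF $M(t) = \E[e^{tG_1}]$ and bound it — for the upper tail it is convenient to use $1 - (1-p)e^t \ge 1 - e^t + p$ and then, with $t$ chosen as $t = p/(\text{const})$-scale, Taylor/convexity estimates; alternatively, and more slickly, pass to the Poisson limit bound $M(t) \le \exp\!\big(\tfrac{e^{pt'}-1}{1}\big)$-type inequality using $1-p \le e^{-p}$. Honestly the cleanest is: for $s > 0$, $\E[e^{-s T p}] = \big(\tfrac{p}{1-(1-p)e^{-sp}}\big)^r \le \big(\tfrac{1}{s}\big)^r \cdot(\text{correction})$ — this is the route that directly yields Gamma-like bounds. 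Then Markov: $\Pr[T \le a r/p] = \Pr[e^{-sTp} \ge e^{-s a r}] \le e^{sar}\E[e^{-sTp}]$, optimize $s$. I would then just do the two one-line optimizations and substitute the numbers.

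**Main obstacle.** The only real subtlety is handling the $p$-dependence cleanly so that the final bound is genuinely uniform in $p \in (0,1)$ (or at least in $p$ small, which is the regime we care about). The geometric MGF has the factor $1-(1-p)e^t$ in the denominator, and naive bounds can introduce spurious $p$-dependence; the fix is to use the inequalities $1-p \le e^{-p}$ and $e^x - 1 \ge x$ to reduce everything to the Poisson/Gamma tail, whose Chernoff exponents are exactly the clean expressions $r(a - 1 - \ln a)$ (lower) and likewise for the upper tail, giving constants $c(a) = a e^{1-a}$ — and one checks $c(\ln 2) = e^{1-\ln 2}\ln 2 = \tfrac{e}{2}\ln 2$, matching the claim. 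A secondary, purely bookkeeping point is that $T$ is integer-valued and the geometric is supported on $\{1,2,\dots\}$ not $\{0,1,\dots\}$, so $T \ge r$ always — this makes the lower bound slightly easier but one should not accidentally double-count. I expect the whole proof to be under a page.
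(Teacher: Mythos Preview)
Your approach is correct and yields the same constant $c(a)=a e^{1-a}$ (hence $c(\ln 2)=c(\ln 4)=\tfrac12 e\ln 2$), but it is genuinely different from the paper's. The paper does \emph{not} work with the negative-binomial MGF at all: it invokes the duality $\Pr[T\le n]=\Pr[\mathrm{Bin}(n,p)\ge r]$ and then applies the standard multiplicative Chernoff bound to the binomial, choosing $n=r/((1\pm x)p)$ so that the threshold equals $r$. This sidesteps your ``main obstacle'' entirely, since the binomial Chernoff bound is already uniform in $p$ and the substitution makes the exponent come out as a pure function of $x$ times $r$. Your route---write $T=\sum_{j=1}^r G_j$, bound $\E[e^{\pm sG}]=\dfrac{p}{e^{\mp s}-1+p}\le \dfrac{p}{p\mp s}$ via $e^{y}\ge 1+y$, and optimize---is equally valid and, once you use that single inequality, also uniform in $p$; it has the advantage of making the Gamma$(r,1)$ limit explicit. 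The paper's approach is a bit shorter because it outsources the optimization to the off-the-shelf binomial bound, whereas you redo the Cram\'er optimization by hand; on the other hand your argument does not require spotting the binomial/negative-binomial duality. Either proof fits in a few lines.
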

\begin{proof}
    First, we argue that for any $n \geq r$, if $Y \sim \mathrm{Bin}(n, p)$ then $\Pr[T \leq n] = \Pr[Y \geq r]$. To see this, imagine an infinite sequence of Bernoulli trials. The binomial distribution counts the number of failures, $Y$, in the first $n$ trials, whereas the algorithm scans down the list to the $r$th failure at some position $T$. It is clear that the $r$th failure happens at or before position $n$ ($T \leq n$) if and only if there are $r$ or more failures among the first $n$ trials ($Y \geq r$). 

    Since the mean of the binomial is $\E[Y] = np$, a multiplicative Chernoff bound gives 
    \[
    \Pr[Y \geq (1+x)np] \leq \parens*{ \frac{e^{x}}{(1+x)^{1+x}} }^{np}.
    \]
    Set $n = \frac{r}{(1+x)p}$ so that $r = (1+x)np$, and this becomes 
    \[
    \Pr \bracks*{T \leq \frac{r}{(1+x)p}} = \Pr[Y \geq r] \leq \parens*{\frac{e^{x/(1+x)}}{1+x}}^{r}.
    \]
    For instance, at $x^{*} = \frac{1}{\ln 2} - 1$, we have $\Pr[T \leq \tfrac{r}{p} \ln 2]\leq (\tfrac{1}{2} e \ln 2)^{r} \approx 0.9421^{r}$. 

    On the other side, $\Pr[T > n] = \Pr[Y < r] \leq \Pr[Y \leq r]$. The other side of the Chernoff bound gives 
    \[
    \Pr[Y \leq (1-x)np] \leq \parens*{ \frac{e^{-x}}{(1-x)^{(1-x)}}}^{np}.
    \]
    Setting $r = (1-x)np$, we translate this to 
    \[
    \Pr \bracks*{T \geq \frac{r}{(1-x)p}} \leq \parens*{\frac{e^{-x/(1-x)}}{1-x}}^{r}.
    \]
    At $x^{*} = 1 - \frac{1}{\ln 4}$ we get $\Pr[T \geq \tfrac{r}{p} \ln 4] \leq (\tfrac{1}{2} e \ln 2)^{r} \approx 0.9421^{r}$. Union bound over the two tail bounds finishes the result. 
\end{proof}

\end{document}